\documentclass[11pt]{article}


\def\showkeys{0}


\usepackage{fullpage}
\usepackage{xspace,xcolor,enumerate}
\usepackage{amsmath,amsfonts,amssymb,amsthm}
\usepackage{color,graphicx}

\ifnum\showkeys=1
\usepackage[color]{showkeys}
\fi

\usepackage[colorlinks,linkcolor=blue,filecolor=blue,citecolor=blue,urlcolor=blue]{hyperref}


%

\newtheorem{theorem}{Theorem}[section]

\newtheorem{definition}[theorem]{Definition}
\newtheorem{lemma}[theorem]{Lemma}
\newtheorem{remark}[theorem]{Remark}

\newtheorem{corollary}[theorem]{Corollary}
\newtheorem{claim}[theorem]{Claim}
\newtheorem{fact}[theorem]{Fact}

%
%
%
%
%
%
%


\def\to{\rightarrow}
\def\eps{\varepsilon}
\def\del{\delta}
\def\epsilon{\varepsilon}

\def\phi{\varphi}
\def\cal{\mathcal}

\renewcommand{\bar}{\overline}





\newcommand{\cshape}{\mathsf{CShape}}
\newcommand{\hs}{\mathsf{HS}}
\newcommand{\crect}{\mathsf{CRect}}
\newcommand{\cthr}{\mathsf{CThr}}



\newcommand{\etal}{et al.\xspace}

\newcommand{\R}{{\mathbb R}}

\newcommand{\E}{{\mathbb E}}


%


%
\newcommand{\norm}[1]{\ensuremath{\lVert #1 \rVert}}
%


\newcommand{\one}{{\mathbf{1}}}




\newcommand{\Psymb}{\mathbb{P}}

\DeclareMathOperator*{\ProbOp}{\Psymb}
\renewcommand{\Pr}{\ProbOp}



\newfont{\inhead}{eufm10 scaled\magstep1}

\newcommand{\calS}{{\cal S}}
\newcommand{\calH}{{\cal H}}
\newcommand{\calR}{{\cal R}}
\newcommand{\calW}{{\cal W}}

\newcommand{\poly}{{\mathrm{poly}}}



\newcommand{\real}{\mathbb{R}}

\newcommand{\naturals}{\mathbb{N}}









\newcommand{\prob}[2]{\ProbOp_{#1}\left[#2\right]}

\newcommand{\setcond}[2]{\left\{#1\: \middle|\: #2\right\}}

\newcommand{\mc}[1]{\mathcal{#1}}

\newcommand{\avg}[2]{\mathop{\textbf{E}}_{#1}[#2]}



\newcommand{\zeroone}{\{0,1\}}





\renewcommand{\L}{\mbox{\rm L}}

\newcommand{\RL}{\mbox{\rm RL}}

\newcommand{\ones}{\mathbf{1}}
\newcommand{\parr}[1]{#1^{\parallel}}
\newcommand{\prp}[1]{#1^{\perp}}

 \usepackage{comment}

\title{Optimal Hitting Sets for Combinatorial Shapes}

\author{Aditya Bhaskara\thanks{Department of Computer Science, Princeton
University. Email: {\sf bhaskara@cs.princeton.edu}} \and Devendra
Desai\thanks{Department of Computer Science, Rutgers University. Email: {\sf
devdesai@cs.rutgers.edu}} \and Srikanth Srinivasan\thanks{Department of Mathematics, Indian Institute of Technology Bombay. Email: {\sf srikanth@math.iitb.ac.in}. This work was done when the author was a postdoctoral researcher at DIMACS, Rutgers University.}}


\begin{document}

\maketitle \begin{abstract}

We consider the problem of constructing explicit Hitting sets for Combinatorial Shapes, a class of statistical tests first studied by Gopalan, Meka, Reingold, and Zuckerman (STOC 2011). These generalize many well-studied classes of tests, including symmetric functions and combinatorial rectangles. Generalizing results of Linial, Luby, Saks, and Zuckerman (Combinatorica 1997) and Rabani and Shpilka (SICOMP 2010), we construct hitting sets for Combinatorial Shapes of size polynomial in the alphabet, dimension, and the inverse of the error parameter. This is optimal up to polynomial factors. The best previous hitting sets came from the Pseudorandom Generator construction of Gopalan et al., and in particular had size that was quasipolynomial in the inverse of the error parameter.

Our construction builds on natural variants of the constructions of Linial et al. and Rabani and Shpilka. In the process, we construct fractional perfect hash families and hitting sets for combinatorial rectangles with stronger guarantees. These might be of independent interest.
 
\end{abstract}

\section{Introduction} Randomness is a tool of great importance in Computer Science and combinatorics. The probabilistic method is highly effective both in the design of simple and efficient 
algorithms and in demonstrating the existence of combinatorial objects with interesting 
properties. But the use of randomness also comes with some 
disadvantages. In the setting of algorithms, introducing randomness adds to the 
number of resource requirements of the algorithm, since truly random bits are hard to come by. For combinatorial constructions, `explicit' versions of these objects often turn out to 
have more structure, which yields advantages beyond the mere fact of their existence (e.g., we know of explicit error-correcting codes that can be efficiently encoded and decoded, but we 
don't know if random codes can \cite{BlumKalaiWasserman}). Thus, it makes sense to ask exactly how powerful probabilistic algorithms and arguments are. 
Can they be `derandomized', i.e., replaced by determinstic algorithms/arguments of comparable 
efficiency?\footnote{A `deterministic argument' for the existence of a combinatorial object is one 
that yields an efficient deterministic algorithm for its construction.} There is a long line of 
research that has addressed this question in various forms  \cite{NW,IW97,Nisan92,SU06,MoserTardos}.

An important line of research into this subject is the question of derandomizing randomized 
space-bounded algorithms. In 1979, Aleliunas  et al.
\cite{AleliunasKaLiLoRa79} demonstrated the power of these algorithms by showing that undirected 
$s$-$t$ connectivity can be solved by randomized algorithms in just $O(\log n)$ space. In order to show that any randomized logspace computation could be derandomized within the same space requirements, researchers considered the problem of 
constructing an efficient \emph{$\varepsilon$-Pseudorandom Generator} ($\varepsilon$-PRG) that 
would stretch a short random seed to a long pseudorandom string that would be indistinguishable 
(up to error $\varepsilon$) to any logspace algorithm.\footnote{As a function of its random bits, the logspace algorithm is \emph{read-once}: it scans its 
input once from left to right.} In particular, an $\varepsilon$-PRG (for small constant 
$\varepsilon>0$) with seedlength $O(\log n)$ would allow efficient deterministic simulations of 
logspace randomized algorithms since a deterministic algorithm could run over all possible random 
seeds.

A breakthrough work of Nisan \cite{Nisan92} took a massive step towards this goal by giving an 
explicit $\varepsilon$-PRG for $\varepsilon=1/\poly(n)$ that stretches $O(\log^2 n)$ truly random 
bits to an $n$-bit pseudorandom string for logspace computations. In the two decades since, 
however, Nisan's result has not been improved upon at this level of generality. However, many
interesting subcases of this class of functions have been considered as avenues for 
progress \cite{NZ,LLSZ,Lu02,LRTV,MekaZu09}.

The class of functions we consider are the very natural class of 
\emph{Combinatorial 
Shapes}. A boolean function $ f $ is a combinatorial shape if it takes $ n $ inputs $ x_1,\ldots,x_n \in 
[m] $ and computes a symmetric function of boolean bits $y_i$ that depend on 
the 
membership of the inputs $ x_i $ in sets $ A_i \subseteq [m] $ associated with $ f $. (A function 
of boolean bits $y_1,\ldots,y_n$ is 
symmetric if its output depends only on their sum.) In particular, ANDs, ORs, 
Modular sums and Majorities of subsets of the input alphabet all belong to this class. Until recently, 
Nisan's result gave the best known seedlength for any explicit $\varepsilon$-PRG for this class, 
even when 
$\varepsilon$ was a constant. In 2011, however, Gopalan et al. \cite{GMRZ} 
gave an explicit $\varepsilon$-PRG for this class with seedlength $O(\log(mn) + \log^2(1/\varepsilon))$. This seedlength is optimal as a function of $m$ and $n$ but suboptimal as a function of $\varepsilon$, and for the very interesting case of 
$\varepsilon = 1/n^{O(1)}$, this result does not improve upon Nisan's work.

Is the setting of small error important? We think the answer is yes, for many reasons. The first deals with the class of combinatorial shapes: 
many tests from this class accept a random input only with inverse polynomial probability 
(e.g., the alphabet is $\{0,1\}$ and the test accepts iff the Hamming weight of its $n$ input bits is 
$n/2$); for such tests, the guarantee that a $1/n^{o(1)}$-PRG 
gives us is unsatisfactory. Secondly, while designing PRGs for some class of statistical tests 
with (say) constant error, it often is the case that one needs PRGs with much smaller error --- e.g., one natural way of constructing almost-$\log 
n$ wise independent spaces uses PRGs that fool parity tests \cite{NN93} to within inverse 
polynomial error. Thirdly, the reason to improve the dependence on the error is simply because we 
know that such PRGs exist. Indeed, a randomly chosen function that expands $O(\log n)$ bits to an 
$n$-bit string is, w.h.p., an $\varepsilon$-PRG for $\varepsilon=1/\poly(n)$. Derandomizing this 
existence proof is a basic challenge in understanding how to eliminate randomness from 
existence proofs. The tools we gain in solving this problem might help us in solving others of a similar flavor.

\paragraph{\textbf{Our result.}} While we are unable to obtain optimal PRGs for 
the class of combinatorial shapes, we make progress on a well-studied weakening of this problem: the construction of 
an \emph{$\varepsilon$-Hitting Set} ($\varepsilon$-HS). An $\varepsilon$-HS for the class 
of combinatorial shapes has the property that any combinatorial shape that accepts at least an 
$\varepsilon$ fraction 
of truly random strings accepts at least one of the strings in the hitting set. This is clearly a 
weaker guarantee than what an $\varepsilon$-PRG gives us. Nevertheless, in many cases, this 
problem turns out to be very interesting and non-trivial: in particular, an $\varepsilon$-HS for 
the class of space-bounded computations would solve the long-standing open question of whether 
$\RL=\L$. Our 
main result is an explicit $\varepsilon$-HS of size $\poly(mn/\varepsilon)$ for the class of combinatorial shapes, which is \emph{optimal}, to within polynomial factors, 
for all errors.
\begin{theorem}[Main Result (informal)]
For any $m,n\in\naturals, \varepsilon > 0$, there is an explicit $\varepsilon$-HS for the class of 
combinatorial shapes of size $\poly(mn/\eps)$.
\end{theorem}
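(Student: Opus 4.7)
The plan is to build on the techniques of Linial--Luby--Saks--Zuckerman (LLSZ) for combinatorial rectangles and Rabani--Shpilka (RS) for symmetric functions, fusing them via a classification of coordinates by their ``weight'' $p_i = |A_i|/m$. Let $f$ be a combinatorial shape with accepting set $T \subseteq \{0,1,\ldots,n\}$ on the symmetric part, so $f$ accepts iff $\sum_i y_i \in T$, where $y_i = \mathbf{1}[x_i \in A_i]$. If $\Pr[f=1]\ge\varepsilon$, then by pigeonhole there is a specific $k^\star \in T$ with $\Pr[\sum_i y_i = k^\star] \ge \varepsilon/(n+1)$. So it suffices to hit the ``level-set'' event $E_{k^\star} := \{\sum_i y_i = k^\star\}$, whose probability under the uniform distribution is at least $\varepsilon' := \varepsilon/(n+1)$.

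Next, I would bucket the coordinates by the magnitude of $p_i$ into $L = O(\log(mn/\varepsilon))$ buckets $B_1,\ldots,B_L$, where bucket $B_j$ contains the indices with $p_i \in [2^{-j},2^{-j+1})$; coordinates with $p_i$ very close to $1$ are handled symmetrically by complementing $A_i$. Write $S_j = \sum_{i\in B_j} y_i$. Then $\sum_i y_i = k^\star$ iff there is a decomposition $k^\star = k_1 + \cdots + k_L$ with $S_j = k_j$ for all $j$, and by pigeonholing again over the (polynomially-many) reasonable tuples $(k_1,\ldots,k_L)$, there is a target tuple $(k_1^\star,\ldots,k_L^\star)$ whose joint occurrence has probability at least $\varepsilon'' := \varepsilon'/\poly(n)$. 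Within $B_j$, all $y_i$ have comparable Bernoulli parameters, so the event $\{S_j = k_j^\star\}$ is a disjoint union of $\binom{|B_j|}{k_j^\star}$ combinatorial rectangles in which $k_j^\star$ of the coordinates land in $A_i$ and the rest land in $[m]\setminus A_i$. Each such rectangle has volume roughly $\varepsilon''$ and can thus be hit by an LLSZ-style hitting set of size $\poly(mn/\varepsilon)$.

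The crucial combination step is to hit the \emph{joint} event $\bigcap_j \{S_j = k_j^\star\}$ rather than each bucket separately. A naive product would blow up exponentially in $L$. Here I would use fractional perfect hash families, as hinted at in the abstract: hash the $n$ coordinates into the $L$ buckets using a small family of maps so that for every target profile $(k_1^\star,\ldots,k_L^\star)$ there is a map under which a substantial fraction of ``witnesses'' (coordinate assignments realizing $S_j = k_j^\star$ for all $j$) remains well-distributed across buckets. Combined with an enhanced combinatorial-rectangle HS that fools rectangles with volume guarantees on each axis separately (rather than just on the product), this reduces the joint problem to a product-of-hitting-sets construction whose overall size is $\poly(mn/\varepsilon)$.

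The main obstacle will be this combination step. Standard perfect hash families are too coarse: they guarantee isolation of $k$ designated elements but not a quantitative matching of bucket sizes to target values. The technical work is to build fractional variants that simultaneously (i) produce approximately the right number of ``inside'' coordinates in each bucket with noticeable probability, and (ii) compose cleanly with the LLSZ rectangle hitting set so that the product construction remains polynomial-sized. A secondary difficulty is handling buckets of very different sizes: heavy buckets need concentration-style arguments so that $S_j$ is close to its mean and the target $k_j^\star$ is well-defined, while sparse buckets must be treated via the rectangle HS. Verifying that the pigeonholing only loses $\poly(n)$ factors, so the final size bound is $\poly(mn/\varepsilon)$, is a routine but careful accounting step.
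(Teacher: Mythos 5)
Your opening pigeonhole step matches the paper's, but the route you take from there has a gap that the paper's key trick is specifically designed to avoid. You propose to hit the exact level-set event $\{\sum_i y_i = k^\star\}$ directly, by further decomposing $k^\star = k_1 + \cdots + k_L$ across $L = O(\log(mn/\varepsilon))$ buckets and pigeonholing over tuples. But the number of candidate tuples $(k_1,\ldots,k_L)$ is $n^{\Theta(L)} = n^{\Theta(\log n)}$ --- quasipolynomial, not polynomial --- and even after concentration restricts each $k_j$ to a window around its mean, the product of the window sizes over $\log n$ buckets remains quasipolynomial. This is essentially the same barrier that makes the GMRZ seed length quadratic in $\log(1/\varepsilon)$. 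The paper sidesteps exact counts entirely: it observes that both one-sided thresholds $T^+_{k^\star}$ and $T^-_{k^\star}$ accept with probability at least $\varepsilon/(n+1)$, builds hitting sets only for thresholds, and then \emph{interpolates}: given $y$ hitting $T^-_{k^\star}$ and $z$ hitting $T^+_{k^\star}$, it flips coordinates of $y$ to those of $z$ one at a time; the count changes by at most one per step, so some intermediate string attains exactly $k^\star$. This converts the hard "exact count" problem into two tractable one-sided problems at a cost of only $(n+1)\cdot|\mathcal{S}|^2$ in size. Without this (or an equivalent) idea, your plan does not reach a polynomial bound.

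Two further steps in your sketch would also fail as stated. First, you write that $\{S_j = k_j^\star\}$ is a disjoint union of $\binom{|B_j|}{k_j^\star}$ rectangles "each of volume roughly $\varepsilon''$"; in fact each individual rectangle can have exponentially small volume (only the union is large), and the LLSZ hitting set gives no guarantee for a union of many tiny rectangles. The paper instead lower-bounds, per bucket, the probability of the one-sided event $S_j \ge 1$ (or an additive advantage over the mean in the high-weight case) under a pairwise-independent or $O(1)$-wise independent space, which is where its strengthened rectangle hitting set (Theorem~\ref{thm:comb-rect}) enters. Second, you correctly note that a naive product over buckets blows up exponentially in $L$, but fractional hash families do not fix that: they only equidistribute the weights. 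The paper's mechanism for achieving simultaneous success in all buckets with a single polynomial-size sample space is a walk on a sequence of expanders over the per-bucket sample spaces (Lemma~\ref{lemma_afwz}), with degrees tuned to the per-bucket success probabilities; this component is absent from your proposal and is essential to keeping the final size $\poly(mn/\varepsilon)$.
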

\paragraph{\textbf{Related work:}} There has been a substantial amount of research into both PRGs and hitting sets 
for many interesting subclasses of the class of combinatorial shapes, and also some 
generalizations. Naor and Naor \cite{NN93} constructed PRGs for parity tests of bits (alphabet size $2$); these results were 
extended by Lovett, Reingold, Trevisan, and Vadhan \cite{LRTV} and Meka and Zuckerman 
\cite{MekaZu09} to modular sums (with coefficients). Combinatorial rectangles, another subclass of combinatorial shapes, have also been the subject of much attention. A 
series of works \cite{EGLNV,ArmoniSaWiZh96,Lu02} have constructed  $\varepsilon$-PRGs for this 
class of functions: the best such PRG, due to Lu \cite{Lu02}, has seedlength $O(\log n + \log^{3/2}(1/\varepsilon))$. 
Linial, Luby, Saks, and Zuckerman \cite{LLSZ} constructed optimal hitting sets for this class of 
tests. We build on many ideas from this work.

We also mention two more recent results that are very pertinent to our work. The first is to do with Linear 
Threshold functions which are weighted generalizations of threshold symmetric functions of input bits. For this class,
Rabani and Shpilka \cite{RabaniShpilka} construct an explicit $\varepsilon$-HS of optimal size $\poly(n/\eps)$. They use a bucketing and expander walk 
construction to build their hitting set. Our construction uses similar ideas.

The final result that we use is the PRG for combinatorial shapes by Gopalan \etal~\cite{GMRZ} that was mentioned in the introduction. This work directly motivates our results and moreover, we use their PRG as a black-box within our construction.

\section{Notation and Preliminaries} \begin{definition}[Combinatorial Shapes, Rectangles, Thresholds]
A function $ f $ is an \emph{$ (m,n) $-Combinatorial Shape} if there exist sets $ A_1,\ldots,A_n 
\subseteq [m]$ and a symmetric function $ h:\zeroone^n\to \zeroone $ such that $ f(x_1,\ldots,x_n) 
= h(1_{A_1}(x_1),\ldots$ $,1_{A_n}(x_n)) $.\footnote{$1_{A}$ is the indicator function of the set $A$.} If $h$ is the AND function, we call $f$ an \emph{$(m,n)$-Combinatorial Rectangle}. If $h$ is an unweighted threshold function (i.e. $h$ accepts iff $\sum_i 1_{A_i}(x_i) \ge \theta$ for some $\theta\in\naturals$), then $f$ is said to be an \emph{$(m,n)$-Combinatorial Threshold}. We denote by $ \cshape(m,n) $, $\crect(m,n)$, and $\cthr(m,n)$ the class of $(m,n)$-Combinatorial Shapes, Rectangles, and Thresholds respectively.
\end{definition}

\paragraph*{Notation.} In many arguments, we will work with a fixed collection of accepting sets $A_1,\ldots,A_n\subseteq [m]$ that will be clear from the context. In such a scenario, for $ i\in[n] $, we let $ X_i= 
1_{A_i}(x_i)$, $ p_i = |A_i|/m $, $ q_i = 1-p_i $ and $ w_i = p_iq_i $. Define the weight 
of a shape $ f $ as $ w(f)=\sum_i w_i $. For $\theta\in\naturals$, let $T^-_\theta$ (resp. $T^+_\theta$) be the function that accepts iff $\sum 
1_{A_i}(X_i)$ is at most (resp. at least) $\theta$. 

\begin{definition}[Pseudorandom Generators and Hitting Sets]
Let $ \mc{F}\subseteq \{0,1\}^D $ denote a boolean function family for some input domain $ D $. A function $ G:\zeroone^s\to D $ is an $\varepsilon$-pseudorandom generator ($\varepsilon$-PRG) with seedlength $ s $ for a class of functions $\mc{F}$ if for all $ f\in\mc{F} $,
\[
|\Pr_{x\in_u\zeroone^s}[f(G(x))=1]- \Pr_{y\in_u D}[f(y)=1]|\leq \eps.
\]
An $ \eps $-hitting set ($\eps$-HS) for $ \mc{F} $ is a multiset $ H$ containing only elements from $ D $ s.t. for any $ f\in \mc{F} $, if 
$ \Pr_{x\in_u D}[f(x)=1] \geq \eps$, then $\exists x\in H$ s.t. $f(x)=1$. 
\end{definition}

\begin{remark}
Whenever we say that there exist \emph{explicit} families of combinatorial objects of some kind, we mean that the object can be constructed by a deterministic algorithm in time polynomial in the description of the object. 
\end{remark}
%

We will need the following previous results in our constructions.

\begin{theorem}[$\eps$-PRGs for $\cshape(m,n)$ \cite{GMRZ}]
\label{thm:GMRZ}
For every $ \eps > 0 $, there exists an explicit $\eps$-PRG $\mc{G}_{GMRZ}^{m,n,\eps}:\{0,1\}^s\rightarrow [m]^n$ for $ \cshape(m,n) $ with seed-length $s = O(\log(mn) + \log^2(1/\eps)) $.
\end{theorem}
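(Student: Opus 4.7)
The plan is to reduce the task of fooling combinatorial shapes to fooling the distribution of the statistic $S = \sum_{i=1}^n 1_{A_i}(x_i)$ over the integers $\{0,1,\ldots,n\}$. Since a combinatorial shape is, by definition, a symmetric function of $(1_{A_1}(x_1),\ldots,1_{A_n}(x_n))$, it depends on the input only through $S$; hence any pseudorandom distribution under which the law of $S$ is within total-variation distance $\eps$ of its law under uniform $x_i$'s automatically fools $\cshape(m,n)$ to error $\eps$.

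To fool the law of $S$, my approach would be Fourier-analytic. Set $M = \Theta(n/\eps)$ and observe that approximating the characteristic function $\phi(t) = \Ex{}{e^{2\pi i t S/M}}$ within additive error $\eps/M$ uniformly for $t\in\{0,\ldots,M-1\}$ suffices (by inverse-Fourier and smoothness). The key structural feature is that $\phi(t) = \prod_{i=1}^n \Ex{x_i}{e^{2\pi i t \, 1_{A_i}(x_i)/M}}$ factors as a product of complex numbers in the unit disk depending on disjoint coordinates, which makes it a ``complex combinatorial rectangle''.

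I would next handle $\phi$ via a case analysis controlled by $w(f) = \sum_i p_iq_i$. When $w(f) = O(\polylog(n/\eps))$, only a few coordinates truly matter and the shape essentially reduces to a combinatorial rectangle, which is handled by invoking the known rectangle PRGs (Armoni--Saks--Wigderson--Zhou or Lu). In the generic large-$w(f)$ regime, a Berry--Esseen estimate tells us $S$ is smooth, so coarse approximations to $\phi$ at moderately many frequencies suffice. For these, I would bucket the coordinates by the value of $p_i$ into $O(\log n)$ groups, use short-seed generators (small-bias spaces or $k$-wise independence) within each bucket to fool the complex product, and then combine across buckets via a Nisan-style recursion using seed derandomization along an expander walk. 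Each level of recursion loses only an additive $O(\log(mn/\eps))$ in seedlength, and $O(\log(1/\eps))$ levels yield the target $O(\log(mn) + \log^2(1/\eps))$.

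The main obstacle is the combination step: naively multiplying ``per-bucket'' approximations to the characteristic function blows the multiplicative error by $2^{O(\log n)}$, which would be catastrophic for the high-frequency regime where $|\phi(t)|$ can be exponentially small. Controlling this requires careful use of \emph{correlated} pseudorandomness rather than independent per-bucket seeds, exploiting that small-bias or expander-walk constructions preserve products of low-magnitude complex quantities. Tight Berry--Esseen bounds and the interaction between the smoothing parameter $M$ and the quality of characteristic-function approximation are the technical core, and getting these to fit together is what makes the $\log^2(1/\eps)$ dependence (rather than a larger power) achievable.
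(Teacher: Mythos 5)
This theorem is not proved in the paper at all: it is imported verbatim from Gopalan, Meka, Reingold, and Zuckerman \cite{GMRZ} and used strictly as a black box (e.g., to derandomize the per-bucket advantage in the high-weight case of Section 4.1). So there is no in-paper argument to compare yours against, and any evaluation must be of your sketch on its own terms. For the record, your route also differs from the actual GMRZ proof, which is not Fourier-analytic; they fool the integer-valued sum $S=\sum_i 1_{A_i}(x_i)$ via a case split on the variance $w(f)$, using limited independence for low variance and, for high variance, a coupling/hybrid argument over hashed buckets driven by a central limit theorem. Your characteristic-function approach is closer in spirit to the later work of Gopalan, Kane, and Meka on ``Fourier shapes,'' which in fact achieves a better seed length.

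As a proof, however, your proposal has a genuine gap at exactly the point you flag as ``the main obstacle.'' The reduction to approximating $\phi(t)=\prod_i \Ex{x_i}{e^{2\pi i t 1_{A_i}(x_i)/M}}$ at $M=\Theta(n/\eps)$ frequencies to additive accuracy $\eps/M$ is fine, but the whole difficulty of the theorem lives in producing a single short-seed distribution that achieves this accuracy simultaneously for all $t$ when $|\phi(t)|$ can be as small as $2^{-\Theta(n)}$: per-bucket approximations multiply, and a relative error of $(1+\delta)$ per bucket across $\Theta(\log n)$ or more buckets destroys any additive guarantee at high frequencies. Saying this ``requires careful use of correlated pseudorandomness'' names the problem without giving a mechanism; the known resolutions require a genuinely new invariant (e.g., tracking both the magnitude and the phase of the partial products separately, or replacing the product bound by a coupling argument as in \cite{GMRZ}), and nothing in your sketch supplies one. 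A secondary issue is the seed-length accounting: $O(\log(1/\eps))$ recursion levels each costing $O(\log(mn/\eps))$ yields $O(\log(1/\eps)\log(mn)+\log^2(1/\eps))$, not the claimed $O(\log(mn)+\log^2(1/\eps))$, so even the recursion scheme as described does not hit the stated bound without an additional idea for making all but the first level cost only $O(\log(1/\eps))$.
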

%
%
%

\begin{theorem} [$\eps$-$\hs$ for $\crect(m,n)$ \cite{LLSZ}]
	\label{th:LLSZ}
For every $ \eps > 0 $, there exists an explicit $ \eps $-hitting set $ \mc{S}_{LLSZ}^{m,n,\eps} $ for $ 
\crect(m,n) $ of size $ 
\poly(m(\log n)/\eps) $.
\end{theorem}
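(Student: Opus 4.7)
A combinatorial rectangle $R = A_1 \times \cdots \times A_n$ accepts a uniformly random $x \in [m]^n$ with probability exactly $\prod_i p_i$, where $p_i = |A_i|/m$, so the goal is to hit every rectangle with $\prod_i p_i \geq \eps$. My plan is to bucket coordinates by the magnitude of $p_i$, construct per-bucket hitting sets via a sampler argument, and combine them using a fractional perfect hash family. Put $i$ in bucket $B_\ell$ whenever $p_i \in (2^{-\ell-1}, 2^{-\ell}]$, for $\ell = 0, 1, \ldots, L$ with $L = O(\log(m/\eps))$; coordinates with $p_i < \eps/m$ can be ignored since they would force $\prod_j p_j < \eps$. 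A simple counting argument using $\prod_i p_i \geq \eps$ shows $\ell \, |B_\ell| \leq \log(1/\eps)$ for every $\ell \geq 1$, so all but $O(\log(1/\eps) \log L)$ coordinates lie in the ``light'' bucket $B_0$ where $p_i > 1/2$.

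Within each bucket the $p_i$ are balanced (lying within a factor of two of one another), and I would construct a per-bucket hitting set of size $\poly(m, \log n, 1/\eps)$ for such balanced rectangles. The key tool is a sampler based on a random walk of appropriate length on a constant-degree expander, whose vertex labels lie in $[m]$. An expander Chernoff bound guarantees that the walk simultaneously lies in each $A_i$ (for $i \in B_\ell$) with probability bounded below by a polynomial in $\eps$; standard derandomizations, together with possibly a recursive application on smaller sub-rectangles to shrink the walk length, yield the required seed length of order $O(\log m + \log \log n + \log(1/\eps))$. A base case $n = O(\log(1/\eps)/\log m)$ handled by brute-force enumeration over $[m]^n$ anchors the recursion.

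To assemble the per-bucket hitting sets into a single $x \in [m]^n$ that hits the full rectangle, I would invoke a fractional perfect hash family $\mc{H}\colon [n] \to [L']$ with $L' = O(L \log(1/\eps))$ of size $\poly(\log n, 1/\eps)$, guaranteeing that for every valid bucket profile some $h \in \mc{H}$ spreads each bucket injectively (or nearly so) over its image. Iterating over all $h \in \mc{H}$ and all tuples of per-bucket seeds produces the final hitting set of size $\poly(m \log n / \eps)$, matching the theorem. The main obstacle will be calibrating the error amplification across the $L$ buckets --- each per-bucket hitting set must succeed with probability at least $1 - 1/L$, which tightens its parameters --- together with constructing a fractional perfect hash family of the required polylogarithmic size; these two components must compose without any blowup beyond the target $\poly(m \log n / \eps)$ bound, and getting the exponents in the bucket-count $L$ to cancel correctly is the technically delicate step.
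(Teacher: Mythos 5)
First, note that the paper does not prove Theorem~\ref{th:LLSZ}; it is quoted as a black box from Linial, Luby, Saks, and Zuckerman, so there is no in-paper proof to compare against. Judged on its own terms, your outline assembles the right kinds of tools (bucketing by the magnitude of $p_i$, perfect-hash-style grouping, expander walks), and your counting argument $\ell\,|B_\ell|\le \log_2(1/\eps)$ for $\ell\ge 1$ is correct. For those buckets the expander-walk idea can indeed be made to work (though the degree must be $2^{O(\ell)}$, not constant, to satisfy the condition $\lambda\le p_ip_{i-1}/8$ of Lemma~\ref{lemma_afwz}; the number of walks $m\cdot 2^{O(\ell)\cdot|B_\ell|}$ is still $m\cdot\poly(1/\eps)$).

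The fatal gap is the bucket $B_0$ of coordinates with $p_i>1/2$, which can contain up to $n$ coordinates (e.g.\ $p_i=1-\ln(1/\eps)/n$ for all $i$). Your ``sampler'' step needs the $i$-th vertex of the walk to lie in $A_i$ for \emph{every} $i\in B_0$ simultaneously; this is a conjunction of up to $n$ events, not a concentration statement, so an expander Chernoff bound does not apply, and the hitting-version of the walk lemma gives only $(0.75)^{|B_0|}\prod_i p_i$, which is exponentially small in $n$, with $m\cdot D^{|B_0|}=2^{\Theta(n)}$ walks. This high-dimensional, high-$p_i$ regime is precisely the difficulty that makes the LLSZ result nontrivial (its dimension-reduction step), and ``a recursive application on smaller sub-rectangles'' names no mechanism for it. Two further problems: (i) combining buckets by ``all tuples of per-bucket seeds'' is a Cartesian product over $L=\Theta(\log(m/\eps))$ buckets, each of size $\poly(m/\eps)$, which is quasipolynomial --- you would again need an expander walk \emph{across} buckets, and since each bucket succeeds only with probability $\poly(\eps)$ (not $1-1/L$), a union bound cannot replace it; (ii) you assume a fractional perfect hash family of size $\poly(\log n,1/\eps)$, whereas the explicit constructions available (Lemmas~\ref{lemma_perfect_hash} and~\ref{lem:fract-hash}) have size $2^{O(t)}\poly(n)$, which would already destroy the $\poly(\log n)$ dependence you are aiming for.
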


We will also need a stronger version of Theorem~\ref{th:LLSZ} for special cases of combinatorial rectangles. Informally, the strengthening says that if the acceptance probability of a `nice' rectangle is $>p$ for some {\em reasonably large} $p$, then a close to $p$ fraction of the strings in the hitting set are accepting. Formally, the following is proved later in the paper.
\begin{theorem} [Stronger $\hs$ for $\crect(m,n)$]
	\label{thm:comb-rect}
For all constants $c\geq 1$, $m=n^c$, and $\rho \le c\log n$, there is an
explicit set $\calS_{\text{rect}}^{n,c,\rho}$ of size $n^{O_c(1)}$
s.t. for any $\calR \in \crect(m, n)$ which satisfies the properties:
\begin{enumerate}
\item $\calR$ is defined by $A_i$, and the rejecting probabilities
$q_i:= (1-|A_i|/m)$ which satisfy $\sum_i q_i \le \rho$,
\item $\Pr_{X\sim [m]^n} [ \calR(X) = 1] \ge p ~(\ge 1/n^c)$
\end{enumerate}
we have \[ \Pr_{X\sim \calS_{\text{rect}}^{n,c,\rho}} [ \calR(X) = 1] \ge \frac{p}{2^{O_c(\rho)}}.\]
\end{theorem}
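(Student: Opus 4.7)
The plan is to adapt the bucketing-plus-walk paradigm of Rabani and Shpilka~\cite{RabaniShpilka} to combinatorial rectangles, converting the single-string hitting guarantee of Theorem~\ref{th:LLSZ} into a multiplicative fraction-accepting guarantee. I would first partition $[n]$ into $J+1 = O(c\log n)$ buckets $B_0,\ldots,B_J$, where $B_j = \{i : q_i \in (2^{-(j+1)},2^{-j}]\}$ for $j<J$ and $B_J$ collects coordinates with $q_i \le n^{-2c}$ (those coordinates contribute only a $(1\pm o(1))$ multiplicative factor to $p$ and can be set to uniformly random values from an almost-uniform distribution on $[m]^{B_J}$). The constraint $\sum_i q_i \le \rho$ implies $|B_j| \le \rho\cdot 2^{j+1}$, while the per-bucket acceptance probability $P_j := \prod_{i\in B_j}(1-q_i)$ satisfies $P_j \ge 2^{-O(\rho)}$ and $p = \prod_j P_j$.

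Next, for each bucket $B_j$ I would construct an explicit multiset $\mathcal{T}_j \subseteq [m]^{B_j}$ of size $n^{O_c(1)}$ with the following \emph{fractional} acceptance property: for every combinatorial rectangle on $B_j$ whose rejection probabilities lie in $(2^{-(j+1)},2^{-j}]$, at least a $P_j \cdot 2^{-O_c(\rho)/J}$ fraction of $\mathcal{T}_j$ accepts. A starting point is Theorem~\ref{th:LLSZ} with dimension $|B_j|$, alphabet $m$, and error $\eps_j = P_j \cdot 2^{-\Theta_c(\rho)/J}$, which yields at least one accepting string and has size $n^{O_c(1)}$ (since $1/P_j \le 2^{O(\rho)} \le n^{O_c(1)}$). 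I would then amplify the single-string guarantee to the fractional one by taking pairwise-independent almost-uniform ``shifts'' of $\mathcal{T}_j$ by coordinate-wise permutations of $[m]$. Because within $B_j$ all $q_i$ are of the same order of magnitude, the rectangle is essentially symmetric under such shifts, so the orbit of any hitting string has an average acceptance indicator close to $P_j$ with variance controlled by Chebyshev, which delivers the desired fractional guarantee.

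To assemble the $\mathcal{T}_j$'s into a single $\calS_{\text{rect}}^{n,c,\rho}\subseteq [m]^n$, I would run a length-$(J+1)$ random walk on an explicit constant-degree expander whose vertex set indexes $\bigcup_j \mathcal{T}_j$: the $j$-th vertex visited supplies the assignment to coordinates in $B_j$, and the concatenation is the sample. By Gillman's expander-Chernoff bound, the per-bucket fractional guarantees combine multiplicatively up to a loss absorbable into the spectral gap, giving a global acceptance fraction of $\prod_j P_j \cdot 2^{-O_c(\rho)} = p/2^{O_c(\rho)}$, while the total number of walks is $n^{O_c(1)}\cdot d^{O(c\log n)} = n^{O_c(1)}$ for constant degree $d$. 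The main obstacle is the per-bucket amplification: turning a one-string guarantee into a $2^{-O_c(\rho)/J}$-fractional one requires a shift group that is explicit and rich enough to make the orbit of any hitting string essentially pairwise-independent while remaining small enough not to blow up the HS size; a secondary technicality is choosing the expander's spectral parameters so that the per-bucket losses multiply correctly across the $O(\log n)$ steps of the walk.
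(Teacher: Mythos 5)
There is a genuine gap, and it sits exactly at the step you flag as "the main obstacle": the per-bucket fractional guarantee. Your dyadic bucketing by the magnitude of $q_i$ does not control the \emph{total} rejection weight of a bucket: all of $\sum_i q_i \le \rho = c\log n$ can land in a single dyadic class (e.g.\ all $q_i\in(1/2,1]$), so the per-bucket acceptance probability $P_j=\prod_{i\in B_j}(1-q_i)$ can be as small as $2^{-\Theta(\rho)} = n^{-\Theta(c)}$. Your proposed amplification then collapses: shifting a single LLSZ hitting string by a pairwise-independent family of coordinate-wise permutations makes the hitting string irrelevant (the orbit's acceptance fraction is just the acceptance probability of a translated rectangle under the pairwise-independent distribution), and pairwise independence fools combinatorial rectangles only to \emph{constant additive} error (Theorem~\ref{thm:EGLNV} with $k=O(1)$), which says nothing when the target probability $P_j$ is polynomially small; indeed a small pairwise-independent space can miss such a rectangle entirely. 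A secondary but real problem is that your partition $B_j=\{i: q_i\in(2^{-(j+1)},2^{-j}]\}$ depends on the rectangle, whereas $\calS_{\text{rect}}^{n,c,\rho}$ must be built obliviously; enumerating all such partitions of $[n]$ is super-polynomial. Finally, a constant-degree expander walk cannot preserve a product of polynomially small per-bucket probabilities --- Lemma~\ref{lemma_afwz} needs $\lambda_j \lesssim P_jP_{j-1}$, i.e.\ degree $\poly(1/P_j)$ --- and Gillman's averaging bound is the wrong tool here in any case.

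The paper's route fixes precisely these issues. It buckets by a \emph{fractional perfect hash family} (Lemma~\ref{lem:fract-hash}) into $r\approx \sum_i q_i/10$ buckets so that, for some hash function in the (polynomial-size, oblivious) family, every bucket has rejection weight $\sum_{j\in h^{-1}(i)}q_j\in[1/100,100]$; hence each bucket's acceptance probability is a \emph{constant}, and an $O(1)$-wise independent space already accepts with probability $\ge \tfrac12\prod_{j\in B_i}(1-q_j)$ (Claim~\ref{claim:rect1}, which handles the few coordinates with $q_j\ge 1/2$ by exact counting and the rest via Theorem~\ref{thm:EGLNV}). The buckets are then combined with AFWZ-style expander walks of \emph{varying} degrees, guessing each $\log(1/P_i)$ up to multiples of $L/r$ so that only $2^{O(r)}$ guesses are needed and the product of degrees stays $\poly(1/p)=n^{O_c(1)}$. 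If you want to rescue your outline, you would need to replace the dyadic bucketing by weight-balanced (hash-based) bucketing and replace the shift-amplification by bounded independence; at that point you have essentially reconstructed the paper's proof.
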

%
%

Recall that a distribution $\mu$ over $[m]^n$ is $k$-wise independent for $k\in\naturals$ if for any $S\subseteq [n]$ s.t. $|S|\leq k$, the marginal $\mu|_S$ is uniform over $[m]^{|S|}$. Also, $\mc{G}:\{0,1\}^s \rightarrow [m]^n$ is a 
\emph{$k$-wise independent probability space over $[m]^n$} if for  uniformly randomly chosen $z\in \{0,1\}^s$, the distribution of 
$\mc{G}(z)$ is $k$-wise independent.
%
%
%
\begin{fact}[Explicit $k$-wise independent spaces]
\label{fact_k_wise_existence}
For any $k,m,n\in\naturals$, there is an explicit $k$-wise independent probability space 
$\mc{G}^{m,n}_{k\text{-wise}}:\{0,1\}^s\rightarrow [m]^n$ with $s = O(k\log (mn))$.
\end{fact}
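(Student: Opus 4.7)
The plan is to use the classical polynomial-evaluation construction over finite fields, first for prime-power $m$ (the core case) and then for general $m$ via the Chinese Remainder Theorem.

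Suppose first that $m = p^t$ is a prime power. Let $s = \lceil \log_m n \rceil$ (taking $s = 1$ if $n \le m$) and set $q = m^s$, so that $\F_q \cong \F_{p^{ts}}$ contains $\F_m$ as a subfield and $q \le mn$. Fix an $\F_m$-basis of $\F_q$ and let $\pi \colon \F_q \to \F_m$ be the $\F_m$-linear surjection given by projection onto the first basis coordinate. Fix $n$ distinct evaluation points $\alpha_1, \ldots, \alpha_n \in \F_q$, which is possible because $q \ge n$. The seed consists of $k$ independent uniform elements $a_0, \ldots, a_{k-1} \in \F_q$, of total length $k \log q = O(k \log(mn))$. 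Setting $P(X) = \sum_{j=0}^{k-1} a_j X^j$, the generator outputs $y_i := \pi(P(\alpha_i))$ for $i \in [n]$; this is explicit, being just polynomial evaluation followed by a fixed linear projection.

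To verify $k$-wise independence, fix $S \sub [n]$ with $|S| \le k$ and consider the $\F_q$-linear map $L_S \colon \F_q^k \to \F_q^{|S|}$ sending $(a_0, \ldots, a_{k-1}) \mapsto (P(\alpha_i))_{i \in S}$. Its matrix is a Vandermonde on distinct points, hence of full row rank and surjective. Thus $(P(\alpha_i))_{i \in S}$ is uniform on $\F_q^{|S|}$, and coordinate-wise application of the $\F_m$-linear surjection $\pi$ preserves uniformity, producing a distribution uniform on $\F_m^{|S|} \cong [m]^{|S|}$.

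For general $m$, factor $m = \prod_j p_j^{e_j}$, apply the prime-power construction independently for each prime-power factor $m_j = p_j^{e_j}$, and combine the resulting $n$-tuples coordinate-wise via the ring isomorphism $[m] \cong \prod_j [m_j]$ from the Chinese Remainder Theorem. Parallel composition of independent $k$-wise independent distributions is again $k$-wise independent, and the total seed length is $O(k \log(mn))$. The main obstacle, and the reason a generic prime-field construction with modular reduction does not suffice, is the requirement of \emph{exact} uniform marginals: naive reduction modulo $m$ from a larger field whose size is not a multiple of $m$ fails to be uniform on $[m]$. The subfield projection $\pi$ handles this when $m$ is a prime power (where $\F_m$ literally sits inside $\F_q$), and CRT composition extends to arbitrary $m$.
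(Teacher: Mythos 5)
The paper states this Fact without proof, as folklore, so there is no internal proof to compare against; your prime-power construction (degree-$<k$ polynomial evaluation over $\F_q$ with $q=m^s\in[n,mn]$, followed by an $\F_m$-linear projection onto a subfield coordinate) is exactly the standard argument, and that part is correct: the Vandermonde map is surjective, surjective linear maps preserve uniformity, and the seed length is $k\lceil\log_2 q\rceil=O(k\log (mn))$.

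The gap is in the last step, the seed-length accounting for general $m$. Writing $m=\prod_j m_j$ with $m_j=p_j^{e_j}$ and $r=\omega(m)$ the number of distinct prime factors, your construction seeds the $r$ blocks \emph{independently}, so the total seed length is $\sum_j O\bigl(k\log(m_j n)\bigr)=O\bigl(k(\log m + r\log n)\bigr)$, not $O(k\log(mn))$. Since $r$ can grow like $\log m/\log\log m$, this exceeds the claimed bound whenever $m$ has super-constantly many distinct prime factors and $n$ is large (e.g.\ $m$ a primorial and $n\geq m$ gives roughly $k\log m\cdot\log n/\log\log m$ versus the claimed $k(\log m+\log n)$). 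Moreover this is not a mere slack in your estimate that a tighter calculation would remove: each block is itself a $k$-wise independent space over an alphabet of size at least $2$ and length $n$, hence needs seed $\Omega(k\log(n/k))$, and because the blocks are seeded independently these contributions add, so \emph{any} such product/CRT construction is stuck at $\Omega(rk\log(n/k))$. To actually prove the Fact as stated for arbitrary $m$ you would need either a construction that correlates the prime-power components (only the restrictions to $k$ coordinates need joint uniformity, so no entropy obstruction), or to weaken the statement (prime-power $m$, $\omega(m)=O(1)$, or almost-$k$-wise independence), the prime-power case being the one your argument fully establishes.
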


We will also use the following result of Even et al.~\cite{EGLNV}.
\begin{theorem}
\label{thm:EGLNV}
Fix any $m,n,k\in\naturals$. Then, if $f\in \crect(m,n)$ and $\mu$ is any $k$-wise independent distribution over $[m]^n$, then  we have
\[
\left|
\prob{x\in [m]^n}{f(x)=1} - \prob{x\sim \mu}{f(x)=1}
\right| \leq \frac{1}{2^{\Omega(k)}}
\]
\end{theorem}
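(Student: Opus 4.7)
The theorem strengthens the LLSZ hitting set (Theorem~\ref{th:LLSZ}) from a hitting guarantee to a density guarantee: for a dense rectangle with $\sum q_i \le \rho$ and $\Pr[\calR]=p\ge 1/n^c$, a $p/2^{O_c(\rho)}$ fraction of $\calS_{\text{rect}}^{n,c,\rho}$ must accept, not merely one point. Since $\rho\le c\log n$, the target fraction is at least $1/n^{O_c(1)}$, so polynomial set size is in principle compatible with the goal; the real work lies in converting a hitting guarantee into a density guarantee, exploiting the hypothesis $\sum q_i \le \rho$.

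My plan is to partition $[n]$ into heavy and light coordinates by a threshold: $H=\{i:q_i>1/2\}$ and $L=[n]\setminus H$, so $|H|\le 2\rho$ by Markov. Factoring $p = p_H \cdot p_L$, the inequality $1-x\ge e^{-2x}$ for $x\le 1/2$ gives $p_L\ge e^{-2\rho}=2^{-O(\rho)}$; hence $p_H \ge p$ and the heavy part carries all of $p$'s smallness. Because $H$ is unknown in advance, I will compose the construction with an explicit perfect hash family $\calH$ of functions $h:[n]\to[K]$ with $K=\Theta(\rho^2)$, which isolates any $2\rho$-subset in a noticeable fraction of the $h$'s. The set $\calS_{\text{rect}}^{n,c,\rho}$ then takes, for each $h\in\calH$ and each slot $k\in[K]$, an independent sample from the LLSZ hitting set applied to $\crect(m,|h^{-1}(k)|)$ with error $\eps = 2^{-\Omega(\rho)}/n^c$; the slot samples are concatenated into an element of $[m]^n$ and the overall set is the union over $h\in\calH$. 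Since $K=O(\log^2 n)$, $\eps = 1/n^{O_c(1)}$, and $|\calH|=\poly(n)$, the total size is $n^{O_c(1)}$ as required.

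The main obstacle, and the technical core, is the density analysis. Fix a hash $h$ that isolates $H$ (a $1/\poly(\rho)$ fraction of $\calH$). Then each slot-restriction of $\calR$ contains at most one heavy coordinate; most slots contain only light coordinates, so the slot-rectangle is nearly trivial (its acceptance probability is at least $2^{-O(\rho_{\text{slot}})}$, where the $\rho_{\text{slot}}$'s sum to $\rho$). For such nearly-trivial rectangles one expects an LLSZ-type construction to approximate acceptance \emph{multiplicatively} --- within a factor $(1\pm \eps)$ --- rather than the additive guarantee stated in Theorem~\ref{th:LLSZ}. Establishing this requires re-opening the LLSZ recursion: LLSZ's construction uses iterated sampling whose errors compose naturally multiplicatively when the target probability is not too small, and the hope is that dense rectangles lie in exactly this regime. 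Combining these slotwise multiplicative approximations across the $K$ slots --- and absorbing all losses into a single $2^{O_c(\rho)}$ factor --- then yields the claimed density on an isolating hash, and a $1/\poly(\rho)$-fraction of hashes isolate, giving the bound overall.

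The chief risk I foresee is that the slot-by-slot multiplicative errors compound faster than $2^{O_c(\rho)}$, or that the ``one heavy coordinate per slot'' case is not as benign as hoped. A fallback is to replace the per-slot LLSZ invocation with a bespoke fractional hitting set for nearly-trivial rectangles, designed so that its single-coordinate marginals are close to uniform on $A_i$; the paper's abstract's reference to ``fractional perfect hash families'' suggests that a related combinatorial object is constructed to package exactly this kind of per-slot guarantee, and I would look to formulate and prove such a tool in parallel with the rectangle analysis.
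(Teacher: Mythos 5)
Your proposal does not address the statement in question. Theorem~\ref{thm:EGLNV} asserts that \emph{any} $k$-wise independent distribution $\mu$ over $[m]^n$ fools every $f\in\crect(m,n)$ to within \emph{additive} error $2^{-\Omega(k)}$; it is a quoted result of Even et al.\ that the paper uses as a black box (it is invoked inside the proof of Claim~\ref{claim:rect1} to handle the coordinates outside the ``big'' set $B$). What you have sketched instead is a construction and analysis for Theorem~\ref{thm:comb-rect}, the density-type strengthening of the LLSZ hitting set for rectangles with $\sum_i q_i\le\rho$: your text is organized entirely around $\rho$, $p\ge 1/n^c$, the set $\calS_{\text{rect}}^{n,c,\rho}$, perfect hashing to isolate heavy coordinates, and per-slot invocations of Theorem~\ref{th:LLSZ}. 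Nothing in it bears on Theorem~\ref{thm:EGLNV}: $k$-wise independence is never used or mentioned, and no bound of the form $2^{-\Omega(k)}$ is derived. A proof of the actual statement would run along standard lines: if $\sum_i q_i\ge ck$ for a suitable constant $c$, then both the uniform and the $k$-wise independent acceptance probabilities are at most $2^{-\Omega(k)}$ (restrict to $k$ coordinates with the largest $q_i$ and bound the probability that all of them accept, an event depending on only $k$ coordinates); otherwise $\sum_i q_i$ is small, and truncating the inclusion-exclusion expansion of $\Pr[\bigwedge_i 1_{A_i}(x_i)]$ at depth $k$ yields partial sums that depend only on $k$-wise marginals (hence coincide under $\mu$ and under the uniform distribution), with a Bonferroni truncation error bounded by $2^{-\Omega(k)}$ in both cases.

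Even read as an attempt at Theorem~\ref{thm:comb-rect}, your plan diverges from the paper's proof, which buckets indices with the fractional perfect hash family of Lemma~\ref{lem:fract-hash} so that each bucket has constant rejection weight, uses an $O(C)$-wise independent space within buckets via Claim~\ref{claim:rect1} (whose proof is exactly where Theorem~\ref{thm:EGLNV} enters), and stitches buckets together with expander walks; your route instead hinges on an unproven \emph{multiplicative} approximation guarantee for the LLSZ construction, which you yourself flag as a hope requiring ``re-opening the LLSZ recursion.'' So there are two gaps: the proposal proves the wrong theorem, and even for that other theorem its central step is missing.
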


\paragraph*{Expanders.}
Recall that a degree-$D$ multigraph $G = (V,E)$ on $N$ vertices is an 
$(N,D,\lambda)$-expander if the second largest (in absolute value) eigenvalue of its normalized adjacency matrix is at 
most $\lambda$. 
We will use explicit expanders as a basic building block. We refer the reader to the excellent survey of Hoory, Linial, and Wigderson \cite{HLW} for various related results.

\begin{fact}[Explicit Expanders \cite{HLW}]
\label{fact_expl_exp}
Given any $\lambda > 0$ and $N\in\naturals$, there is an explicit $(N,D,\lambda)$-expander where 
$D = (1/\lambda)^{O(1)}$.
\end{fact}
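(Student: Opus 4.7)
The plan is to obtain the required expander by starting from a constant-degree explicit expander and then amplifying its spectral gap via the graph-powering operation. I will invoke a known explicit construction (for instance the Margulis--Gabber--Galil construction, the Lubotzky--Phillips--Sarnak Ramanujan graphs, or the zig-zag product construction of Reingold--Vadhan--Wigderson) which, for every sufficiently large $N'$ in an index set that is multiplicatively dense in $\naturals$, produces an $(N', D_0, \lambda_0)$-expander, where $D_0 \in \naturals$ and $\lambda_0 < 1$ are absolute constants. Call this base graph $H_{N'}$; its existence and explicitness are proved in the survey of Hoory, Linial, and Wigderson that the fact cites.

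Next I would amplify the spectral gap by graph powering. If $H$ is an $(N', D_0, \lambda_0)$-expander with normalized adjacency matrix $M$, then the graph $H^t$ whose adjacency matrix is $M^t$ (equivalently, the multigraph whose edges are $t$-step walks in $H$) is an $(N', D_0^t, \lambda_0^t)$-expander, because the eigenvalues of $M^t$ are the $t$-th powers of those of $M$ and the all-ones eigenvector is preserved. Setting
\[
t \;:=\; \left\lceil \frac{\log(1/\lambda)}{\log(1/\lambda_0)} \right\rceil
\]
ensures $\lambda_0^t \leq \lambda$, and the resulting degree is
\[
D \;=\; D_0^t \;=\; (1/\lambda)^{(\log D_0)/\log(1/\lambda_0)} \cdot O(1) \;=\; (1/\lambda)^{O(1)},
\]
where the hidden constant depends only on the absolute constants $D_0$ and $\lambda_0$. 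Since both $H$ and graph powering are explicit (the $t$-th power of $M$ can be computed by iterating over $t$-step walks in time polynomial in the description length of $H^t$), the construction remains explicit.

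Finally I would address the issue that the base construction may only yield graphs on a restricted set of sizes. Since the relevant index sets (primes in appropriate residue classes, perfect squares, etc.) are multiplicatively dense in $\naturals$, one can always choose $N' \in [N, c_0 N]$ in the index set for some absolute constant $c_0$. To pass from $N'$ to exactly $N$ vertices, identify $N'-N$ excess vertices with existing ones, which preserves regularity up to a factor of $2$ in the degree and does not increase $\lambda$ by more than a constant factor; folding this constant into the power $t$ keeps $D$ at $(1/\lambda)^{O(1)}$. The main technical subtlety in executing this plan is not the powering step itself, which is clean, but rather verifying that the base construction one chooses is available for every $N$ (or within a constant multiplicative factor) and that the size-adjustment step preserves the spectral guarantee; both of these are standard and are handled uniformly in the cited survey.
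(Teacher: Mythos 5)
The paper does not actually prove this statement: it is invoked as a black-box fact, with a citation to the Hoory--Linial--Wigderson survey, and no argument is given. Your sketch --- a constant-degree explicit base expander family, spectral amplification by graph powering with $t=\lceil \log(1/\lambda)/\log(1/\lambda_0)\rceil$ giving degree $(1/\lambda)^{O(1)}$, plus a standard size-patching step --- is exactly the standard argument underlying that citation and is correct, with the only point needing care being the one you already flag (making the merged graph regular, e.g.\ via self-loops, and checking the constant-factor loss in $\lambda$, which is folklore and handled in the cited survey).
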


Expanders have found numerous applications in derandomization. A central theme in these 
applications is to analyze random walks on a sequence of expander graphs. Let $G_1,\ldots,G_{\ell}$ be a sequence of (possibly different) graphs on the \emph{same} vertex set $V$. Assume $G_i$ ($i\in [\ell]$) is an $(N,D_i,\lambda_i)$-expander. Fix any $u\in V$ and $y_1,\ldots,y_\ell\in\naturals$ s.t. $y_i\in [D_i]$ for each $i\in [\ell]$. Note that $(u,y_1,\ldots,y_\ell)$ naturally defines a `walk' $(v_1,\ldots,v_\ell)\in V^{\ell}$ as follows: $v_1$ is the $y_1$th neighbour of $u$ in $G_1$ and for each $i>1$, $v_{i}$ is the $y_i$th neighbour of $v_{i-1}$ in $G_i$. We denote by $\mc{W}(G_1,\ldots,G_\ell)$ the set of all tuples $(u,y_1,\ldots,y_\ell)$ as defined above. Moreover, given $w=(u,y_1,\ldots,y_\ell)\in\mc{W}(G_1,\ldots,G_{\ell})$, we define $v_i(w)$  to be the vertex $v_i$ defined above (we will simply use $v_i$ if the walk $w$ is clear from the context).

We need a variant of a result due to Alon, Feige, Wigderson, and Zuckerman \cite{AFWZ}. The lemma as it is stated below is slightly more general than the one given in \cite{AFWZ} but it can be obtained by using essentially the same proof and setting the parameters appropriately. The proof is given in the appendix.
\begin{lemma}
\label{lemma_afwz}
Let $G_1,\ldots,G_\ell$ be a sequence of graphs defined on the same vertex set $V$ of size $N$. Assume that $G_i$ is an $(N,D_i,\lambda_i)$-expander. Let $V_1,\ldots,V_\ell\subseteq V$ s.t. $|V_i|\geq p_i N > 0$ for each $i\in [\ell]$. Then, as long as for each $i \in [\ell]$, $\lambda_i \leq (p_{i}p_{i-1})/8$,
\begin{equation}\label{eqn:afwz}
\prob{w\in \mc{W}(G_1,\ldots,G_\ell)}{\forall i\in [\ell], v_i(w)\in V_i} \geq (0.75)^\ell \prod_{i\in [\ell]}{p_i}.
\end{equation}
\end{lemma}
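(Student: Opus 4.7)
The plan is to analyze the walk spectrally by tracking the sequence of probability vectors it induces. Let $P_i$ denote the normalized transition matrix of $G_i$ and $D_i$ the diagonal projector onto $V_i$. Starting from the uniform distribution $\phi_0 = \mathbf{1}/N$, define $\phi_{i+1} := D_{i+1}P_{i+1}\phi_i$. Then $\Pr_w[\forall i,\ v_i(w)\in V_i] = \mathbf{1}^T\phi_\ell$, and the goal is to show this quantity is at least $(3/4)^\ell\prod_i p_i$.

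I would decompose each $\phi_i = \alpha_i\mathbf{1} + \phi_i^\perp$ with $\phi_i^\perp\perp\mathbf{1}$, and track the scalars $\alpha_i$ and $\eta_i := \|\phi_i^\perp\|_2$. Since $\mathbf{1}$ is the Perron eigenvector of every $P_{i+1}$, the parallel component is preserved and the orthogonal one contracts to $\xi := P_{i+1}\phi_i^\perp \perp \mathbf{1}$ with $\|\xi\|_2 \leq \lambda_{i+1}\eta_i$. A Cauchy--Schwarz estimate against $\mathbf{1}_{V_{i+1}} - p_{i+1}\mathbf{1}$ (which is orthogonal to $\mathbf{1}$ with norm at most $\sqrt{Np_{i+1}}$) then gives
\begin{align*}
\alpha_{i+1} &\geq \alpha_i p_{i+1} - \lambda_{i+1}\eta_i\sqrt{p_{i+1}/N},\\
\eta_{i+1} &\leq \|\phi_{i+1}\|_2 \leq \alpha_i\sqrt{Np_{i+1}} + \lambda_{i+1}\eta_i,
\end{align*}
where I take $p_i := |V_i|/N$; the lower bound in the hypothesis only strengthens the final conclusion.

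The key step is to introduce the dimensionless quantity $b_i := \eta_i\sqrt{p_i/N}/\alpha_i$ measuring the relative weight of the orthogonal error. Substituting the two recurrences and invoking the hypothesis $\lambda_{i+1} \leq p_i p_{i+1}/8$ (which gives $\lambda_{i+1}/\sqrt{p_i p_{i+1}}\leq 1/8$ since $p_i p_{i+1}\le 1$), both inequalities factor through $\alpha_i p_{i+1}$ and the two-variable recurrence collapses to
\[
\frac{\alpha_{i+1}}{\alpha_i p_{i+1}} \geq 1 - \frac{b_i}{8}, \qquad b_{i+1}\leq \frac{1 + b_i/8}{1 - b_i/8}.
\]
Since $\eta_0 = 0$ gives $b_0 = 0$, an easy induction shows $b_i\leq 2$ for every $i$: if $b_i\leq 2$, then $b_{i+1}\leq (5/4)/(3/4) = 5/3 \leq 2$. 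Plugging $b_i\leq 2$ back into the first inequality gives $\alpha_{i+1}\geq (3/4)\,\alpha_i p_{i+1}$, which telescopes from $\alpha_0 = 1/N$ to $\alpha_\ell\geq (3/4)^\ell\prod_i p_i/N$. Multiplying by $N$ yields the desired lower bound on $\mathbf{1}^T\phi_\ell$. The main obstacle is identifying the right scale-free error quantity $b_i$ so that the coupled recurrence collapses to a one-variable contraction; once that is in place, the induction and the telescoping product are routine.
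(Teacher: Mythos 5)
Your proposal is correct and follows essentially the same route as the paper's appendix proof: both track the vector $I_{V_t}A_t\cdots I_{V_1}A_1\ones_N$, split it into its component along $\ones_N$ and the orthogonal remainder, contract the latter via the spectral gap, control its $L_1$ effect by Cauchy--Schwarz, and close an induction with the invariant that the orthogonal mass stays bounded by (twice) the parallel mass at scale $\sqrt{p_tN}$. Your dimensionless ratio $b_i$ is just a repackaging of the paper's two coupled inequalities $\norm{u_{(t)}}_1\ge \tfrac{3p_t}{4}\norm{u_{(t-1)}}_1$ and $\norm{u_{(t)}}_2\le \tfrac{2}{\sqrt{p_tN}}\norm{u_{(t)}}_1$, with the same constants.
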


Also, in our applications, we will sometimes use the following corollary.

\begin{corollary}
Let $V$ be a set of $N$ elements, and let $0 < p_i<1$ for $1 \le i\le s$ be given.  There exists an explicit set of walks $\calW$, each of length $s$, s.t. for any subsets $V_1, V_2, \dots, V_s$ of $V$, with $|V_i| \ge p_i N$, there exists a walk $w=w_1w_2 \dots w_s \in \calW$ such that $w_i \in V_i$ for all $i$. Furthermore, there exist such $\calW$ satisfying
$ |\calW| \le \text{poly}\big( N, \prod_{i=1}^s \frac{1}{p_i} \big)$. 
\end{corollary}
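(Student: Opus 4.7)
The plan is a direct application of Lemma~\ref{lemma_afwz} to a sequence of explicit expander graphs supplied by Fact~\ref{fact_expl_exp}. For each $i \in [s]$, I set $\lambda_i := p_i p_{i-1}/8$ (with the convention $p_0 = 1$), and invoke Fact~\ref{fact_expl_exp} to obtain an explicit $(N, D_i, \lambda_i)$-expander $G_i$ on the vertex set $V$ with degree $D_i = (1/\lambda_i)^{O(1)} = (1/(p_i p_{i-1}))^{O(1)}$. I then take $\mc{W} := \mc{W}(G_1,\ldots,G_s)$, which is explicit since each $G_i$ is explicit (we can enumerate the walks in time polynomial in $|\mc{W}|$).

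For the existence property, fix any subsets $V_1, \ldots, V_s \subseteq V$ with $|V_i| \geq p_i N$. By our choice of $\lambda_i$, the hypothesis $\lambda_i \leq p_i p_{i-1}/8$ of Lemma~\ref{lemma_afwz} holds, so
\begin{equation*}
\Pr_{w \in \mc{W}}[\forall i \in [s],\ v_i(w) \in V_i] \;\geq\; (0.75)^s \prod_{i=1}^s p_i \;>\; 0.
\end{equation*}
Hence at least one walk $w \in \mc{W}$ has $v_i(w) \in V_i$ for every $i$, which gives the $w_1 \cdots w_s$ required by the corollary.

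For the size bound, $|\mc{W}| = N \cdot \prod_{i=1}^s D_i \leq N \cdot \prod_{i=1}^s (1/(p_i p_{i-1}))^{O(1)}$. Since each factor $1/p_j$ (for $j \in [s]$) appears at most twice across the product, this simplifies to $N \cdot \bigl(\prod_{i=1}^s 1/p_i\bigr)^{O(1)} = \poly\bigl(N, \prod_{i=1}^s 1/p_i\bigr)$.

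There is really no serious obstacle here; the corollary is a clean packaging of the AFWZ-style estimate in Lemma~\ref{lemma_afwz}. The only minor wrinkle is the boundary index $i=1$, where $p_0$ is not defined by the corollary: adopting $p_0 = 1$ is harmless, since for the first step the walker is at a uniformly random vertex of $V$ and no mixing is needed to get $\Pr[v_1 \in V_1] \geq p_1$.
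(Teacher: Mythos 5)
Your proof is correct and follows essentially the same route as the paper's: instantiate Lemma~\ref{lemma_afwz} with explicit expanders of second eigenvalue at most $p_i p_{i-1}/8$ from Fact~\ref{fact_expl_exp}, and count the walks as $N\cdot\prod_i D_i$. The handling of the boundary convention $p_0=1$ is a harmless detail the paper leaves implicit.
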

This follows from Lemma~\ref{lemma_afwz} by picking $\lambda_i$ smaller than $p_i p_{i-1}/10$ for each $i$.  By Fact \ref{fact_expl_exp}, known explicit constructions of expanders require choosing degrees $d_i = 1/\lambda_i^{O(1)}$.  The number of walks of length $s$ is $N\cdot \prod_{i=1}^\ell d_i$, which gives the bound on $\calW$ above.
%

\paragraph{Hashing.} Hashing plays a vital role in all our constructions. Thus, we need explicit hash families which have several ``good'' properties.  First, we state a lemma obtained by slightly extending part of a lemma due to Rabani and Shpilka 
\cite{RabaniShpilka}, which itself builds on the work of Schmidt and Siegel \cite{SchmidtSiegel} 
and Fredman, Koml\'{o}s, and Szemer\'{e}di \cite{FKS}. It is somewhat folklore and the proof is omitted.

\begin{lemma}[Perfect Hash Families]
	\label{lemma_perfect_hash}
	For any $n,t\in\naturals$, there is an explicit family of hash functions 
	$\mc{H}_{\text{perf}}^{n,t}\subseteq [t]^{[n]}$ of size $2^{O(t)}\poly(n)$ s.t. for any $S\subseteq 
	[n]$ with $|S|=t$, we have
	\[
	\prob{h\in \mc{H}_{\text{perf}}^{n,t}}{\text{$h$ is $1$-$1$ on $S$}}\geq \frac{1}{2^{O(t)}}.
	\]
\end{lemma}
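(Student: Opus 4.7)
The plan is to construct $\mathcal{H}_{\text{perf}}^{n,t}$ as a composition of two families, first shrinking the universe from $[n]$ down to $\poly(t)$ using pairwise independence, then applying a known explicit perfect hash family on the reduced universe. Concretely, I would fix $N=4t^2$ and let $\mathcal{H}_1\subseteq [N]^{[n]}$ be a pairwise-independent hash family of size $\poly(n)$ (built from standard linear-function constructions over a field of size $\poly(n)$). For any fixed $t$-subset $S\subseteq [n]$, the expected number of colliding pairs of $S$ under a uniformly chosen $h_1\in\mathcal{H}_1$ is $\binom{t}{2}/N < 1/8$, so by Markov's inequality $h_1$ is $1$-$1$ on $S$ with probability at least $7/8$.

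Next I would take $\mathcal{H}_2\subseteq [t]^{[N]}$ to be an explicit perfect hash family from the small universe $[N]=[4t^2]$ to $[t]$: that is, a family with the property that for \emph{every} $t$-subset $T\subseteq [N]$, at least one $h_2\in\mathcal{H}_2$ is $1$-$1$ on $T$. Explicit constructions (e.g.\ via FKS-style displacement in the spirit of Schmidt--Siegel, or via $k$-perfect families from Alon--Yuster) yield such $\mathcal{H}_2$ of size $2^{O(t)}\cdot \log N = 2^{O(t)}\cdot O(\log t) = 2^{O(t)}$. Consequently, for any $t$-subset $T$, $\Pr_{h_2\in\mathcal{H}_2}[h_2 \text{ is $1$-$1$ on }T] \geq 1/|\mathcal{H}_2| = 2^{-O(t)}$.

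Finally, I would define $\mathcal{H}_{\text{perf}}^{n,t} := \{\, h_2 \circ h_1 \suchthat h_1\in\mathcal{H}_1,\ h_2\in\mathcal{H}_2\,\}$, which has size $|\mathcal{H}_1|\cdot|\mathcal{H}_2| = \poly(n)\cdot 2^{O(t)}$. Since a composition $h_2\circ h_1$ is $1$-$1$ on $S$ whenever $h_1$ is $1$-$1$ on $S$ and $h_2$ is $1$-$1$ on the (then $t$-element) image $h_1(S)$, independence of the two draws gives
\[
\Pr_{h\in \mathcal{H}_{\text{perf}}^{n,t}}[h\text{ is $1$-$1$ on }S] \;\geq\; \frac{7}{8}\cdot \min_{|T|=t}\Pr_{h_2}[h_2 \text{ is $1$-$1$ on }T]\;\geq\; 2^{-O(t)}.
\]

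The main obstacle, and the step where folklore does the real work, is producing $\mathcal{H}_2$ of size $2^{O(t)}$ on a universe of size $\poly(t)$: general explicit perfect hash families on universes of size $u$ carry a $\log u$ overhead, and it is precisely the fact that we first reduced the universe to $u=\poly(t)$ via pairwise independence that makes this overhead only $O(\log t)$, which can be absorbed into $2^{O(t)}$. Everything else is a routine union/independence calculation.
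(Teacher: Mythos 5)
The paper gives no proof of this lemma to compare against: it explicitly declares the statement ``somewhat folklore'' and omits the proof, citing Rabani--Shpilka, Schmidt--Siegel, and Fredman--Koml\'os--Szemer\'edi. Your two-level scheme (pairwise-independent universe reduction $[n]\to[4t^2]$, followed by a perfect family on the small universe, composed independently) is exactly the standard route behind those references, and your calculations are right: the expected collision count $\binom{t}{2}/N<1/8$ gives success probability $7/8$ at the first level, and conditioning on that event makes $h_1(S)$ a fixed $t$-subset of $[N]$ on which the independently drawn $h_2$ succeeds with probability $1/|\mathcal{H}_2|$. You also correctly identify why the reduction step is not merely cosmetic: the lemma demands success probability $2^{-O(t)}$ with no dependence on $n$, so one cannot simply apply a size-$2^{O(t)}\log n$ family directly to $[n]$ and divide by its cardinality.

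The one load-bearing import is $\mathcal{H}_2$: an \emph{explicit} family of size $2^{O(t)}$ on a universe of size $\poly(t)$ in which every $t$-subset is separated. Be aware that the easy substitutes do not deliver this. A $t$-wise independent family on $[4t^2]$ separates each $T$ with probability $t!/t^t=2^{-O(t)}$, but such a family has size $2^{\Theta(t\log t)}$, which would degrade the final bound to $2^{O(t\log t)}\poly(n)$; and naive derandomized greedy covering of the $\binom{4t^2}{t}$ subsets requires searching a function space of size $t^{\Theta(t^2)}$ per step, which is not polynomial in the output size. So the $2^{O(t)}$ bound genuinely requires the Schmidt--Siegel/FKS second-level bucketing machinery you cite (itself another round of pairwise-independent hashing into buckets with squared-size tables, plus randomness recycling across buckets). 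Since the paper itself treats that machinery as citable folklore, deferring to it is acceptable here, but your proof is a reduction to that result rather than a from-scratch argument, and a fully self-contained write-up would need to reproduce that construction.
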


The family of functions thus constructed are called ``perfect hash families''.  We also need a {\em fractional}
version of the above lemma, whose proof is
similar to that of the perfect hashing lemma and is presented later in the paper.

\begin{lemma}[Fractional Perfect Hash families]\label{lem:fract-hash}
For any $n,t\in\naturals$, there is an explicit family of hash functions $\mc{H}_{frac}^{n,t}\subseteq [t]^{[n]}$ of size $2^{O(t)}n^{O(1)}$ s.t. for any $z\in [0,1]^n$ with $\sum_{j\in [n]} z_j \geq 10t$, we have
\[
\prob{h\in \mc{H}_{frac}^{n,t}}{\forall i\in [t], \sum_{j\in 
h^{-1}(i)}z_j \in [0.01M,10M]}\geq \frac{1}{2^{O(t)}},
\]
where $M=\frac{\sum_{j\in [n]} z_j}{t}$.
\end{lemma}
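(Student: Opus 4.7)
The plan is to closely follow the standard construction of perfect hash families (Lemma \ref{lemma_perfect_hash}), adapted to handle weighted items rather than a combinatorial set of $t$ elements.

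First I would establish that a uniformly random hash function $h \colon [n]\to [t]$ satisfies the fractional balance property with probability at least $2^{-O(t)}$. Writing $W_i = \sum_{j\in h^{-1}(i)} z_j$, we have $\E[W_i] = M$, and since $W_i$ is a sum of independent $[0,1]$-valued random variables, Chernoff gives $\Pr[W_i \notin [M/100, 10M]] \le e^{-cM}$ for some absolute constant $c > 0$; because $M \ge 10$, this is at most a fixed $\rho < 1$. To boost this to a bound on the joint event $\bigcap_i \{W_i \in [M/100, 10M]\}$, I would use a direct-counting argument via Stirling's formula: in the uniform case $z_j = 1$, the single balanced multinomial vector $(M,\ldots,M)$ already has probability $\ge 2^{-O(t)}$, and similar estimates handle general $z$. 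Alternatively, a sequential-conditioning argument using the negative association of weighted balls-in-bins (Dubhashi--Ranjan) shows that, given the first $i$ buckets are balanced, the $(i+1)$-st bucket is still balanced with constant probability, yielding a joint bound $(1-\rho)^t = 2^{-O(t)}$.

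For the derandomization I would construct $\mathcal{H}_{frac}^{n,t}$ via a two-level hash that mirrors the Fredman--Koml\'{o}s--Szemer\'{e}di construction underlying Lemma \ref{lemma_perfect_hash}. The first level is a pairwise-independent hash $h_0 \colon [n] \to [p]$ with $p = O(t^2)$, which flattens the weight profile so that each pre-bucket has weight $O(M/t)$ with constant probability. The second level is a hash $\pi \colon [p] \to [t]$ drawn from an almost $t$-wise independent family over $[t]^{p}$ of size $2^{O(t)}\,\mathrm{poly}(p)$; by a Bellare--Rompel style concentration bound, such $\pi$ reproduces the bucket-weight concentration of a truly random hash up to additive error $2^{-\Theta(t)}$. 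The composed family $\{\pi \circ h_0\}$ has total size $2^{O(t)}\, n^{O(1)}$, and Steps~1--2 then imply that at least a $2^{-O(t)}$ fraction of its members are fractionally balanced.

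The main obstacle is the joint probability bound: because the fractional-balance event is not monotone in the $W_i$'s, standard negative-association inequalities for balls-in-bins give the wrong direction and one must resort to a more delicate direct-counting or sequential-conditioning argument. A secondary obstacle is verifying that the limited-independence concentration used in the derandomization actually carries over to this specific two-sided, non-monotone event, and that the flattening achieved by the first-level pairwise hash truly suffices as input to the second-level family.
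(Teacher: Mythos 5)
There is a genuine gap, and it sits exactly where you flagged it: the joint, two\--sided balance event for all $t$ buckets simultaneously. Your per\--bucket Chernoff bound is fine, but neither of your two routes to the conjunction works, as you half\--admit: Stirling/multinomial counting only applies to $z_j\equiv 1$ (for general fractional weights the event is not a multinomial cell), and negative association gives one\--sided bounds in the wrong direction for a two\--sided, non\--monotone event. Since the entire rest of your argument (the limited\--independence transfer) is premised on first establishing this bound for a truly random hash, the proof does not go through as written. The derandomization step has a second, quantitative problem: an (almost) $t$\--wise independent family over $[t]^{p}$ needs seed length $\Omega(t\log t)$, i.e.\ size $2^{\Theta(t\log t)}$, which for $t=\Theta(\log n)$ (the regime used in the paper) is $n^{\Theta(\log\log n)}$ and violates the claimed $2^{O(t)}n^{O(1)}$ bound; the size $2^{O(t)}\poly(p)$ you assert for such a family is not achievable by these means. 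Moreover, even granting such a family, a Bellare--Rompel bound controls each bucket's deviation separately; it does not by itself give the probability of the conjunction of $t$ such events.

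The paper's proof avoids the joint\--concentration obstacle entirely rather than overcoming it. It first hashes into $10t$ buckets pairwise\--independently and uses only an \emph{averaging} argument (Paley--Zygmund plus Markov on $\sum_i X_i^2$) to conclude that at least $2t$ of the $10t$ buckets are individually ``medium\--sized'' (weight in $[0.1,10]$) --- no simultaneity over all buckets is needed at this stage. It then \emph{guesses}, at a cost of $2^{O(t)}$ in the family size, which $t$ medium buckets to keep as the cores of the final buckets and guesses the weights $y_i$ of the remaining ones; the leftover mass is shattered by second\--level pairwise\--independent hashes into pieces of weight at most $2$ (``collision\--freeness,'' which holds for each second\--level hash with constant probability by Markov), and these per\--bucket constant success probabilities are combined into a joint $2^{-O(t)}$ bound via an expander walk over the second\--level seeds --- this is legitimate precisely because the second\--level hashes are (nearly) independent objects, unlike the $t$ buckets of a single global hash. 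A deterministic $30$\--to\--$1$ folding then gives each final bucket one medium core (lower bound $0.1M$ after rescaling) and at most $30$ small pieces (upper bound). If you want to salvage your approach, the key missing idea is some mechanism --- guessing plus folding, or an expander walk over genuinely independent sub\--hashes --- that converts per\--bucket constant\--probability guarantees into a $2^{-O(t)}$ joint guarantee without ever proving simultaneous concentration for a single shared source of randomness.
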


\section{Outline of the Construction} We will outline some simplifying assumptions, and an observation which
``reduces'' constructing hitting sets for Combinatorial shapes
$\cshape(m,n)$ to those for Combinatorial Thresholds $\cthr(m,n)$. It
turns out that these are somewhat simpler to construct, appealing to
the recent results of Gopalan \etal~\cite{GMRZ}.

We first make a standard simplifying observation that we can throughout assume that $m, n, 1/\eps$ can be $n^{O(1)}$. Thus, we only need to construct hitting sets of size $n^{O(1)}$ in this case. From now on, we assume $m,1/\varepsilon = n^{O(1)}$.
\begin{lemma}\label{lem:params}
Assume that for some $c\geq 1$, and $m\leq n^c$, there is an explicit $1/n^c$-HS for $\cshape(m,n)$ of size $n^{O_c(1)}$. Then, for any $m,n,\in\naturals$ and $\varepsilon>0$, there is an explicit $\varepsilon$-HS for $\cshape(m,n)$ of size $\poly(mn/\varepsilon)$.
\end{lemma}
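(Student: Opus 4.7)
The plan is to reduce an arbitrary instance $(m,n,\varepsilon)$ to the regime that the hypothesis directly handles, by enlarging both the alphabet and the dimension to a common value $N=\poly(mn/\varepsilon)$. Let $c_0\ge 1$ be the constant supplied by the hypothesis, and pick an integer $N$ that is a multiple of $m$, satisfies $N\ge n$, and has $N^{c_0}\ge 1/\varepsilon$; for instance $N=m\cdot\lceil\max(n/m,\varepsilon^{-1/c_0})\rceil$ works. Since $N\le N^{c_0}$, applying the hypothesis with parameters $m'=n'=N$ supplies an explicit $1/N^{c_0}$-HS $H'\subseteq [N]^N$ for $\cshape(N,N)$ of size $N^{O_{c_0}(1)}=\poly(mn/\varepsilon)$.

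Next, I would describe a simple ``lift'' from $\cshape(m,n)$ to $\cshape(N,N)$ that preserves acceptance probabilities. Fix once and for all a balanced map $\pi\from [N]\to[m]$ with $|\pi^{-1}(j)|=N/m$ for every $j\in[m]$. Given $f=h(1_{A_1}(x_1),\ldots,1_{A_n}(x_n))\in\cshape(m,n)$, define the lift $f'$ by setting $A'_i=\pi^{-1}(A_i)$ for $i\le n$ and $A'_i=\emptyset$ for $n<i\le N$, and extending $h$ to a symmetric function $h'\from\{0,1\}^N\to\{0,1\}$ such that $h'(y)=h(y_1,\ldots,y_n)$ whenever $y_{n+1}=\cdots=y_N=0$. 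The values of $h'$ on other inputs are irrelevant since $1_{A'_i}\equiv 0$ for $i>n$, and such an $h'$ exists because all the relevant inputs have Hamming weight at most $n$ (so the constraint only fixes $h'$ on weight classes $0,1,\ldots,n$). By construction $f'(x'_1,\ldots,x'_N)=f(\pi(x'_1),\ldots,\pi(x'_n))$, and because $\pi$ pushes uniform $[N]$ to uniform $[m]$, we have $\Pr_{x'\sim[N]^N}[f'(x')=1]=\Pr_{x\sim[m]^n}[f(x)=1]$.

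Finally, I would take $H:=\{(\pi(x'_1),\ldots,\pi(x'_n))\suchthat x'\in H'\}\subseteq[m]^n$ and claim it is the desired $\varepsilon$-HS. It is an explicit multiset of size at most $|H'|=\poly(mn/\varepsilon)$, constructible in time $\poly(mn/\varepsilon)$. If $f\in\cshape(m,n)$ has acceptance probability $\ge\varepsilon\ge 1/N^{c_0}$, then so does $f'$, hence $H'$ hits $f'$ at some $x'$, and by the identity above $H$ hits $f$ at $(\pi(x'_1),\ldots,\pi(x'_n))$. I do not expect any real obstacle here: this is a routine padding/alphabet-blowup reduction, and the only care needed is the divisibility $m\mid N$ and verifying that the extension $h'$ is a bona fide symmetric function on $\{0,1\}^N$, both of which have been arranged above.
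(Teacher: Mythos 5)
Your proof is correct and follows essentially the same padding-and-restrict strategy as the paper: enlarge the instance so the hypothesis applies, hit the lifted shape, and project back. The only difference is that you also blow up the alphabet to $N$ via the balanced map $\pi$, which is superfluous (and forces the divisibility condition $m\mid N$): since the hypothesis covers every $m\leq n^c$, it suffices to pad only the dimension to some $n'\geq \max(m^{1/c},\varepsilon^{-1/c},n)$, keep the alphabet $[m]$, and extend $f$ by empty accepting sets on the new coordinates, exactly as the paper does.
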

\begin{proof}
Fix $c\geq 1$ so that the assumptions of the lemma hold. Note that when $ m > n^c $, we can increase the number of coordinates to $ n' = 
m $. Now, an $ \eps $-HS for $ \cshape(m,n') $ is also an $ \eps $-HS for $ \cshape(m,n) $, 
because we can ignore the final $ n'-n $ coordinates and this will not affect the hitting set 
property. 
Similarly, when $ \eps < 1/n^c $, we can again increase the number of coordinates to $ n' $ that 
satisfies $ \eps \geq 1/(n')^c $ and the same argument follows. In each case, by assumption we have an $\eps$-HS of size $(n')^{O_c(1)} = \poly(mn/\eps)$ and thus, the lemma follows.
\end{proof}

Next, we prove a crucial lemma which shows how to obtain hitting sets for $\cshape(m,n)$
starting with hitting sets for $\cthr(m,n)$.  This reduction crucially uses the fact that $\cshape$ does only `symmetric' tests -- it fails to hold, for instance, for natural ``weighted'' generalizations of $\cshape$.
\begin{lemma}\label{lem:cthr_suff}
Suppose that for every $\eps>0$, there exist an explicit $\eps$-HS for $\cthr(m,n)$ of size $F(m, n, 1/\eps)$.  Then there exists an explicit $\eps$-HS for $\cshape(m,n)$ of size $(n+1) \cdot
F^2(m, n, (n+1)/\eps)$.
\end{lemma}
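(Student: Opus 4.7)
The plan is to exploit the symmetry of a Combinatorial Shape: any $f \in \cshape(m,n)$ has the form $f(x) = h(X_1, \ldots, X_n)$ with $h$ symmetric and $X_i = 1_{A_i}(x_i)$, so $f(x)$ depends only on $\Sigma(x) := \sum_i X_i \in \{0,1,\ldots,n\}$, and hence $\{f=1\} = \{\Sigma \in S\}$ for some $S \subseteq \{0,\ldots,n\}$. If $\Pr[f=1] \geq \eps$, then by pigeonhole there exists $t \in S$ with $\Pr[\Sigma = t] \geq \eps/(n+1)$. This immediately gives $\Pr[\Sigma \geq t] \geq \eps/(n+1)$ and $\Pr[\Sigma \leq t] \geq \eps/(n+1)$. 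Both of these are combinatorial thresholds: the first directly, and the second by rewriting $\Sigma \leq t$ as $\sum_i 1_{A_i^c}(x_i) \geq n-t$, which lies in $\cthr(m,n)$.

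Let $H$ denote the given explicit $\delta$-HS for $\cthr(m,n)$ with $\delta := \eps/(n+1)$, so $|H| = F(m,n,(n+1)/\eps)$. By the two reductions above, $H$ contains some $x$ with $\Sigma(x) \geq t$ and some $y$ with $\Sigma(y) \leq t$. The key geometric step is to interpolate: for $k = 0,1,\ldots,n$, define
\[
z^{(k)}(x,y) := (y_1,\ldots,y_k,x_{k+1},\ldots,x_n).
\]
Then $S(k) := \Sigma(z^{(k)}) = \Sigma(z^{(k-1)}) + (X_k(y_k) - X_k(x_k))$ differs from $S(k-1)$ by at most $1$. Since $S(0) = \Sigma(x) \geq t$ and $S(n) = \Sigma(y) \leq t$, the discrete intermediate value principle produces some $k^\star$ with $S(k^\star) = t$, hence $f(z^{(k^\star)}) = 1$.

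The hitting set is then defined simply as
\[
H^{\cshape} := \bigl\{\, z^{(k)}(x,y) \;:\; x,y \in H,\; 0 \leq k \leq n \,\bigr\},
\]
which is explicitly constructible and has size at most $(n+1) \cdot |H|^2 = (n+1) \cdot F^2(m,n,(n+1)/\eps)$, matching the stated bound. Correctness follows because for any $f \in \cshape(m,n)$ with $\Pr[f=1] \geq \eps$, the argument above produces a specific $(x,y) \in H \times H$ and index $k^\star$ such that $f(z^{(k^\star)}(x,y)) = 1$, and this point lies in $H^{\cshape}$.

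There is no real obstacle here: the whole content is the intermediate-value observation, which crucially relies on the symmetry of $h$ (and would fail for weighted variants of $\cshape$, as the introduction warns). The only small bookkeeping point is verifying that $\{\Sigma \leq t\}$ lies in $\cthr$ by flipping to complementary sets, which allows the same HS $H$ to serve both roles and keeps the size as $F^2$ rather than something larger.
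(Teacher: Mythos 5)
Your proof is correct and follows essentially the same route as the paper: pigeonhole over the $n+1$ possible values of $\sum_i 1_{A_i}(x_i)$ to find a level $t$ hit with probability $\geq \eps/(n+1)$, use the threshold hitting set to find one string on each side of $t$, and apply the discrete intermediate-value argument to the coordinate-by-coordinate interpolation. Your explicit remark that $\{\Sigma \le t\}$ reduces to an upper threshold on the complementary sets $A_i^c$ is a slightly more careful justification than the paper gives (the paper simply treats both $T^+_w$ and $T^-_w$ as members of $\cthr(m,n)$), but the substance is identical.
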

\begin{proof}
Suppose we can construct hitting sets for $\cthr(m,n)$ and parameter
$\eps'$ of size $F(m, n, 1/\eps')$, for all $\eps'>0$. Now consider
some $f \in \cshape(m,n)$, defined using sets $A_i$ and symmetric
function $h$. Since $h$ is symmetric, it depends only on the {\em
  number} of $1$'s in its input. In particular, there is a $W\subseteq [n]\cup\{0\}$ s.t. for
$a\in \{0,1\}^n$ we have $h(a)=1$ iff $|a|\in W$. Now if $\Pr_x[f(x)=1] \ge \eps$, there must exist a $w
\in W$ s.t. 
\[
\prob{x}{|\{i\in [n]\ |\ 1_{A_i}(x_i)=1\}| = w} \ge \frac{\eps}{|W|} \ge \frac{\eps}{n+1}.
\]
Thus if we consider functions in $\cthr(m,n)$ defined by the same
$A_i$, and thresholds $T^+_w$ and $T^-_w$ respectively, we
have that {\em both} have accepting probability at least $\eps/(n+1)$, and
thus an $ \epsilon/(n+1) $-HS $\calS$ for $\cthr(m, n)$ must
have `accepting' elements $y, z \in [m]^n$ for $T^-_w$ and $T^+_w$
respectively.

The key idea is now the following. Suppose we started with the string
$y$ and moved to string $z$ by flipping the coordinates one at a time
-- i.e., the sequence of strings would be:
\[
(y_1~y_2 ~\dots ~ y_n),
~(z_1~y_2 ~\dots ~ y_n), ~(z_1~z_2 ~\dots ~ y_n),\dots, (z_1~z_2 ~\dots ~ z_n).
\]

In this sequence the number of ``accepted'' indices (i.e., $i$ for
which $1_{A_i}(x_i) =1$) changes by at most one in each `step'. To
start with, since $y$ was accepting for $T^-_w$, the number of
accepting indices was at most $w$, and in the end, the number is at least
$w$ (since $z$ is accepting for $T^+_w$), and hence one of the
strings must have precisely $w$ accepting indices, and this string
would be accepting for $f$!

Thus, we can construct an $\eps$-HS for $\cshape(m,n)$ as follows. Let $\mc{S}$ denote an explicit $(\eps/(n+1))$-HS for $\cthr(m,n)$ of size $F(m,n,1/\eps)$. For any $y,z\in\mc{S}$, let $\mc{I}_{y,z}$ be the set of $n+1$ ``interpolated'' strings obtained above. Define $\mc{S}' = \bigcup_{y,z\in \mc{S}} \mc{I}_{y,z}$. As we have argued above, $\mc{S}'$ is an $\eps$-HS for $\cshape(m,n)$. It is easy to check that $\mc{S}'$ has the size claimed.
%
\end{proof}

\paragraph{Overview of the Constructions.} In what follows, we focus on constructing
hitting sets for $\cthr(m,n)$. 
We will describe the construction of two families of hitting sets: the first is for the ``high weight'' case -- $w(f):= \sum_i w_i > C\log n$ for some large constant $C$, and the second for the case $w(f) < C\log n$.  The final hitting set is a union of the ones for the two cases.  

The high-weight case (Section~\ref{sec:highwt}) is conceptually simpler, and illustrates the important tools.  A main tool in both cases is a ``fractional'' version of the perfect hashing lemma, which, though a consequence of folklore techniques, does not seem to be known in this generality (Lemma~\ref{lem:fract-hash}). 

The proof of the low-weight case is technically more involved, so we first present the solution in the special case when all the sets $A_i$ are ``small'', i.e., we have $p_i \leq 1/2$ for all $i$ (Section~\ref{sec:lowwt_lowsize}). This case illustrates the main techniques we use for the general low-weight case. The special case uses the perfect hashing lemma (which appears, for instance in derandomization of ``color coding'' -- a trick introduced in \cite{AYZ}, which our proof in fact bears a resemblance to). 

 The general case (Section~\ref{sec:lowwt_general}), in which $p_i$ are arbitrary, is more technical: here we need to do a ``two level'' hashing. The top level is by dividing into buckets, and in each bucket we get the desired ``advantage'' using a generalization of hitting sets for combinatorial rectangles (which itself uses hashing: Theorem~\ref{thm:comb-rect}).

Finally we describe the main tools used in our construction. The stronger hitting set construction for special combinatorial rectangles is discussed in Section~\ref{sec:comb-rectangles} and the fractional perfect hash family construction is discussed in Section~\ref{sec:fract-hash}. We end with some interesting open problems and a proof of the expander walk lemma follows in the appendix.

\section{Hitting sets for Combinatorial Thresholds} As described above, we first
consider the high-weight case (i.e., $w(f) \ge C\log n$ for some large absolute
constant $C$). Next, we consider the low-weight case, with an additional
restriction that each of the accepting probabilities $p_i \le 1/2$. This serves
as a good starting point to explain the {\em general} low-weight case, which we
get to in Section~\ref{sec:lowwt_general}. In each section, we outline our construction and then analyze it for a generic combinatorial threshold $f:[m]^n\rightarrow \{0,1\}$ (subject to weight constraints) defined using sets $A_1,\ldots,A_n\subseteq [m]$. The theorem we finally prove in the section is as follows.

\begin{theorem} \label{thm:threshold-fool} 
For any constant $c\geq 1$, the following holds. 
Suppose $m,1/\eps \leq n^c $. For the class of functions $ \cthr(m,n)$, there
exists an explicit $\eps$-hitting set of size $n^{O_c(1)}$.
\end{theorem}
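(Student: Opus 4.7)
The approach is to split $\cthr(m,n)$ according to the weight $w(f)=\sum_i p_i q_i$ and construct a separate explicit hitting set for each regime, taking the union at the end. Fix a sufficiently large constant $C=C(c)$ and separate the \emph{high-weight} functions ($w(f)\ge C\log n$) from the \emph{low-weight} functions ($w(f)<C\log n$). Since replacing each $A_i$ by $[m]\setminus A_i$ converts a $T^-_\theta$ threshold into a $T^+_{n-\theta}$ threshold, I only need to hit $f=T^+_\theta$ thresholds.

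For the high-weight regime, I would apply the fractional perfect hashing lemma (Lemma~\ref{lem:fract-hash}) to the weight vector $(w_i)_{i\in[n]}$, taking $t=\Theta(\log(1/\eps))$ buckets; one of the hashes in the family partitions $[n]$ into $B_1,\dots,B_t$ with $W_j:=\sum_{i\in B_j}w_i = \Theta(w(f)/t)$. Within each bucket I would invoke the GMRZ PRG of Theorem~\ref{thm:GMRZ} at a small but \emph{constant} error $\eps_0$. A standard Paley--Zygmund-style anti-concentration estimate for the Bernoulli sum $S_j = \sum_{i\in B_j} 1_{A_i}(x_i)$ shows that $\Pr[S_j \ge \mu_j + c\sqrt{W_j}] = \Omega(1)$ for some absolute constant $c$ (where $\mu_j=\sum_{i\in B_j}p_i$), so a constant fraction of GMRZ seeds produce this ``overshoot'' in bucket $j$. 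I would then stitch per-bucket seeds into a global input using the expander-walk corollary of Lemma~\ref{lemma_afwz}, treating the (padded) GMRZ seed space of size $\poly(n)$ as the common vertex set and the overshoot seeds as the target sets $V_j$ (each of constant density); this yields an explicit list of $\poly(n)$ global candidates. One of these candidates has total overshoot $\sum_j \Omega(\sqrt{W_j}) = \Omega(\sqrt{t\cdot w(f)}) = \Omega(\sqrt{w(f)\log(1/\eps)})$, which is $\ge \theta - \mu$ by a Gaussian-type upper-tail estimate (using $\Pr[T^+_\theta(X)=1]\ge\eps$), and so it hits $f$.

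For the low-weight regime, I would separate indices into $L=\{i:p_i>1/2\}$ and $J=\{i:p_i\le 1/2\}$. Since $w_i\ge q_i/2$ on $L$, we have $\sum_{i\in L}q_i = O(w(f)) = O(\log n)$, matching the hypothesis of Theorem~\ref{thm:comb-rect} with $\rho=O(\log n)$; its strong HS guarantee gives a $\poly(n)$-size explicit list of inputs on $L$, a $1/n^{O_c(1)}$ fraction of which have every $L$-coordinate accepted. Conditioning on such an input reduces $f$ to a threshold on $J$ alone, of the form $T^+_{\theta'}$ with $\theta' = \theta - |L|$. On $J$ all $p_i\le 1/2$, so $\sum_{i\in J}p_i \le 2w(f) = O(\log n)$, and a Chernoff bound forces $\theta' = O(\log n)$ whenever the residual acceptance probability is still $\ge 1/\poly(n)$. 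I would then derandomize a color-coding argument: hash $J$ into $\theta'$ bins using the perfect hash family $\mc{H}_{\text{perf}}^{|J|,\theta'}$ (Lemma~\ref{lemma_perfect_hash}), and for each hash apply the combinatorial rectangle HS of Theorem~\ref{th:LLSZ} to exhibit an input in which each bin contains at least one accepted coordinate, giving $\ge \theta'$ accepted coordinates on $J$ and $\ge \theta$ overall when combined with the $L$-part.

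The main obstacle I anticipate is the high-weight analysis, specifically the anti-concentration step: establishing $\Pr[S_j \ge \mu_j + c\sqrt{W_j}] = \Omega(1)$ when $W_j$ is only a (large) constant and the Bernoullis in $B_j$ may have wildly unequal success probabilities requires care, since Berry--Esseen cannot be applied as a black box under such conditions, and one needs a tail bound that remains sharp up to constant factors for heterogeneous Bernoulli sums. A secondary difficulty is the low-weight general case, where one must verify that concatenating the outputs on $L$ (from Theorem~\ref{thm:comb-rect}) and on $J$ (from the color-coding HS) into a single $[m]^n$-string preserves the threshold guarantees on both parts simultaneously, and that the residual threshold $\theta'$ on $J$ retains $1/\poly(n)$ acceptance probability after conditioning on the $L$-part being fully accepting.
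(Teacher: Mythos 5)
Your high-weight half is essentially the paper's argument (fractional hashing into $\Theta(\log n)$ buckets, constant-error GMRZ per bucket, expander walks to stitch), and the anti-concentration worry you flag is handled exactly as you'd hope: once each bucket's variance is a sufficiently large constant, Berry--Ess\'een \emph{does} apply (Lemma~\ref{lem:berry-ess2}), since the summands are bounded by $1\le\beta\sigma$ with $\beta$ small. Your ``secondary difficulty'' is also a non-issue: $L$ and $J$ live on disjoint coordinates, so the two events are independent, both have probability $\ge 1/n^{O(c)}$, and you may simply take the product of an $L$-hitting set with a $J$-hitting set.

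The genuine gap is in your low-weight $J$-step. The event ``each bin $B_i$ contains at least one accepted coordinate'' is $\bigwedge_{i}\bigvee_{j\in B_i}\{x_j\in A_j\}$ --- an AND of ORs, not a combinatorial rectangle --- so Theorem~\ref{th:LLSZ} does not apply to it. If you instead fix a single rainbow set $T$ (one coordinate per bin) to make it a rectangle, that rectangle's acceptance probability is $\prod_{j\in T}p_j$, which is typically far below $1/\poly(n)$; only the \emph{sum} over the exponentially many rainbow sets is $\ge \eps/2^{O(\theta')}$, so the hypothesis of Theorem~\ref{th:LLSZ} fails for each piece individually. What the argument actually requires is a per-bin guarantee that a known \emph{fraction} (roughly $\mu_i/2$, or $\sum_{j}p_j/2^{O(\rho_i)}$ in the general case) of an explicit sample space produces an accepted coordinate in bin $i$ --- a hitting guarantee alone cannot be summed over the disjoint-rectangle decomposition nor combined across bins. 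This is precisely why the paper uses pairwise independent spaces per bucket (Section~\ref{sec:lowwt_lowsize}, via inclusion--exclusion) or the \emph{stronger} rectangle hitting set of Theorem~\ref{thm:comb-rect} (Section~\ref{sec:lowwt_general}), whose conclusion is a fraction bound $\ge p/2^{O_c(\rho)}$ rather than mere non-emptiness, and then combines the bins with expander walks on graphs of \emph{varying} degrees matched to guessed per-bin probabilities. Your proposal omits this entire mechanism for the $J$-part, and without it the union over hash functions of single LLSZ outputs does not yield a hitting set.
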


The main theorem, which we state below, follows directly from the statements of Theorem \ref{thm:threshold-fool} and Lemmas \ref{lem:params} and \ref{lem:cthr_suff}.

\begin{theorem}\label{thm:main-formal}
For any $m,n\in\naturals$ and $\eps > 0$, there is an explicit $\eps$-hitting set for $\cshape(m,n)$ of size $\poly(mn/\eps)$.
\end{theorem}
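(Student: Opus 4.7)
\textbf{Proof proposal for Theorem \ref{thm:main-formal}.} The plan is essentially to observe that the main theorem is a bookkeeping consequence of three building blocks already in place: Theorem~\ref{thm:threshold-fool} (hitting sets for $\cthr(m,n)$ in the restricted parameter regime $m,1/\eps \leq n^c$), Lemma~\ref{lem:cthr_suff} (lifting a $\cthr$-hitting set to a $\cshape$-hitting set via the one-coordinate-flip interpolation argument), and Lemma~\ref{lem:params} (lifting a $\cshape$-hitting set in the $m,1/\eps \leq n^c$ regime to all parameters). There is no new combinatorial content required here; the only substantive task is to verify that when these reductions are composed the parameters line up so that the output size is polynomial in $mn/\eps$.

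First I would fix a constant $c_0 \geq 1$ large enough (say $c_0 = 2$) and invoke Theorem~\ref{thm:threshold-fool} with constant $c := c_0 + 2$ to produce, for every $m \leq n^{c_0}$ and every $\eps' \geq 1/n^{c_0+2}$, an explicit $\eps'$-hitting set $\calS_{\cthr}$ for $\cthr(m,n)$ of size $F(m,n,1/\eps') = n^{O_c(1)}$. Next I would feed this into Lemma~\ref{lem:cthr_suff}: setting the target accuracy $\eps := 1/n^{c_0}$ and noting that $(n+1)/\eps \leq n^{c_0+2}$ whenever $n$ is at least a constant, the lemma yields an explicit $\eps$-hitting set $\calS_{\cshape}$ for $\cshape(m,n)$ of size at most $(n+1)\cdot F^2(m,n,(n+1)/\eps) = n^{O_{c_0}(1)}$, still under the assumption $m \leq n^{c_0}$ and $\eps = 1/n^{c_0}$.

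At this point the hypothesis of Lemma~\ref{lem:params} is met (with constant $c_0$), so a single invocation of that lemma upgrades $\calS_{\cshape}$ to an explicit $\eps$-hitting set for $\cshape(m,n)$ of size $\poly(mn/\eps)$ for \emph{all} choices of $m, n, \eps$. Small values of $n$ can be dispatched by brute force (the hitting set $[m]^n$ itself has size $\poly(mn/\eps)$ in this regime), so this does not obstruct the argument.

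The only non-routine step in this plan is the invocation of Theorem~\ref{thm:threshold-fool}, whose proof is the technical heart of the paper and is done separately. For the present statement, the only real obstacle would have been if the quadratic blow-up $F \mapsto F^2$ in Lemma~\ref{lem:cthr_suff} or the parameter inflation $1/\eps \mapsto (n+1)/\eps$ there had pushed us out of the regime in which Theorem~\ref{thm:threshold-fool} guarantees a polynomial-size hitting set; but since that theorem tolerates any polynomial dependence $1/\eps \leq n^c$ and its output size is $n^{O_c(1)}$, squaring preserves the polynomial bound and the composition goes through cleanly.
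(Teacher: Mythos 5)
Your proposal is correct and matches the paper's own proof, which states that Theorem~\ref{thm:main-formal} "follows directly from the statements of Theorem~\ref{thm:threshold-fool} and Lemmas~\ref{lem:params} and~\ref{lem:cthr_suff}" --- exactly the composition you describe, with your parameter bookkeeping (choosing the constant in Theorem~\ref{thm:threshold-fool} large enough to absorb the $(n+1)/\eps$ inflation from Lemma~\ref{lem:cthr_suff} before applying Lemma~\ref{lem:params}) being the intended, if unstated, verification.
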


\subsection{High weight case}
 \label{sec:highwt}

In this section we will prove the following:

\begin{theorem}\label{thm_high_wt}
For any $c\geq 1$, there is a $C>0$ s.t. for $m,1/\varepsilon\leq n^c$, there is an explicit $\varepsilon$-HS of size $n^{O_c(1)}$ for the class of functions in $ \cthr(m,n)$ of weight at least $C\log n$.
\end{theorem}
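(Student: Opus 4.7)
The plan is to prove Theorem~\ref{thm_high_wt} for $T^+_\theta$; the $T^-_\theta$ case is handled symmetrically, and the final HS is the union of the two constructions. The high-level strategy is a three-layer construction: (i) fractional perfect hashing (Lemma~\ref{lem:fract-hash}) partitions $[n]$ into $t = \Theta_c(\log n)$ buckets of roughly balanced weight, (ii) the GMRZ PRG (Theorem~\ref{thm:GMRZ}) with constant error $\varepsilon_0$ serves as the inner generator inside each bucket, and (iii) an expander walk (Lemma~\ref{lemma_afwz}) across buckets stitches the bucket seeds together into a single string. The key quantitative idea is that when the per-bucket weight is a sufficiently large constant, Berry--Esseen forces the ``good set'' of seeds---those whose output makes the bucket sum exceed its mean by at least one standard deviation---to have constant density, so the expander walk hits a good set in every bucket simultaneously with inverse-polynomial probability.

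First I would fix the parameters. Let $\mu = \sum_i p_i$ and $\sigma^2 = w(f)$; the hypothesis $\Pr[T^+_\theta(X)=1] \ge \varepsilon \ge 1/n^c$ forces $\Delta := \theta - \mu \le O(\sigma\sqrt{c\log n})$ by Chernoff (and the case $\Delta \le 0$ is absorbed by the same construction). I would then apply Lemma~\ref{lem:fract-hash} with $z_i = w_i$ and $t = C_1 \log n$, where $C_1 = C_1(c)$ is large enough for the concluding inequality below and $C \ge 10 C_1$ is chosen so that $M := w(f)/t$ exceeds a sufficiently large absolute constant $M_0$. The lemma returns an explicit hash family $\mathcal{H}$ of size $n^{O(1)}$ in which an $n^{-O(1)}$ fraction of hashes $h$ partition $[n]$ into buckets $B_j := h^{-1}(j)$ with sub-weights $w_{B_j} := \sum_{i\in B_j} w_i \in [M/100,\, 10M]$.

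For the construction, within each bucket $B_j$ I would instantiate $\mathcal{G}_{GMRZ}^{m,|B_j|,\varepsilon_0}$ with constant $\varepsilon_0 = 1/100$, which has seed length $s = O(\log(mn))$; take $V = \{0,1\}^s$ with $N = |V| = \poly(n)$, and let $\mathcal{G}$ be an explicit $(N,D,\lambda)$-expander (Fact~\ref{fact_expl_exp}) with $\lambda$ a sufficiently small absolute constant and $D = O(1)$. For every $h \in \mathcal{H}$ and every walk $(u, y_1, \ldots, y_t)$ on $\mathcal{G}$, I build the string $X \in [m]^n$ by setting each $X_i$ to the coordinate of $\mathcal{G}_{GMRZ}^{m,|B_{h(i)}|,\varepsilon_0}(y_{h(i)})$ indexed by $i$'s position inside $B_{h(i)}$. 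The hitting set is the collection of all such strings; its size is $|\mathcal{H}| \cdot N \cdot D^t = n^{O_c(1)}$.

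For correctness, fix a $T^+_\theta$ with $\Pr \ge \varepsilon$ and a ``good'' hash $h$ from the family. Writing $S_j = \sum_{i\in B_j} 1_{A_i}(X_i)$, $\mu_j = \sum_{i\in B_j} p_i$, and $\sigma_j = \sqrt{w_{B_j}}$, the third-moment Berry--Esseen bound for sums of independent Bernoullis gives Kolmogorov distance $O(1/\sigma_j) = O(1/\sqrt{M})$ to a Gaussian, so for $M_0$ large enough, $\Pr_{X \sim [m]^{|B_j|}}[S_j \ge \mu_j + \sigma_j] \ge 0.1$. The GMRZ guarantee then shows that the set $V_j \subseteq V$ of seeds $y$ for which $S_j(\mathcal{G}_{GMRZ}^{m,|B_j|,\varepsilon_0}(y)) \ge \mu_j + \sigma_j$ has density at least $0.09$, so Lemma~\ref{lemma_afwz} (with $p_j = 0.09$ and small enough $\lambda$) guarantees that some walk lands in $V_j$ at step $j$ for every $j$, producing a string $X^*$ with $S_j(X^*) \ge \mu_j + \sigma_j$ in every bucket. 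Summing and using $w_{B_j} \ge M/100$ gives $\sum_i 1_{A_i}(X^*_i) \ge \mu + \sum_j \sigma_j \ge \mu + t\sqrt{M/100} = \mu + \sqrt{C_1 w(f) \log n / 100} \ge \mu + \Delta = \theta$ once $C_1$ is chosen larger than the constant hidden in the Chernoff bound on $\Delta$; thus $T^+_\theta(X^*) = 1$. The main technical obstacle is threading the constants together in the correct order ($c \to C_1 \to M_0 \to C$) so that simultaneously the per-bucket Berry--Esseen error is small, the GMRZ-error-adjusted density of $V_j$ stays bounded below by a positive constant, and the balanced-bucket lower bound on $\sum_j \sigma_j$ exceeds the Chernoff deviation $\Delta$.
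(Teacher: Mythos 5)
Your proposal is correct and follows essentially the same route as the paper: fractional perfect hashing into $\Theta(\log n)$ weight-balanced buckets, a Berry--Esseen argument showing a constant fraction of GMRZ seeds achieve the needed per-bucket advantage over the mean, and an expander walk to realize all buckets simultaneously. The only (immaterial) difference is in how the constants are allocated---you take $C_1(c)\log n$ buckets and ask for a one-standard-deviation advantage per bucket (an absolute-constant success probability), whereas the paper uses exactly $\log n$ buckets and asks for a $\Theta(\sqrt{c})$-standard-deviation advantage (success probability a constant depending on $c$); both bookkeeping schemes close the gap $\theta-\mu \le O(\sqrt{cW\log n})$.
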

As discussed earlier, we wish to construct hitting sets for combinatorial shapes $f$ where the associated symmetric function is either 
$T^+_{\theta}$ or $T^-_{\theta}$, for $\theta$ s.t. the
probability of the event for independent, perfectly random $x_i$ is at
least $1/n^c$. For convenience, define $\mu := p_1 + p_2 +
\dots + p_n$, and $W := w_1 + w_2 + \dots w_n$. We have $W > C\log n$
for a large constant $C$ (it needs to be {\em large} compared to $c$,
as seen below). First, we have the following by Chernoff bounds.

\begin{claim}
If $\Pr_x[T^+_{\theta}(\sum_{i\in [n]}1_{A_i}(x_i))=1] > \eps~ (\ge 1/n^c)$, we have $\theta \le \mu +
2\sqrt{cW\log
  n}$.
\end{claim}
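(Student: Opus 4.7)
The plan is to prove the contrapositive: if $\theta > \mu + 2\sqrt{cW\log n}$, then the acceptance probability is strictly less than $1/n^c$. This is essentially a tail bound for the sum of independent bounded random variables $S := \sum_i X_i = \sum_i 1_{A_i}(x_i)$, where the $X_i$ are independent Bernoullis with means $p_i$, so that $\E[S] = \mu$ and $\mathrm{Var}(S) = W$.

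I would invoke the standard Bernstein/Chernoff tail bound in the form
\[
\Pr[S \ge \mu + t] \le \exp\!\left(-\frac{t^2}{2W + 2t/3}\right).
\]
Setting $t := 2\sqrt{cW\log n}$, we get $t^2 = 4cW\log n$, so the exponent becomes
\[
-\frac{4cW\log n}{2W + (4/3)\sqrt{cW\log n}}.
\]
The key observation is that, by hypothesis, $W \ge C\log n$ with $C$ a large constant compared to $c$. This makes $t$ small relative to $W$: concretely, for $C$ sufficiently large (say $C \ge 9c$), we have $t \le 2W/3$, so $2W + 2t/3 \le 3W$. Plugging this in, the bound on $\Pr[S \ge \mu + t]$ becomes $\exp(-\tfrac{4c}{3}\log n) = n^{-4c/3} < n^{-c} \le \eps$.

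Hence if $\theta > \mu + 2\sqrt{cW\log n}$, then $\Pr[T^+_\theta(S)=1] = \Pr[S \ge \theta] \le \Pr[S \ge \mu + 2\sqrt{cW\log n}] < 1/n^c$, contradicting the hypothesis. This gives $\theta \le \mu + 2\sqrt{cW\log n}$, as claimed. There is no real obstacle here: the only thing to double-check is that the lower bound $W \ge C\log n$ is strong enough to place us in the "Gaussian regime" of the Chernoff tail (so that the linear-in-$t$ term in the denominator is dominated by $W$), which is exactly why the constant $C$ in the statement of Theorem~\ref{thm_high_wt} is taken large compared to $c$.
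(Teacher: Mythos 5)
Your proof is correct and matches the paper's approach: the paper simply asserts the claim ``by Chernoff bounds,'' and your Bernstein-inequality argument (using $W \ge C\log n$ with $C$ large relative to $c$ to stay in the Gaussian regime of the tail) is precisely the computation being invoked.
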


\paragraph{Outline.} Let us concentrate on hitting sets for combinatorial shapes that use symmetric functions of the form
$T^+_{\theta}$ (the
case $T^-_{\theta}$ follows verbatim).  The main idea is
the following: we first divide the indices $[n]$ into $\log n$ buckets
using a hash function $h$ (from a {\em fractional
  perfect hash family}, see Lemma~\ref{lem:fract-hash}). This is to
ensure that the $w_i$ get distributed somewhat uniformly. Second, we aim to
obtain an {\em advantage} of roughly $2\sqrt{ \frac{cW}{\log n}}$ in
each of the buckets (advantage is w.r.t. the mean in each
bucket): i.e., for each $i\in [\log n]$, we choose the indices $x_j$ ($j\in h^{-1}(i)$) s.t. we get 
$$\sum_{j\in h^{-1}(i)}1_{A_j}(x_j) \geq \sum_{j\in h^{-1}(i)}p_j + 2\sqrt{cW/\log n}$$

with reasonable probability. Third, we ensure that the above happens for all buckets \emph{simultaneously} (with probability $>0$) so that the advantages add up, giving a total
advantage of $2\sqrt{cW\log n}$ over the mean, which is what we
intended to obtain. In the second step (i.e., in each bucket), we
can prove that the desired advantage occurs with {\em constant} probability for \emph{uniformly randomly and independently} chosen $x_j\in [m]$ and 
then derandomize this choice by the result of Gopalan \etal~\cite{GMRZ} (Theorem \ref{thm:GMRZ}). Finally, in the third step, we
cannot afford
to use independent random bits in different buckets
(this would result in a seed length of $\Theta(\log^2 n)$) -- thus we
need to use expander walks to save on randomness. 

\paragraph{Construction and Analysis.} Let us now describe
the three steps in detail. We note that these steps parallel the results 
of Rabani and Shpilka~\cite{RabaniShpilka}.

The first step is straightforward: we pick a hash function from a perfect fractional
hash family $\calH_{\text{frac}}^{n, \log n}$. From
Lemma~\ref{lem:fract-hash}, we obtain
\begin{claim}\label{claim:hash-result}
For every set of weights $w$, there exists an $h \in \calH^{n, \log
n}_{\text{frac}}$ s.t. for
all $1 \le i \le \log n$, we have $\frac{W}{100\log n} \le \sum_{j \in
h^{-1}(i)} w_j \le \frac{100 W}{\log n}$.
\end{claim}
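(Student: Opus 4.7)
The plan is to apply Lemma~\ref{lem:fract-hash} directly, with a careful choice of parameters. I would set $t := \log n$ (treating it as $\lceil \log n\rceil$ where needed) and, for each $j\in [n]$, take $z_j := w_j = p_j q_j$. Since $p_j\in[0,1]$ we have $w_j\in[0,1/4]\subseteq[0,1]$, so the $z_j$ are valid inputs. With this choice, $\sum_j z_j = W$ and the parameter $M$ of the lemma equals $W/\log n$.

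Next I would verify the hypothesis $\sum_j z_j \geq 10t$ of Lemma~\ref{lem:fract-hash}. This reduces to $W \geq 10\log n$, which follows from the high-weight assumption $W \geq C\log n$ provided we take the constant $C$ in Theorem~\ref{thm_high_wt} to be at least $10$ (which we are free to do, since $C$ is chosen in the statement of the theorem and only needs to be sufficiently large).

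With the hypotheses of Lemma~\ref{lem:fract-hash} satisfied, the lemma guarantees
\[
\Pr_{h\in \calH^{n,\log n}_{\text{frac}}}\!\left[\forall i\in[\log n],\ \sum_{j\in h^{-1}(i)} w_j \in \left[\tfrac{W}{100\log n},\ \tfrac{10W}{\log n}\right]\right] \geq \frac{1}{2^{O(\log n)}} > 0.
\]
Since this probability is strictly positive, at least one $h\in \calH^{n,\log n}_{\text{frac}}$ achieves the desired bucket balance simultaneously across all $\log n$ buckets. Observing that $[W/(100\log n),\, 10W/\log n] \subseteq [W/(100\log n),\, 100W/\log n]$, such an $h$ satisfies the bounds claimed.

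There is no substantive obstacle; the claim is essentially a translation of Lemma~\ref{lem:fract-hash} into the language of the weights $w_j$. The only point requiring attention is ensuring the input condition $W\geq 10t$, which is precisely why the high-weight regime requires $C$ to be a sufficiently large absolute constant.
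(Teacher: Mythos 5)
Your proof is correct and is essentially the paper's own argument: the paper derives Claim~\ref{claim:hash-result} directly from Lemma~\ref{lem:fract-hash} with $t=\log n$ and $z_j=w_j$, exactly as you do. Your additional care in checking the hypothesis $W\geq 10t$ via the high-weight assumption $W\geq C\log n$ (with $C$ large enough) is the right observation and matches the intended reading of the claim in context.
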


The rest of the construction is done starting with each $h \in
\calH_{\text{frac}}^{n, \log n}$.
Thus for analysis, suppose that we are working with an $h$
satisfying the inequality from the above claim. For the second step, we first
prove
that for independent random $x_i \in [m]$, we have a constant
probability of getting an {\em advantage} of $2\sqrt{\frac{cW}{\log
    n}}$ over the mean in each bucket.

\begin{lemma}\label{lem:berry-ess2}
Let $S$ be the sum of $k$ independent random variables $X_i$, with
$\Pr[X_i=1] =p_i$, let $c'\geq 0$ be a constant, and let $\sum_i p_i(1-p_i) \ge
\sigma^2$, for some $\sigma$ satisfying $\sigma \ge 20e^{c'^2}$. Define $\mu :=
\sum_i p_i$. Then $\Pr[S > \mu + c'\sigma] \ge \alpha$, and
$\Pr[S < \mu - c'\sigma] \ge \alpha$, for some constant $\alpha$ depending on
$c'$.
\end{lemma}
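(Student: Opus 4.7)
The plan is to view this as a standard quantitative central limit theorem: the Bernoulli sum $S$, once standardized, is close in distribution to a standard Gaussian, so it hits any fixed tail event with probability bounded below by the Gaussian tail minus the Berry--Esseen error. Concretely, I set $V := \sum_i p_i(1-p_i)$ (so $V \ge \sigma^2$ by hypothesis) and $T := (S-\mu)/\sqrt{V}$, which is a sum of independent mean-zero random variables of total variance $1$. Because $|X_i - p_i| \le 1$, each third absolute moment is bounded by the second: $\mathbb{E}|X_i - p_i|^3 \le p_i(1-p_i)$, so $\sum_i \mathbb{E}|X_i - p_i|^3 \le V$. The Berry--Esseen theorem then yields
\[
\sup_{t\in\mathbb{R}} \bigl|\Pr[T \le t] - \Phi(t)\bigr| \;\le\; \frac{C_0}{\sqrt{V}} \;\le\; \frac{C_0}{\sigma},
\]
where $C_0 \le 1$ is the absolute Berry--Esseen constant and $\Phi$ is the standard Gaussian CDF.

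Next I translate the desired tail event on $S$ to one on $T$: since $V \ge \sigma^2$, we have $c'\sigma/\sqrt{V} \le c'$, so
\[
\Pr[S > \mu + c'\sigma] \;=\; \Pr\!\left[T > \tfrac{c'\sigma}{\sqrt{V}}\right] \;\ge\; \Pr[T > c'] \;\ge\; \bar\Phi(c') - \frac{C_0}{\sigma},
\]
and by symmetry the analogous bound holds for $\Pr[S < \mu - c'\sigma]$ via $\Phi(-c') = \bar\Phi(c')$. Thus it only remains to show that the Berry--Esseen slack is dominated by the Gaussian tail probability under the assumption $\sigma \ge 20 e^{c'^2}$.

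For the Gaussian tail I will use a single crude bound that works for all $c' \ge 0$, namely $\bar\Phi(c') \ge \tfrac{1}{10}e^{-c'^2}$ (verified by checking $c'=0$ and using Mill's ratio $\bar\Phi(c') \ge \tfrac{1}{\sqrt{2\pi}}\cdot \tfrac{c'}{c'^2+1}e^{-c'^2/2}$ for larger $c'$, which is even stronger). The hypothesis $\sigma \ge 20 e^{c'^2}$ then gives $C_0/\sigma \le 1/(20 e^{c'^2}) \le \tfrac{1}{2}\bar\Phi(c')$, so combining with the display above,
\[
\Pr[S > \mu + c'\sigma] \;\ge\; \tfrac{1}{2}\bar\Phi(c') \;\ge\; \tfrac{1}{20} e^{-c'^2} \;=:\; \alpha,
\]
with the identical bound for the lower tail; this is a constant depending only on $c'$, as required.

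The only genuine subtlety is the interplay between $V$ and $\sigma^2$: $V$ can be strictly larger than $\sigma^2$, so the renormalization has to be done with respect to $V$ rather than $\sigma$, and one has to check that this only helps (i.e., the threshold $c'\sigma/\sqrt{V}$ on $T$ is no larger than $c'$). Everything else is routine. I do not expect any real obstacle; the constants $20$ and $e^{c'^2}$ in the hypothesis are chosen precisely to absorb $C_0$ and the exponential decay of $\bar\Phi(c')$ with room to spare, so any reasonable explicit Berry--Esseen constant and Gaussian tail estimate suffice.
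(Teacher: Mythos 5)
Your proof is correct and follows essentially the same route as the paper's: apply Berry--Esseen to the centered variables, lower-bound the Gaussian tail by $e^{-c'^2}/10$, and use the hypothesis $\sigma \ge 20e^{c'^2}$ to absorb the Berry--Esseen error, arriving at the same $\alpha = e^{-c'^2}/20$. The only (minor, favorable) difference is that you normalize by the true variance $V$ and explicitly check that $c'\sigma/\sqrt{V}\le c'$, a point the paper's proof passes over silently.
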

The proof is straightforward, but
it is instructive to note that in general, a random variable (in this
case, $S$) need not deviate ``much more'' (in this case, a $c'$ factor more)
than its standard deviation:
we have to use the fact that $S$ is the sum of independent r.v.s. This is
done by an application of the Berry-Ess\'een theorem~\cite{probability-text}.

\begin{proof}
 We recall the standard Berry-Ess\'een theorem~\cite{probability-text}.
\begin{fact}[Berry-Esseen]
 \label{fact_BE}
Let $Y_1,\dots,Y_n$ be independent random variables satisfying
$\forall i$, $\E Y_i = 0$, $\sum\E Y_i^2 = \sigma^2$ and $\forall i$,
$|Y_i|\leq\beta\sigma$. Then the following error bound holds for any $t\in \R$,
\[
 \left|\Pr\left[\sum Y_i > t\right] - \Pr\left[N(0,\sigma^2) > t\right]\right| \leq \beta.
\]
\end{fact}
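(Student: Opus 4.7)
The plan is to derive the Berry--Esseen bound via the characteristic function method combined with Esseen's smoothing inequality, which is the classical route in textbook treatments such as Feller (Volume II). First I would normalize by setting $Z_i := Y_i/\sigma$, so that $\sum_i \E[Z_i^2] = 1$ and $|Z_i|\le \beta$ almost surely. The claim then reduces to showing
\[
\sup_{s\in\R}\, \left|\Pr\!\left[\textstyle\sum_i Z_i > s\right] - \Pr\!\left[N(0,1) > s\right]\right| = O(\beta),
\]
with the leading constant either absorbed in asymptotic notation or tracked explicitly if one insists on the coefficient $1$ as stated.

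The central tool is Esseen's smoothing lemma: for CDFs $F,G$ where $G$ has density bounded by $K$ (here $K=1/\sqrt{2\pi}$ for the standard normal), and for any $T>0$,
\[
\sup_s |F(s)-G(s)| \le \frac{1}{\pi}\int_{-T}^{T}\left|\frac{\phi_F(t)-\phi_G(t)}{t}\right|dt + \frac{24K}{\pi T}.
\]
Choosing $T = 1/(c\beta)$ for a suitably small constant $c$ makes the second term $O(\beta)$, so the bulk of the work is to show the first term is also $O(\beta)$.

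To bound the first term, I would Taylor-expand each $\phi_{Z_i}(t) = 1 - \sigma_i^2 t^2/2 + r_i(t)$, where $\sigma_i^2 := \E[Z_i^2]$ and the remainder satisfies $|r_i(t)| \le \frac{1}{6}|t|^3 \E|Z_i|^3 \le \frac{\beta}{6}|t|^3 \sigma_i^2$ using the uniform bound $|Z_i|\le \beta$. Summing logarithms and applying $\log(1+x) = x + O(x^2)$ for $|x|$ small, this gives $\log\prod_i \phi_{Z_i}(t) = -t^2/2 + O(\beta|t|^3)$ uniformly for $|t|\le T$, once $c$ is chosen small enough that $\beta|t|^3$ stays bounded. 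Exponentiating yields $|\phi_{\sum Z_i}(t) - e^{-t^2/2}| \le O(\beta|t|^3 e^{-t^2/4})$, and dividing by $|t|$ and integrating over $[-T,T]$ produces $O(\beta)$ via the Gaussian moment $\int |t|^2 e^{-t^2/4}\,dt = O(1)$.

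The main obstacle is the careful handling of the Taylor remainder in controlling $\log\prod_i \phi_{Z_i}(t)$ across the entire range $|t|\le T$: one must verify that each factor $\phi_{Z_i}(t)$ stays bounded away from $0$ so the logarithm is well-defined, and that the cumulative error from the $O(x^2)$ term in $\log(1+x)$ sums to $O(\beta|t|^3)$ rather than something larger (this uses the fact that $\sum_i \sigma_i^4 \le \beta^2$). These are standard but tedious bookkeeping steps; since Berry--Esseen is a well-known classical theorem, the paper simply states it as a fact with a textbook citation rather than reproducing the proof in full.
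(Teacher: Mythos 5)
The paper does not prove this statement at all: it is invoked as a classical fact, with a citation to a probability text, inside the proof of Lemma~\ref{lem:berry-ess2}. So your sketch is not an alternative route to anything in the paper; it is a reconstruction of the standard textbook proof (Esseen's smoothing inequality plus Taylor expansion of characteristic functions), and as such it is essentially sound. Two caveats are worth recording. First, the smoothing argument as you run it yields a bound $C\beta$ for some absolute constant $C$ considerably larger than $1$, whereas the fact as stated has constant exactly $1$; to get that literal form one should instead quote the sharp Berry--Esseen theorem with constant $C_0<1$ (e.g.\ $C_0\le 0.56$) combined with $\E|Y_i|^3\le\beta\sigma\,\E Y_i^2$. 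For the paper's only application this is immaterial: in Lemma~\ref{lem:berry-ess2} an unspecified absolute constant merely rescales the hypothesis $\sigma\ge 20e^{c'^2}$.

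Second, your bookkeeping around the cutoff is stated backwards. With $T=1/(c\beta)$ you need $c$ \emph{large} (equivalently, $T$ a small constant multiple of $1/\beta$), and the relevant property on $|t|\le T$ is not that $\beta|t|^3$ stays bounded --- it does not, uniformly in $\beta$, near $|t|=T$ --- but that $\beta|t|^3\le t^2/c$, so the accumulated cubic error is a small fraction of $t^2/2$ and the factor $e^{-t^2/4}$ survives exponentiation; likewise $\sigma_i^2t^2\le\beta^2t^2\le 1/c^2$ is what keeps each $\phi_{Z_i}(t)$ bounded away from $0$ so the logarithms are legitimate (your observation $\sum_i\sigma_i^4\le\beta^2$ then controls the quadratic error in $\log(1+x)$, giving a $\beta^2t^4\le\beta|t|^3/c$ term). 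With those corrections the sketch goes through and matches the classical proof the paper's citation points to.
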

We can now apply this to $Y_i := X_i-p_i$ (so as to make $\E
Y_i=0$). Then $\E Y_i^2 = p_i(1-p_i)^2 + (1-p_i)p_i^2 = p_i(1-p_i)$,
thus the total variance is still $\ge \sigma^2$. Since $|Y_i|\leq 1$ for all $i\in [n]$, this means we have the
condition $|Y_i|\le \beta \sigma$ for $\beta \le e^{-c'^2}/20$. Now for the
Gaussian, a computation shows that we have $\Pr[ N(0,\sigma^2) > c'\sigma] >
e^{-c'^2}/10$.  Thus from our bound on $\beta$, we get $\Pr[\sum Y_i > c'\sigma] >
e^{-c'^2}/20$, which we pick to be $\alpha$. This proves the lemma.
\end{proof}

Assume now that we choose $x_1,\ldots,x_n\in [m]$ independently and uniformly at random. For each bucket $i\in [\log n]$ defined by the hash function $h$, we let $\mu_i = \sum_{j\in h^{-1}(i)}p_j$ and $W_i = \sum_{j\in h^{-1}(i)}p_j(1-p_j) = \sum_{j\in h^{-1}(i)}w_j$. Recall that Claim \ref{claim:hash-result} assures us that for $i\in [\log n]$, $W_i \geq W/100\log n \geq C/100$. Let $X^{(i)}$ denote $\sum_{j\in h^{-1}(i)}1_{A_j}(x_j)$. Then, for any $i\in [\log n]$, we have
\[
\prob{}{X^{(i)} > \mu_i + 2\sqrt{\frac{cW}{\log n}}} \geq \prob{}{X^{(i)} > \mu_i + \sqrt{400c}\cdot\sqrt{W_i}}
\]

By Lemma \ref{lem:berry-ess2}, if $C$ is a large enough constant so that $W_i \geq C/100 \geq 20 e^{400c}$, then for uniformly randomly chosen $x_1,\ldots,x_n\in [m]$ and each bucket $i\in [\log n]$, we have $\prob{}{X^{(i)} \geq \mu_i+ 2\sqrt{cW/\log n}}\geq \alpha$, where $\alpha >0$ is some fixed constant depending on $c$. When this event occurs for \emph{every} bucket, we obtain $\sum_{j\in [n]}1_{A_j}(x_j) \geq \mu + 2\sqrt{cW\log n} \geq \mu + \theta$. We now show how to sample such an $x\in [m]^n$ with a small number of random bits.

Let $\mc{G}:\{0,1\}^s\rightarrow [m]^n$ denote the PRG  of Gopalan et al.~\cite{GMRZ} from Theorem \ref{thm:GMRZ} with parameters $m,n,$ and error $\alpha/2$ i.e. $\mc{G}_{GMRZ}^{m,n,\alpha/2}$. Note that since $\alpha$ is a constant depending on $c$, we have $s = O_c(\log n)$. Moreover, since we know that the success probability with independent random $x_j$ ($j\in h^{-1}(i)$) for obtaining the desired advantage is at least $\alpha$, we have for any $i\in [\log n]$ and $y^{(i)}$ randomly chosen from $\{0,1\}^s$,
\[
\prob{x^{(i)} = \mc{G}(y^{(i)})}{X^{(i)} > \mu_i + 2\sqrt{\frac{cW}{\log n}}} \geq \alpha/2
\]
This only requires seedlength $O_c(\log n)$ per bucket.

Thus we are left with the third step: here for each bucket $i\in [\log n]$, we would like to
have (independent) seeds which generate the corresponding $x^{(i)}$ (and each of these PRGs
has a seed length of $O_c(\log n)$). Since we cannot afford $O_c(\log^2 n)$ total
seed length, we instead do
the following: consider the PRG $\mc{G}$ defined above.  As mentioned above, since $\alpha = \Omega_c(1)$,
the seed length needed here 
is only $O_c(\log n)$. Let $\calS$ be the range of $\mc{G}$ (viewed as a multiset of
strings: $\calS \subseteq [m]^n$).  From
the above, we have that for the $i$th bucket, the probability $x \sim
\calS$ exceeds the threshold on indices in bucket $i$ is at least
$\alpha/2$.
Now there are $\log n$ buckets, and in each bucket, the probability of
`success' is at least $\alpha/2$.  We can
thus appeal to the `expander walk' lemma of Alon \etal~\cite{AFWZ} (see
preliminaries, Lemma~\ref{lemma_afwz} and the corollary following it).

This means the following: we consider an explicitly constructed expander
on a graph with vertices being the elements of $\calS$, and the degree being
a constant depending on $\alpha$). We then perform a
random walk of length $\log n$ (the number of buckets). Let $s_1, s_2, \dots,
s_{\log n}$ be the strings (from $\calS$) we see in the walk. We form a new string in $[m]^n$
by picking values for indices in bucket $i$, from the string $s_i$. By the
Lemma \ref{lemma_afwz}, with non-zero probability, this
will succeed for {\em all} $1 \le i \le \log n$, and this gives the desired
advantage.

The seed length for generating the walk is $O(\log |\calS|)+O_c(1) \cdot \log n =
O_c(\log n)$. Combining (or
in some sense, {\em composing}) this with the hashing earlier completes
the construction.
 \subsection{Thresholds with small weight and small sized sets}
\label{sec:lowwt_lowsize}

We now prove Theorem~\ref{thm:threshold-fool} for the case of thresholds $ f $
satisfying $w(f) = O(\log n)$. Also we will make the simplifying assumption
(which we will get rid of in the next sub-section) that the underlying subsets
of $f$, $A_1,\ldots, A_n\subseteq[m]$ are of small size.

\begin{theorem} \label{thm:low_wt_low_size} Fix any $c \ge 1$. For any $m =
n^c$, there exists an explicit $1/n^c$-HS $\mathcal{S}_\text{low,1}^{n,c}
\subseteq [m]^n$ of size $n^{O_{c}(1)}$ for  functions $f \in \cthr(m,n)$ s.t.
$w(f)\leq c\log n$ and $p_i\leq 1/2$ for each $i \in [n]$. \end{theorem}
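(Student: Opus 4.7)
Since $p_i \le 1/2$ forces $w_i = p_iq_i \ge p_i/2$, the hypothesis $w(f) \le c\log n$ yields $\mu := \sum_i p_i \le 2c\log n$; standard Chernoff bounds on $X = \sum_i \mathbf{1}_{A_i}(x_i)$ then force the relevant threshold $\theta$ (for which $\Pr[f=1]\ge 1/n^c$) to be $O_c(\log n)$. The construction is a derandomized color coding in the spirit of the high-weight construction of Section~\ref{sec:highwt}. I set $t := \min(\theta,\,\lfloor\mu/10\rfloor)$ and $r := \lceil\theta/t\rceil$, assuming WLOG that $\mu$ exceeds an absolute constant (otherwise a direct enumeration suffices).

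I first pick a hash $h\colon[n]\to[t]$ from the fractional perfect hash family $\mathcal{H}_{\mathrm{frac}}^{n,t}$ of Lemma~\ref{lem:fract-hash} applied to $z=(p_1,\dots,p_n)$. By the choice of $t$, the hypothesis $\sum z_j = \mu \ge 10t$ of the lemma holds, so for a noticeable fraction of $h$ the buckets $I_b:=h^{-1}(b)$ are balanced with $\sum_{j\in I_b}p_j$ within a constant factor of $\mu/t$. Define the per-bucket event $E_b := \{\text{at least }r\text{ indices in }I_b\text{ hit}\}$; since $t\,r \ge \theta$, $\bigcap_b E_b$ implies $X\ge\theta$ and hence $f=1$. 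Each $E_b$ is itself a combinatorial shape ($T^+_r$ restricted to $\{A_j\}_{j\in I_b}$), so the PRG $\mathcal{G}_{\mathrm{GMRZ}}^{m,n,\alpha}$ of Theorem~\ref{thm:GMRZ} with a small constant error $\alpha$ gives an explicit polynomial-size multiset $\mathcal{U}\subseteq[m]^n$ on which $E_b$ holds with probability close to its uniform probability. To choose all $t$ bucket-contents on a single $O_c(\log n)$-bit seed, I walk on an explicit expander over $\mathcal{U}$ for $t$ steps, using the $b$-th vertex of the walk to supply the coordinates $(x_j)_{j\in I_b}$ of the final output; the corollary following Lemma~\ref{lemma_afwz} guarantees an explicit walk-set of size $\poly(|\mathcal{U}|,\prod_b 1/\Pr[E_b])$. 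The final hitting set is the union of the resulting $x$'s over all $h$ and all walks.

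The crucial quantitative estimate is $\prod_b \Pr[E_b] \ge 1/n^{O_c(1)}$, which is exactly what keeps the expander walk-set polynomial. Maclaurin's inequality $e_\theta(p)\le (e\mu/\theta)^\theta$, combined with the identity $e_\theta(p)=\E\bigl[\binom{X}{\theta}\bigr]\ge\Pr[X\ge\theta]\ge 1/n^c$, forces $K:=\theta/\mu$ to satisfy $K^\theta \le e^\theta n^c$. A Chernoff estimate in each balanced bucket (mean $\Theta(\mu/t)$, target $r=\theta/t$) yields $\Pr[E_b] \ge \Omega(e^{-(\mu/t)f(K)})$ with $f(K)=K\ln K - K + 1$, so
\[
  \prod_{b=1}^{t}\Pr[E_b] \;\ge\; \Omega\bigl(e^{-\mu f(K)}\bigr) \;=\; \Omega\bigl(K^{-\theta}\,e^{\theta-\mu}\bigr) \;\ge\; \frac{1}{n^c\,e^{\mu}} \;\ge\; \frac{1}{n^{O_c(1)}},
\]
using $e^\mu \le n^{O_c(1)}$ since $\mu \le 2c\log n$. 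The main obstacle is that three parameter regimes coexist---$\mu \ge 10\theta$ (pure one-hit-per-bucket color coding), $\mu \approx \theta$, and $\mu \ll \theta$ (the rare-event regime, where the Maclaurin bound does all the work)---and in each of them the fractional-hash hypothesis, the PRG error $\alpha$, and the spectral-gap condition $\lambda_i \le p_ip_{i-1}/8$ of Lemma~\ref{lemma_afwz} must be matched simultaneously, with the explicit expanders of Fact~\ref{fact_expl_exp} having degree polynomial in $1/\Pr[E_b]$.
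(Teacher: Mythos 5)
Your overall architecture (bucket by hashing, get the required contribution in each bucket from a small sample space, stitch buckets together with expander walks) matches the paper's, but your per-bucket step has a genuine gap. You hash into $t=\min(\theta,\lfloor\mu/10\rfloor)$ buckets and define $E_b$ as ``at least $r=\lceil\theta/t\rceil$ hits in bucket $b$,'' proposing to realize $E_b$ on the range of $\mathcal{G}_{\mathrm{GMRZ}}^{m,n,\alpha}$ with \emph{constant} error $\alpha$. But Theorem~\ref{thm:GMRZ} gives only an \emph{additive} guarantee, and in your own rare-event regime the uniform probability of $E_b$ is polynomially small: take $\mu=\Theta(1)$, so $t=O(1)$ and $r=\Theta(\theta)$ with $\theta=\Theta(\log n/\log\log n)$ still compatible with $\Pr[f=1]\ge 1/n^c$; then $\Pr[E_b]\approx e^{-r\ln(r/\mu_b)}=n^{-\Omega(1)}$, and a constant additive error certifies nothing. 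Driving the GMRZ error down to $1/n^{\Omega(1)}$ costs seed length $\Theta(\log^2 n)$, i.e.\ a quasipolynomial-size $\mathcal{U}$, which defeats the purpose. This is precisely why the paper does \emph{not} reuse the high-weight recipe here: it hashes into $\theta$ buckets with the (non-fractional) perfect hash family, asks for only \emph{one} hit per bucket, and lower-bounds $\eta_i=\Pr[\sum_{j\in B_i}1_{A_j}(x_j)\ge 1]$ by inclusion--exclusion, $\eta_i\ge\mu_i-\mu_i^2/2$ --- a \emph{multiplicative} bound that holds exactly on a pairwise-independent space, no matter how small $\mu_i$ is. The product $\prod_i\eta_i\ge 1/n^{O_c(1)}$ then comes from the color-coding estimate $\E_h[\prod_i\mu_i]\ge 2^{-O(\theta)}\sum_{|S|=\theta}\prod_{j\in S}p_j\ge\eps/2^{O(\theta)}$, playing the role of your Maclaurin computation.

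Two further problems. First, your lower bound $\Pr[E_b]\ge\Omega(e^{-(\mu/t)f(K)})$ is a reverse Chernoff bound that you assert but do not prove, and it can fail outright: Lemma~\ref{lem:fract-hash} only guarantees $\sum_{j\in I_b}p_j\in[0.01M,10M]$ and says nothing about the \emph{number} of indices landing in $I_b$, so a bucket may contain fewer than $r$ indices with $p_j>0$, making $\Pr[E_b]=0$. (The paper's one-hit-per-bucket target, together with its explicit treatment of buckets with $\mu_i>1/2$ by passing to a sub-bucket, avoids this.) Second, the theorem is about $\cthr(m,n)$, which includes tests $T^-_\theta$; your construction only addresses $T^+_\theta$. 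The $T^-$ case is easy --- $T^-_0$ is a combinatorial rectangle with accepting sets $\bar{A_i}$ and acceptance probability $\prod_i(1-p_i)\ge n^{-4c}$, so the hitting set of Linial et al.\ suffices and also hits $T^-_\theta$ for $\theta>0$ --- but it must be included in $\mathcal{S}_{\text{low},1}^{n,c}$.
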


We will prove this theorem in the rest of this sub-section. Note that since $p_i\leq 1/2$ for each $i\in [n]$, we have $w_i = p_i(1-p_i) \geq p_i/2$.

To begin, we note
that hitting sets for the case when the symmetric function is $T^-_{\theta}$ is easily obtained.  In particular, since $T^-_0$ accepts iff $\sum
X_i=0$, it can also be interpreted as a combinatorial rectangle with accepting sets 
$\bar{A_1},\ldots,\bar{A_n}$. The probability of this event over uniformly
chosen inputs is at least $ \prod_i (1-p_i) \ge e^{-2\sum_i p_i} \ge e^{-4\sum_i
p_i (1-p_i)} \ge n^{-4c}$, where the first inequality uses the fact that $(1-x)\geq e^{-2x}$ for $x\in [0,1/2]$.  Thus the existence of a hitting set for such $f$ follows from the result of Linial
\etal.~\cite{LLSZ}.  Further, by definition, a hitting set for $T^-_0$ is also a
hitting set for $T^-_\theta$ for $\theta >0$. We will therefore focus on hitting sets for thresholds of the form $T^+_\theta$ for some $\theta>0$.

Let us now fix a function $f(x) = T^+_\theta(\sum_i 1_{A_i}(x_i))$ that accepts with good probability: $\Pr_x
[T^+_\theta(\sum_i 1_{A_i}(x_i))=1] \ge \eps$. Since $w(f)\leq c\log n$ and $p_i \leq 2w_i$ for each $i\in [n]$, it follows that
$ \mu\leq 2c\log n $. Thus by a Chernoff bound and the fact that $\eps = 1/n^c$,
we have that $\theta \le c'\log n$ for some $c' = O_c(1)$.

\paragraph{Outline.} 

The idea is to use a hash function $h$ from a {\em perfect hash family} (Lemma \ref{lemma_perfect_hash}) mapping $[n]\mapsto [\theta]$. The aim will now be to obtain a
contribution of $1$ to the sum $\sum_i 1_{A_i}(x_i)$ from each bucket\footnote{This differs from the high-weight
case, where we looked at advantage over the mean.}. In order to do this, we
require $\prod_i \mu_i$ be large, where $\mu_i$ is the sum of $p_j$ for $j$ in
bucket $B_i = h^{-1}(i)$. By a reason similar to color coding (see~\cite{AYZ}), it will turn
out that this quantity is large when we bucket using a perfect hash family. We
then prove that using a pairwise independent space in each bucket $B_i$
``nearly'' gives probability $\mu_i$ of succeeding. As before, since we cannot
use independent hashes in each bucket, we take a hash function over $[n]$, and
do an expander walk. The final twist is that in the expander walk, we cannot use
a constant degree expander: we will have to use a sequence of expanders on the
same vertex set with appropriate degrees (some of which can be super-constant,
but the product will be small). This will complete the proof. We note that the
last trick was implicitly used in the work of \cite{LLSZ}.

\paragraph{Construction.}

Let us formally describe a hitting set for $T^+_\theta$ for a fixed $\theta$.
(The final set $\mathcal{S}_\text{low,1}^{n,c}$ will be a union of these for
$\theta \le c'\log n$ along with the hitting set of \cite{LLSZ}).

\textbf{Step 1:} Let $\mc{H}^{n,\theta}_{perf}=\{h:[n]\to[\theta]\} $ be a
perfect hash family as in Lemma~\ref{lemma_perfect_hash}. The size of the hash
family is $ 2^{O(\theta)}\poly(n) = n^{O_{c'}(1)} = n^{O_c(1)}$. For each hash
function $ h\in \mc{H}^{n,\theta}_{perf} $ divide $[n] $ into $ \theta $ buckets
$ B_1,\ldots,B_\theta $ (so $ B_i=h^{-1}(i) $).

\textbf{Step 2:} We will plug in a pairwise independent space in each bucket.
Let $ \mc{G}^{m,n}_{2-wise}:\zeroone^s\to[m]^n $ denote the generator of a
pairwise independent space. Note that the seed-length for any bucket is $
s=O(\log n)$\footnote{We do not use generators with different output lengths,
instead we take the $n$-bit output of one generator and restrict to the entries
in each bucket.}.

\textbf{Step 3:} The seed for the first bucket is chosen uniformly at random and
seeds for the subsequent buckets are chosen by a walk on expanders with varying
degrees. For each $ i\in[\theta] $ we choose every possible $ \eta'_i $ such
that $1/\eta'_i $ is a power of $ 2 $ and $\prod_i\eta'_i \geq 1/n^{O_c(1)}$, where the constant implicit in the $O_c(1)$ will become clear in the analysis of the construction below. There are
at most $ \poly(n) $ such choices for all $ \eta'_i $'s in total. We then take
a $ (2^s,d_i,\lambda_i) $-expander $H_i $ on vertices $ \zeroone^s $ with
degree $ d_i=\poly(1/(\eta'_i\eta'_{i-1})) $ and $ \lambda_i \leq
\eta'_i\eta'_{i-1}/100$ (by Fact \ref{fact_expl_exp}, such explicit expanders exist). Now for any $ u\in\zeroone^s $, $
\{y_i\in[d_i]\}_{i=1}^\theta $, let $(u,y_1,\ldots,y_\theta)\in\mc{W}(H_1,\ldots,H_\theta) $ be
a $ \theta $-step walk. For all starting seeds $ z_0 \in \zeroone^s $ and all
possible $ y_i \in [d_i] $, we construct the input $x\in [m]^n$ s.t. for all $ i\in[\theta] $, we have $x|_{B_i} = \mc{G}^{m,n}_{2-wise}(v_i(z_0,y_1,\ldots,y_\theta))|_{B_i}$.

\textit{Size.} We have $|\mathcal{S}_\text{low,1}^{n,c}|=c'\log n \cdot n^{O_c(1)}
\cdot {\prod_i d_i}$, where the $ c'\log n $ factor is due to the choice of $
\theta $, the $ n^{O_c(1)} $ factor is due to the size of the perfect hash
family, the number of choices of $ (\eta'_1,\ldots,\eta'_\theta) $, and the
choice of the first seed, and an additional $ n^{O(1)}\cdot \prod_i d_i $ factor is the number of
expander walks. Simplifying, $|\mathcal{S}_\text{low,1}^{n,c}|=n^{O_c(1)}{\prod
d_i}=n^{O_c(1)}\prod_i(\eta_i')^{-O(1)}\leq n^{O_c(1)}$, where the last
inequality is due to the choice of $ \eta'_i $'s.

\paragraph{Analysis.} We follow the outline. First, by a union bound we know that
$\Pr_{x\sim [m]^n}[T^+_\theta(x)=1]\leq \sum_{|S|=\theta}\prod_{i\in S}p_i$ and hence $\sum_{|S|=\theta}\prod_{i\in S}p_i \geq
\eps$. Second, if we hash the indices $[n]$ into
$\theta$ buckets at random and consider one $S$ with $|S| = \theta$, the
probability that the indices in $S$ are `uniformly spread' (one into each
bucket) is $1/2^{O(\theta)}$. By Lemma \ref{lemma_perfect_hash}, this property is also true if we pick $h$ from the explicit
perfect hash family $\mc{H}^{n,\theta}_{perf}$. 

Formally, given an $h\in \mc{H}^{n,\theta}_{perf}$, define $\alpha_h = \prod_{i
\in [\theta]}\sum_{j\in B_i}p_j$.  Over a uniform choice of $h$ from the
family $\mc{H}^{n,\theta}_{perf}$, we can conclude that 
\[
\E_h \alpha_h \geq
\sum_{|S|=\theta}\prod_{i\in S}p_i \Pr_h[\text{$h$ is 1-1 on $S$}] \geq
\frac{\eps}{2^{O(\theta)}} \geq \frac{1}{n^{O_c(1)}}.
\] 
Thus there must exist an $h$ that
satisfies $\alpha_h \geq 1/n^{O_c(1)}$.

We fix such an $h$. For a bucket $B_i$, define $\eta_i = \Pr_{x \in
\mc{G}^{m,n}_{2-wise}}[\sum_{j\in B_i}1_{A_j}(x_j) \geq 1]$. Now for a moment, let us
analyze the construction assuming \emph{independently seeded} pairwise independent spaces
in each bucket. Then the success probability, namely the probability that
\emph{every} bucket $B_i$ has a non-zero $ \sum_{j\in B_i}1_{A_j}(x_j) $ is equal
to $\prod_i \eta_i$. The following claim gives a lower bound on this probability.

\begin{claim} \label{cl:low-size} 
For the function $h$ satisfying $\alpha_h \geq 1/n^{O_c(1)}$, we have 
$\prod_{i\in[\theta]} \eta_i \geq 1/n^{O_c(1)}$.
\end{claim}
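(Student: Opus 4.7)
The plan is to apply the second moment method inside each bucket -- which only needs pairwise independence and is therefore supplied by $\mc{G}^{m,n}_{2-wise}$ -- and then to show that the product of the resulting per-bucket probabilities differs from $\alpha_h = \prod_i \mu_i$ by at most a factor of $n^{O_c(1)}$, where $\mu_i := \sum_{j \in B_i} p_j$.

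First I would fix $i \in [\theta]$ and let $X^{(i)} := \sum_{j \in B_i} 1_{A_j}(x_j)$. Under $\mc{G}^{m,n}_{2-wise}$ the coordinates $x_j$ are uniform and pairwise independent, so $\E X^{(i)} = \mu_i$ and $\E (X^{(i)})^2 = \mu_i + \sum_{j \ne k \in B_i} p_j p_k \le \mu_i + \mu_i^2$ (the cross-terms split precisely because of pairwise independence). The Paley--Zygmund inequality, whose proof only inspects the first two moments, then yields
\[
\eta_i \;=\; \Pr[X^{(i)} \ge 1] \;\ge\; \frac{(\E X^{(i)})^2}{\E (X^{(i)})^2} \;\ge\; \frac{\mu_i}{1+\mu_i}.
\]

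Multiplying over $i \in [\theta]$ gives $\prod_i \eta_i \ge \alpha_h / \prod_i (1+\mu_i)$. I would control the denominator using the elementary inequality $1+x \le 2 \cdot 2^x$, valid for all $x \ge 0$, which yields $\prod_i (1+\mu_i) \le 2^\theta \cdot 2^{\sum_i \mu_i} = 2^{\theta + \mu}$. Now the small-set hypothesis $p_i \le 1/2$ forces $p_i \le 2 w_i$, so $\mu = \sum_i p_i \le 2 w(f) \le 2c \log n$, and by construction $\theta \le c' \log n$. Hence $2^{\theta+\mu} \le n^{O_c(1)}$, and combined with the hypothesis $\alpha_h \ge 1/n^{O_c(1)}$ we conclude $\prod_i \eta_i \ge 1/n^{O_c(1)}$, as required.

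The mild subtlety -- and what I expect to be the only place where one has to be careful -- is that individual $\mu_i$ can exceed $1$ (the bucketing is only ``uniform'' in expectation), and there $\eta_i \ge \mu_i/(1+\mu_i)$ saturates near $1/2$ rather than tracking $\mu_i$; a naive comparison of $\prod_i \eta_i$ with $\alpha_h = \prod_i \mu_i$ would therefore lose too much. The estimate $1+x \le 2 \cdot 2^x$ is the clean way to absorb this saturation into the single factor $2^\mu$, which is in turn bounded only because the small-set assumption $p_i \le 1/2$ lets us replace $\mu$ by (essentially) the weight $w(f) \le c \log n$.
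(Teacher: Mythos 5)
Your proof is correct, but it takes a noticeably different route from the paper's, and it is arguably cleaner. The paper splits the buckets into \emph{good} ($\mu_i \le 1/2$) and \emph{bad} ($\mu_i > 1/2$): for good buckets it uses inclusion--exclusion (which, like your argument, only needs pairwise independence) to get $\eta_i \ge \mu_i - \mu_i^2/2 \ge \mu_i/2$; for bad buckets it invokes the hypothesis $p_j \le 1/2$ a second time to extract a sub-bucket $B_i' \subseteq B_i$ with $\mu_i' \in [1/4,1/2]$ and concludes $\eta_i \ge 1/8$, paying a $2^{O(\theta)}$ factor over the bad buckets and separately showing $\prod_{\text{good}} \mu_i \ge 1/n^{O_c(1)}$ by bounding $\prod_{\text{bad}}\mu_i \le e^{\mu}$. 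Your Paley--Zygmund bound $\eta_i \ge \mu_i/(1+\mu_i)$ treats all buckets uniformly and replaces the entire case analysis by the single absorption $\prod_i(1+\mu_i) \le 2^{\theta+\mu}$, which is controlled by exactly the same quantities ($\theta \le c'\log n$ and $\mu \le 2w(f) \le 2c\log n$) that the paper already needs elsewhere. The one thing the paper's route buys is that it makes explicit where the bad-bucket loss is only $2^{O(\theta)}$ rather than $2^{O(\mu)}$, but since both exponents are $O_c(\log n)$ here this makes no difference to the conclusion. Your closing remark correctly identifies the saturation issue for $\mu_i > 1$ as the only delicate point, and your use of $1+x \le 2\cdot 2^x$ handles it; note also that both arguments still rely on $p_j \le 1/2$ (you use it only for $\mu \le 2w(f)$, the paper uses it there \emph{and} in the sub-bucket extraction), which is why this claim lives in the small-sets subsection.
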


\begin{proof} 
For a bucket $ B_i $, define $ \mu_i = \sum_{j\in B_i} p_j$. Further, 
call a bucket $ B_i $ as being \emph{good} if $\mu_i \leq 1/2 $, otherwise 
call the bucket \emph{bad}. For the bad buckets, 
\begin{equation} \label{eq:bad-buckets}
\prod_{B_i\ bad} \mu_i \leq 
\prod_{B_i\ bad} e^{\mu_i} = \exp\left(\sum_{B_i\ bad} \mu_i\right) \leq e^\mu
\leq n^{O_c(1)}.
\end{equation}
From the choice of $h$ and the definition of $\alpha_h$ we have
\begin{equation} \label{eq:good-buckets}
\frac{1}{n^{O_c(1)}} \leq \prod_{i \in [\theta]} \mu_i =
\prod_{B_i\ bad} \mu_i \prod_{B_i\ good} \mu_i \leq
n^{O_c(1)} \prod_{B_i\ good} \mu_i \Rightarrow 
\prod_{B_i\ good} \mu_i \geq \frac{1}{n^{O_c(1)}},
\end{equation}
where we have used Equation~\eqref{eq:bad-buckets} for the second inequality. 

Now let's analyze the $\eta_i$'s. For a good bucket $B_i$, by
inclusion-exclusion,
\begin{equation} \label{eq:good-bucket}
\eta_i = \prob{x}{\sum_{j \in B_i} 1_{A_j}(x_j) \geq 1} \geq \sum_{j \in B_i} p_j -
\sum_{j,k \in B_i: j < k} p_jp_k
\geq \mu_i - \frac{\mu_i^2}{2} \geq \frac{\mu_i}{2}.
\end{equation}

For a bad bucket, $\mu_i > 1/2$. But since all $p_i$'s are $\leq 1/2$, it isn't
hard to see that 
there must exist a non empty subset $B_{i}' \subset B_i$ satisfying $1/4 \leq
\mu_{i}':=\sum_{j\in B_i'}p_j \leq 1/2$.
We now can use Equation~\eqref{eq:good-bucket} on the good bucket $B_{i}'$ to
get the bound on the 
bad bucket $B_i$ as follows:
\begin{equation} \label{eq:bad-bucket}
\eta_i \geq \Pr_x \left[\sum_{j\in B_i'} 1_{A_j}(x_j) \geq 1\right] \geq \frac{\mu_{i}'}{2} \geq \frac{1}{8}.
\end{equation}
So finally,
\[
\prod_{i \in [\theta]} \eta_i \geq \prod_{B_i\ bad} \frac{1}{8} \prod_{B_i\
good} \frac{\mu_i}{2}
\geq \frac{1}{2^{O(\theta)}}\frac{1}{n^{O_c(1)}} = \frac{1}{n^{O_c(1)}},
\]
where we have used \eqref{eq:good-bucket} and \eqref{eq:bad-bucket}
for the first inequality and \eqref{eq:good-buckets} for the second inequality.
\end{proof}

If now the seeds for $\mc{G}^{m,n}_{2-wise}$ in each bucket are chosen according
to the expander walk ``corresponding'' to the probability vector  $
(\eta_1,\ldots,\eta_\theta) $, then by Lemma~\ref{lemma_afwz} the success
probability becomes at least $ (1/2^{O(\theta)})\prod_i \eta_i \geq
1/n^{O_c(1)}$, using Claim~\ref{cl:low-size} for the final inequality.

But we are not done yet. We cannot guess the correct probability vector exactly.
Instead, we get a closest guess $ (\eta_1',\ldots,\eta_\theta') $ such that for
all $ i \in [\theta] $, $1/\eta_i'$ is a power of $2$ and $ \eta_i'\geq\eta_i/2 $. Again, by
Lemma~\ref{lemma_afwz} the success probability becomes at least $
(1/2^{O(\theta)})\prod_i \eta_i'\geq (1/2^{O(\theta)})^2 \prod_i \eta_i \geq
1/n^{O_c(1)}$, using Claim~\ref{cl:low-size} for the final inequality. Note that this also tells us that it is sufficient to guess $\eta_i'$ such that  $\prod_i (1/\eta_i') \leq n^{O_c(1)}$.
 \subsection{The general low-weight case}
 \label{sec:lowwt_general}

The general case (where $p_i$ are arbitrary) is more technical: here we need to
do a ``two level'' hashing. The top level is by dividing into buckets, and in
each bucket we get the desired ``advantage'' using a generalization of hitting
sets for combinatorial rectangles (which itself uses hashing) from \cite{LLSZ}.
The theorem we prove for this case can be stated as follows. 
\begin{theorem}
\label{thm_low_wt}
Fix any $c \ge 1$. For any $m\leq n^c$, there exists an explicit $1/n^c$-HS
$\mathcal{S}_\text{low}^{n,c} \subseteq [m]^n$ of size $n^{O_{c}(1)}$ for 
functions $f \in \cthr(m,n)$ s.t. $w(f)\leq c\log n$.
\end{theorem}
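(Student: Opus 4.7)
}
As in the earlier subsections it suffices to build hitting sets for thresholds of the form $T^+_\theta$; the case $T^-_\theta$ reduces to this by passing to the complementary sets $\bar A_i$ (under which $T^-_\theta$ on $\{A_i\}$ becomes $T^+_{n-\theta}$ on $\{\bar A_i\}$, and the weight $w(f)$ is preserved). Fix such an $f=T^+_\theta$ with $\Pr[f=1]\geq n^{-c}$ and $w(f)\leq c\log n$, and split indices into heavy $H:=\{i:p_i>1/2\}$ and light $L:=[n]\setminus H$. The elementary identities $w_i\geq q_i/2$ on $H$ and $w_i\geq p_i/2$ on $L$ immediately give
\[
\sum_{i\in H}q_i \;\leq\; 2w(f) \;=:\; \rho \;=\; O_c(\log n), \qquad \sum_{i\in L}p_i \;\leq\; 2w(f) \;=\; O_c(\log n).
\]
A Bernstein/Chernoff tail bound applied to $\Pr[f=1]\geq n^{-c}$ then forces $\theta\leq n_H + O_c(\log n)$, so $\theta_L:=\max(0,\theta-n_H)=O_c(\log n)$.

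The key structural observation is a sufficient condition: any $x$ that both (i) satisfies $x_i\in A_i$ for every $i\in H$, and (ii) satisfies $\sum_{i\in L}1_{A_i}(x_i)\geq \theta_L$, has $f(x)=1$. Property (i) is a combinatorial rectangle on $H$ with $\sum_{i\in H} q_i\leq \rho$ and acceptance probability $\prod_{i\in H}p_i\geq e^{-2\rho}\geq n^{-O_c(1)}$, putting it in exactly the regime of the strengthened rectangle hitting set of Theorem~\ref{thm:comb-rect} (this is why the stronger multiplicative guarantee is needed; a vanilla LLSZ-type HS via Theorem~\ref{th:LLSZ} might suffice but is quantitatively wasteful). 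Property (ii) is a combinatorial threshold on $L$ with $p_i\leq 1/2$, weight $\leq c\log n$, and threshold $\theta_L=O_c(\log n)$, which is precisely the regime of Theorem~\ref{thm:low_wt_low_size}; and by the monotonicity $\Pr[T_L\geq\theta_L]\geq\Pr[f=1]\geq n^{-c}$, the HS of that theorem must already contain an accepting string for the $L$-threshold.

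The remaining, and main, task is to combine these two building blocks into a single hitting set of size $n^{O_c(1)}$ over $[m]^n$, without being allowed to reference the partition $(H,L)$ (which depends on $f$). My plan is the ``two-level hashing'' foreshadowed in the overview: at the top level, use an explicit fractional perfect hash family (Lemma~\ref{lem:fract-hash}) applied to the weights $z_i=w_i$ to partition $[n]$ into $t=O_c(1)$ buckets with balanced total weight, and enumerate over an assignment of each bucket to either the ``rectangle'' role or the ``threshold'' role; within each bucket, plug in the corresponding building block (Theorem~\ref{thm:comb-rect} for rectangle buckets, Theorem~\ref{thm:low_wt_low_size} for threshold buckets); finally stitch the per-bucket strings together using an expander walk of length $t$, exactly as in Sections~\ref{sec:highwt} and~\ref{sec:lowwt_lowsize}. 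The hard part will be this combining step: showing that for every low-weight $f$ some choice of top-level hash and role-assignment aligns with the unknown split $(H,L)$ well enough that the per-bucket acceptance probabilities multiply to at least $n^{-O_c(1)}$, so that the expander walk boosts the product to a nonzero guarantee on the merged string satisfying both (i) and (ii) simultaneously. The multiplicative guarantee of Theorem~\ref{thm:comb-rect} and the fractional perfect hashing of Lemma~\ref{lem:fract-hash} are precisely the new tools that make the combining quantitatively tight enough to yield the claimed $n^{O_c(1)}$ bound.
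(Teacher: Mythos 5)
Your structural reduction is exactly the paper's: the sufficient event ``all heavy indices accept \emph{and} the light indices contribute at least $\theta_L=\theta-|H|$'' is precisely the event $\overline{Z}(x)=0\wedge Y(x)\geq\theta-|L|$ that the paper targets, and your bounds $\sum_{i\in H}q_i=O_c(\log n)$, $\sum_{i\in L}p_i=O_c(\log n)$, $\theta_L=O_c(\log n)$ all match. The gap is in the combining step, which you correctly flag as the main task but for which your sketched mechanism would fail. You propose to hash $[n]$ into buckets and assign each whole bucket either a ``rectangle role'' (fed by $\calS_{\text{rect}}$) or a ``threshold role'' (fed by $\mathcal{S}_\text{low,1}^{n,c}$). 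But the partition $(H,L)$ is determined by $f$ and interleaves arbitrarily within $[n]$; no polynomial-size family of hash functions can place the heavy and light indices into disjoint buckets, so a generic bucket contains both kinds. A bucket given the threshold role provides no guarantee that its heavy indices all land in their $A_i$'s (so $\overline{Z}$ need not vanish), and a bucket given the rectangle role provides no guarantee on its light indices' contribution to $Y$. The two requirements cannot be decoupled bucketwise.

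The paper's resolution is to handle both kinds of indices \emph{inside the same bucket} with a single primitive. It perfect-hashes $[n]$ into $t=\theta-|L|$ buckets (ordinary $\mc{H}_{\text{perf}}^{n,t}$ of Lemma~\ref{lemma_perfect_hash}, used color-coding style so that $\prod_i\sum_{j\in h^{-1}(i)\cap S}p_j$ stays $\geq n^{-O_c(1)}$ --- not the fractional family on the weights $w_i$, and with $t=O(\log n)$ rather than $O_c(1)$). The per-bucket target event ``$\overline{Z}_i(x)=0$ and $Y_i(x)\geq 1$'' is then written as a \emph{disjoint union of combinatorial rectangles} $R_j$ (light index $j$ accepts, the other light indices in the bucket reject, all heavy indices in the bucket accept); each $R_j$ has rejecting mass at most $\rho_i$ and acceptance probability at least $p_j e^{-2\rho_i}\geq n^{-O_c(1)}$, so Theorem~\ref{thm:comb-rect} alone --- applied once per bucket, never Theorem~\ref{thm:low_wt_low_size} --- gives per-bucket success probability $p_i'\geq(\sum_{j\in h^{-1}(i)\cap S}p_j)/2^{O_c(\rho_i)}$ (Claim~\ref{claim_disj_rect}), after which the expander walk of Lemma~\ref{lemma_afwz} with guessed per-bucket probabilities $2^{-a_i}$ stitches the buckets together. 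This disjoint-rectangle decomposition within each bucket is the missing idea; without it, or some equivalent way of enforcing the heavy-rectangle condition and the light-threshold condition on the \emph{same} string, your plan does not go through.
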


\paragraph{Construction.}
We describe $\mc{S}_{low}^{n,c}$ by demonstrating how to sample a random element
$x$ of this set. The number of possible random choices bounds
$|\mc{S}_{low}^{n,c}|$. We define the sampling process in terms of certain
constants $c_i$ ($i\in [5]$) that depend on $c$ in a way that will become clear
later in the proof. Assuming this, it will be clear that
$|\mc{S}_{low}^{n,c}|=n^{O_c(1)}$.

\textbf{Step 1}: Choose at random $t\in \{0,\ldots,15c\log n\}$. If $t=0$, then we simply
output a random element $x$ of $\mc{S}_{LLSZ}^{m,n,1/n^{c_1}}$ for some constant
$c_1$. The number of choices for $t$ is $O_c(\log n)$ and if $t=0$, the number
of choices for $x$ is $n^{O_c(1)}$. The number of choices for non-zero $t$ are
bounded subsequently.

\textbf{Step 2}: Choose $h\in \mc{H}_{perf}^{n,t}$ uniformly at random. The number 
of choices for $h$ is $n^{O_c(1)}\cdot 2^{O(t)} = n^{O_c(1)}$.

\textbf{Step 3}: Choose at random non-negative integers $\rho_1,\ldots,\rho_t$ and
$a_1,\ldots,a_t$ s.t. $\sum_i \rho_i \leq c_2\log n$ and $\sum_i a_i \leq
c_3\log n$. For any constants $c_2$ and $c_3$, the number of choices for
$\rho_1,\ldots,\rho_t$ and $a_1,\ldots,a_t$ is $n^{O_c(1)}$.

\textbf{Step 4}: Choose a set $V$ s.t. $|V| = n^{O_c(1)}=N$ and identify $V$
with $\mc{S}_{rect}^{n,c_4,\rho_i}$ for some constant $c_4\geq 1$ and each $i\in
[t]$ in some arbitrary way (we assume w.l.o.g. that the sets
$\mc{S}_{rect}^{n,c_4,\rho_i}$ ($i\in [t]$) all have the same size). Fix a
sequence of expander graphs $(G_1,\ldots,G_t)$ with vertex set $V$ where $G_i$
is an $(N,D_i,\lambda_i)$-expander with $\lambda_i \leq 1/(10\cdot 2^{a_i}\cdot
2^{a_{i+1}})$ and $D_i = 2^{O(a_i+a_{i+1})}$ (this is possible by Fact
\ref{fact_expl_exp}). Choose $w\in \mc{W}(G_1,\ldots,G_t)$ uniformly at random.
For each $i\in [t]$, the vertex $v_i(w)\in V$ gives us some $x^{(i)}\in
\mc{S}_{rect}^{n,c_4,\rho_i}$. Finally, we set $x\in [m]^n$ so that
$x|_{h^{-1}(i)} = x^{(i)}|_{h^{-1}(i)}$. The total number of choices in this
step is bounded by $|\mc{W}(G_1,\ldots,G_t)|\leq N\cdot \prod_i D_i \leq
n^{O_c(1)}\cdot 2^{O(\sum_i a_i)} = n^{O_c(1)}$.

Thus, the number of random choices (and hence $|\mc{S}_{low}^{n,c}|$) is at most
$n^{O_c(1)}$. 

\paragraph{Analysis.}
We will now prove Theorem~\ref{thm_low_wt}. The analysis once again follows the
outline of Section~\ref{sec:lowwt_lowsize}.

For brevity, we will denote $\mc{S}_{low}^{n,c}$ by $\mc{S}$. Fix any $A_1,\ldots, A_n \subseteq [m]$ and a threshold
test $f\in \cthr(m,n)$ such that $f(x):= T^+_\theta(\sum_{i\in [n]} 1_{A_i}(x_i))$ for some $\theta\in\naturals$ (we can analyze combinatorial thresholds $f$ that use thresholds of the form $T^{-}_\theta$ in a symmetric way). We assume that $f$ has low-weight and good acceptance probability on uniformly random input: that is, $w(f)\leq c\log n$ and
$\prob{x\in [m]^n}{f(x) =1 }\geq 1/n^c$ . For each $i\in [n]$, let $p_i$ denote $|A_i|/m$ and $q_i$
denote $1-p_i$. We call $A_i$ small if $p_i\leq 1/2$ and large otherwise. Let $S
= \setcond{i}{\text{$A_i$ small}}$ and $L = [n]\setminus S$. Note that
$w(f) = \sum_{i}p_iq_i \geq \sum_{i\in S} p_i/2 + \sum_{i\in L}q_i/2$.

Also, given $x\in [m]^n$, let $Y(x) = \sum_{i\in S} 1_{A_i}(x_i)$ and
$\overline{Z}(x) = \sum_{i\in L}1_{\overline{A_i}}(x_i)$. We have $\sum_i
1_{A_i}(x_i) = Y(x) + (|L|-\overline{Z}(x))$ for any $x$. We would like to show
that $\prob{x\in \mc{S}}{f(x)=1} > 0$. Instead we show the following
stronger statement: $\prob{x\in\mc{S}}{\overline{Z}(x)=0 \wedge Y(x)\geq
\theta-|L|}>0$. To do this, we first need the following simple claim.

\begin{claim}
\label{claim_subevent}
$\prob{x\in [m]^n}{\overline{Z}(x) = 0 \wedge Y(x)\geq \theta-|L|}\geq
1/n^{c_1}$, for $c_1 = O(c)$.
\end{claim}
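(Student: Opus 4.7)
The plan is to exploit the fact that $Y(x)$ depends only on coordinates $i\in S$ while $\overline{Z}(x)$ depends only on coordinates $i\in L$, so under the uniform product distribution on $[m]^n$ these two random variables are independent. Hence
\[
\prob{x\in [m]^n}{\overline{Z}(x)=0 \wedge Y(x)\geq \theta-|L|} = \prob{x}{\overline{Z}(x)=0}\cdot \prob{x}{Y(x)\geq \theta-|L|},
\]
and it suffices to lower bound each factor by $1/n^{O(c)}$.

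For the first factor, I would use the weight bound to control the $q_i$'s on $L$. Since $p_i > 1/2$ for $i\in L$, we have $w_i = p_iq_i \geq q_i/2$, so $\sum_{i\in L} q_i \leq 2\sum_{i\in L} w_i \leq 2w(f)\leq 2c\log n$. Because $q_i \leq 1/2$ on $L$, the standard estimate $1-x\geq e^{-2x}$ for $x\in [0,1/2]$ gives
\[
\prob{x}{\overline{Z}(x)=0} = \prod_{i\in L} p_i = \prod_{i\in L} (1-q_i) \geq \exp\Bigl(-2\sum_{i\in L} q_i\Bigr) \geq e^{-4c\log n} = 1/n^{4c}.
\]

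For the second factor, I would simply observe that the event $\{f(x)=1\}$ is contained in the event $\{Y(x)\geq \theta-|L|\}$. Indeed, using the identity $\sum_i 1_{A_i}(x_i) = Y(x) + (|L|-\overline{Z}(x))$, the condition $f(x)=1$ becomes $Y(x) - \overline{Z}(x)\geq \theta-|L|$, and since $\overline{Z}(x)\geq 0$ this forces $Y(x)\geq \theta-|L|$. Therefore
\[
\prob{x}{Y(x)\geq \theta-|L|} \geq \prob{x}{f(x)=1} \geq 1/n^c.
\]
Multiplying the two bounds yields a lower bound of $1/n^{5c}$, so the claim holds with $c_1 = 5c$. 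The only subtlety is ensuring that $q_i \leq 1/2$ on $L$ (which justifies the exponential lower bound on $\prod p_i$) and that $S$ and $L$ are disjoint (giving the independence); both are immediate from the definitions.
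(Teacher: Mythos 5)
Your proof is correct and follows essentially the same route as the paper: factor the probability using the independence of $Y$ and $\overline{Z}$ (they depend on the disjoint coordinate sets $S$ and $L$), bound $\prob{x}{\overline{Z}(x)=0}$ via $\sum_{i\in L}q_i\leq 2w(f)$ and $1-x\geq e^{-2x}$, and bound $\prob{x}{Y(x)\geq\theta-|L|}$ by the acceptance probability of $f$. Your argument for the second factor (the event $\{f(x)=1\}$ is contained in $\{Y(x)\geq\theta-|L|\}$ since $\overline{Z}(x)\geq 0$) is a cleaner, direct version of the paper's slightly more roundabout summation over the values of $\overline{Z}$, but it is the same idea.
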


\begin{proof}
Clearly, we have $\prob{x\in [m]^n}{\overline{Z}(x) = 0 \wedge Y(x)\geq \theta-|L|} = \prob{x\in [m]^n}{\overline{Z}(x) = 0}\cdot \prob{x\in [m]^n}{Y(x)\geq \theta-|L|} $. We lower bound each term separately by $1/n^{O(c)}$.

To bound the first term, note that $\prob{x\in [m]^n}{\overline{Z}(x) = 0} = \prod_{i\in L}(1-q_i) = \exp\{-O(\sum_{i\in L} q_i)\}$ where the last inequality follows from the fact that $q_i < 1/2$ for each $i\in L$ and $(1-x)\geq e^{-2x}$ for $x\in [0,1/2]$. Now, since each $q_i < 1/2$, we have $w_i\leq 2q_i$ for each $i\in L$ and hence, $\sum_{i\in L}q_i = O(w(f)) = O(c\log n)$. The lower bound on the first term follows.

To bound the second term, we note that $\prob{x\in [m]^n}{Y(x)\geq \theta'}$ can only decrease as $\theta'$ increases. Thus, we have
\begin{align*}
\prob{x\in [m]^n}{Y(x)\geq \theta - |L|} &= \sum_{i\geq 0}\prob{x\in [m]^n}{Y(x) \geq \theta- |L|}\cdot \prob{x\in [m]^n}{\overline{Z}(x) = i}\\
&\geq \sum_{i\geq 0}\prob{x\in [m]^n}{Y(x) \geq (\theta- |L|+i)\wedge \overline{Z}(x) = i}\\
&= \prob{x\in [m]^n}{\sum_{i\in [n]}1_{A_i}(x_i)\geq \theta}\geq 1/n^c
\end{align*} 

This proves the claim.
\end{proof}

To show that $\prob{x\in\mc{S}}{\overline{Z}(x)=0 \wedge Y(x)\geq
\theta-|L|}>0$, we define a sequence of ``good'' events whose conjunction occurs
with positive probability and which together imply that $\overline{Z}(x)=0$ and
$Y(x)\geq \theta-|L|$. 

Event $\mc{E}_1$: $t = \max\{\theta - |L|,0\}$. Since $f(x) = T^+_\theta(\sum_i 1_{A_i}(x_i))$ accepts a
uniformly random $x$ with probability at least $1/n^c$, we have by Chernoff bounds, we
must have $\theta - \avg{x}{\sum_i 1_{A_i}(x_i)}\leq 10c\log n$. Since
$\avg{x}{\sum_i 1_{A_i}(x_i)} \leq \sum_{i\in S}p_i + \sum_{i\in L}p_i \leq
2w(f) + |L|$, we see that $\theta - |L|\leq 12c\log n$ and hence,
there is some choice of $t$ in Step $1$ so that $\mc{E}_1$ occurs. We condition
on this choice of $t$. Note that by Claim \ref{claim_subevent}, we have
$\prob{x\in [m]^n}{\overline{Z}(x) = 0 \wedge Y(x)\geq t}\geq 1/n^{c_1}$. If
$t=0$, then the condition that $Y(x)\geq t$ is trivial and hence the above event
reduces to $\overline{Z}(x)=0$, which is just a combinatorial rectangle and
hence, there is an $x\in \mc{S}_{LLSZ}^{m,n,1/n^{c_1}}$ with $f(x)=1$
and we are done. Therefore, for the rest of the proof we assume that $t\geq 1$. 

Event $\mc{E}_2$: Given $h\in \mc{H}_{perf}^{n,t}$, define $\alpha_h$ to be the
quantity $\prod_{i\in [t]}\left(\sum_{j\in h^{-1}(i)\cap S}p_j \right)$. Note
that by Lemma \ref{lemma_perfect_hash}, for large enough constant $c_1'$ depending on $c$, we have
\begin{align*}
\avg{h\in \mc{H}_{perf}^{n,t}}{\alpha_h} &\geq \sum_{T\subseteq S:
|T|=t}\prod_{j\in T} p_j \prob{h}{\text{$h$ is $1$-$1$ on $T$}}\\
&\geq \frac{1}{2^{O(t)}}\sum_{T\subseteq S: |T|=t}\prod_{j\in T} p_j\\
&\geq \frac{1}{2^{O(t)}}\prob{x}{Y(x) \geq t}\qquad (\text{by union
bound})\\
&\geq \frac{1}{n^{c_1'}}
\end{align*}

Event $\mc{E}_2$ is simply that $\alpha_h \geq 1/n^{c_1'}$. By averaging, there
is such a choice of $h$. Fix such a choice.

Event $\mc{E}_3$: We say that this event occurs if for each $i\in [t]$, we have
$\rho_i = \lceil \sum_{j\in h^{-1}(i)\cap S} p_j + \sum_{k\in h^{-1}(i)\cap S}
q_k\rceil + 1$. To see that this event can occur, we only need to verify that
for this choice of $\rho_i$, we have $\sum_i\rho_i \leq c_2\log n$ for a
suitable constant $c_2$ depending on $c$. But this straightaway follows from the
fact that $\sum_{j\in S} p_j + \sum_{k\in L} q_k \leq 2w(f)\leq 2c\log
n$. Fix this choice of $\rho_i$ ($i\in [t]$).

To show that there is an $x\in \mc{S}$ s.t. $\overline{Z}(x) = 0$ and $Y(x) \geq
t$, our aim is to show that there is an $x\in \mc{S}$ with $\overline{Z}_i(x) :=
\sum_{j\in h^{-1}(i)\cap L} 1_{\overline{A_j}}(x_j) = 0$ and $Y_i(x) := \sum_{j\in
h^{-1}(i)\cap S} 1_{A_j}(x_j) \geq 1$ for each $i\in [t]$. To show that this
occurs, we first need the following claim.

\begin{claim}
\label{claim_disj_rect}
Fix $i\in [t]$. Let $p_i' = \prob{x\in
\mc{S}_{rect}^{n,c_4,\rho_i}}{\overline{Z}_i(x)=0 \wedge Y_i(x)\geq 1}$. Then,
$p_i'\geq (\sum_{j\in h^{-1}(i)\cap S}p_j)/2^{c_4'\rho_i}$, for large enough
constants $c_4$ and $c_4'$ depending on $c$.
\end{claim}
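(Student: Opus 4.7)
The event $\{\overline{Z}_i(x)=0 \wedge Y_i(x)\geq 1\}$ is not itself a combinatorial rectangle, so Theorem~\ref{thm:comb-rect} cannot be applied to it directly. The plan is to partition this event into a disjoint union of rectangle events, apply the theorem to each piece, and sum. Write $B_i^S := h^{-1}(i)\cap S$ and $B_i^L := h^{-1}(i)\cap L$, fix an arbitrary ordering of $B_i^S$, and for each $j\in B_i^S$ define $\mathcal{R}'_j$ as the rectangle that requires $x_k\in A_k$ for every $k\in B_i^L$, $x_{j'}\in\overline{A_{j'}}$ for every $j'\in B_i^S$ with $j'<j$, and $x_j\in A_j$, with no constraint on the remaining coordinates. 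These $\mathcal{R}'_j$ are pairwise disjoint and their union is exactly the event in question, since every accepting $x$ has a unique smallest $j\in B_i^S$ with $x_j\in A_j$.

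The next step is to check that each $\mathcal{R}'_j$ satisfies the hypotheses of Theorem~\ref{thm:comb-rect}. Its rejecting probabilities sum to $\sum_{k\in B_i^L}q_k + \sum_{j'\in B_i^S,\,j'<j}p_{j'} + q_j$; using $q_j\leq 1$ and the definition $\rho_i = \lceil \sum_{j\in B_i^S}p_j + \sum_{k\in B_i^L}q_k\rceil + 1$ from Step~3 (read with $L$ in place of the evident typo), this sum is bounded by $\rho_i$. For the uniform acceptance probability, since $q_k\leq 1/2$ on $L$ and $p_{j'}\leq 1/2$ on $S$, the inequality $1-x\geq e^{-2x}$ gives
\[
\Pr_{x\sim [m]^n}[\mathcal{R}'_j] \;=\; \prod_{k\in B_i^L}(1-q_k)\prod_{\substack{j'\in B_i^S\\ j'<j}}(1-p_{j'})\cdot p_j \;\geq\; \frac{p_j}{2^{O(\rho_i)}}.
\]
Whenever $p_j>0$ we have $p_j\geq 1/m=1/n^c$ and $\rho_i=O(\log n)$, so this lower bound is at least $1/n^{c_4}$ once $c_4$ is a large enough constant depending on $c$. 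Theorem~\ref{thm:comb-rect} then yields $\Pr_{x\in \mathcal{S}_{rect}^{n,c_4,\rho_i}}[\mathcal{R}'_j]\geq \Pr_x[\mathcal{R}'_j]/2^{O_{c_4}(\rho_i)}$.

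Summing over $j\in B_i^S$ using disjointness gives $p_i' \geq \Pr_x[\overline{Z}_i=0\wedge Y_i\geq 1]/2^{O_{c_4}(\rho_i)}$. By coordinate-independence under uniform random $x$,
\[
\Pr_x[\overline{Z}_i=0\wedge Y_i\geq 1] \;=\; \prod_{k\in B_i^L}(1-q_k)\left(1-\prod_{j\in B_i^S}(1-p_j)\right) \;\geq\; \frac{1}{2^{O(\rho_i)}}\cdot\frac{\sum_{j\in B_i^S}p_j}{1+\sum_{j\in B_i^S}p_j},
\]
using $\prod(1-q_k)\geq 2^{-O(\rho_i)}$ and $1-\prod(1-p_j)\geq 1-e^{-\sum p_j}\geq (\sum p_j)/(1+\sum p_j)$. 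Since $\sum_{j\in B_i^S}p_j\leq \rho_i$, the denominator $1+\sum p_j$ costs only another factor of $2^{O(\rho_i)}$, and combining gives $p_i' \geq (\sum_{j\in B_i^S}p_j)/2^{c_4'\rho_i}$ for a suitable constant $c_4'$ depending on $c$. The main technical subtlety is arranging the ordered decomposition so that every piece $\mathcal{R}'_j$ has sum of rejecting probabilities bounded by $\rho_i$ itself (not, say, $2\rho_i$, which would force a hitting set with a different parameter than the one invoked in Step~4): the ``$+1$'' slack in the definition of $\rho_i$ is exactly what absorbs the extra $q_j\leq 1$ on the distinguished coordinate. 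Everything else is a routine chain of the standard estimates $1-x\geq e^{-2x}$ and $1-e^{-x}\geq x/(1+x)$.
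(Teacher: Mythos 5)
Your proof is correct and follows essentially the same route as the paper's: decompose the event $\{\overline{Z}_i(x)=0\wedge Y_i(x)\geq 1\}$ into pairwise disjoint combinatorial rectangles indexed by $j\in h^{-1}(i)\cap S$, check that each has rejecting-probability sum at most $\rho_i$ (the ``$+1$'' absorbing $q_j\le 1$, exactly as you note) and uniform acceptance probability at least $p_j/2^{O(\rho_i)}\geq 1/n^{c_4}$, apply Theorem~\ref{thm:comb-rect} to each, and sum. The only (immaterial) differences are that the paper's rectangles $R_j$ force \emph{all} other coordinates of $h^{-1}(i)\cap S$ to reject rather than only the ones preceding $j$, and that the paper sums the bounds $p_j/2^{c_4'\rho_i}$ directly instead of detouring through $\Pr_x[\overline{Z}_i=0\wedge Y_i\geq 1]$.
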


\begin{proof}[Proof of Claim \ref{claim_disj_rect}]
We assume that $p_j > 0$ for every $j\in h^{-1}(i)\cap S$ (the other $j$ do not
contribute anything to the right hand side of the inequality above).
  
The claim follows from the fact that the event $\overline{Z}_i(x)=0 \wedge
Y_i(x)\geq 1$ is implied by any of the \emph{pairwise disjoint} rectangles
$R_j(x) = 1_{A_j}(x_j)\wedge \bigwedge_{j\neq k\in h^{-1}(i)\cap
S}1_{\overline{A_k}}(x_k) \wedge \bigwedge_{\ell\in h^{-1}(i)\cap
L}1_{A_\ell}(x_\ell)$ for $j\in h^{-1}(i)\cap S$. Thus, we have
\begin{equation}
\label{eq_disj_ineq}
p_i' = \prob{x\in \mc{S}_{rect}^{n,c_4,\rho_i}}{\overline{Z}_i(x)=0 \wedge
Y_i(x)\geq 1}\geq \sum_{j\in h^{-1}(i)\cap S} \prob{x\in
\mc{S}_{rect}^{n,c_4,\rho_i}}{R_j(x)=1}
\end{equation}
However, by our choice of $\rho_i$, we know that $\rho_i \geq \sum_{j\in
h^{-1}(i)\cap S}p_j+ \sum_{k\in h^{-1}(i)\cap S}q_k + 1$, which is at least the
sum of the rejecting probabilities of each combinatorial rectangle $R_j$ above. Moreover, $\rho_i \leq \sum_{s\in [t]}\rho_t \leq c_2\log n$. Below, we choose $c_4\geq c_2$ and so we have $\rho_i \leq c_4\log n$.

Note also that for each $j\in h^{-1}(i)\cap S$, we have 
\begin{align*}
P_j&:=\prob{x\in [m]^n}{R_j(x)=1}\\
&\geq
p_j\prod_{k\in h^{-1}(i)\cap S}(1-p_k)\prod_{\ell\in h^{-1}(i)\cap
L}(1-q_\ell)\\
&\geq p_j \exp\{-2(\sum_{k}p_k + \sum_{\ell}q_\ell)\}
\geq p_j \exp\{-2\rho_i\}
\end{align*}

where the second inequality
follows from the fact that $(1-x)\geq e^{-2x}$ for any $x\in [0,1/2]$. In particular, for large enough constant $c_4 > c_2$, we see that $P_j\geq 1/m\cdot 1/n^{O(c)}\geq 1/n^{c_4}$.

Thus, by Theorem \ref{thm:comb-rect}, we have for each
$j$, $\prob{x\in \mc{S}_{rect}^{n,c_4,\rho_i}}{R_j(x)=1} \geq
P_j/2^{O_c(\rho_i)}$; since $P_j\geq p_j/2^{O(\rho_i)}$, we have $\prob{x\in \mc{S}_{rect}^{n,c_4,\rho_i}}{R_j(x)=1} \geq
p_j/2^{(O_c(1)+O(1))\rho_i} \geq p_j/2^{c_4'\rho_i}$ for a large enough constant $c_4'$ depending on $c$. This bound, together with (\ref{eq_disj_ineq}), proves the claim.
\end{proof}
 
The above claim immediately shows that if we plug in \emph{independent} $x^{(i)}$ chosen at random from
$\mc{S}_{rect}^{n,c_4,\rho_i}$ in the indices in $h^{-1}(i)$, then the
probability that we pick an $x$ such that $\overline{Z}(x) = 0$ and $Y(x) \geq
t$ is at least 
\begin{align}
\prod_i p_i' &\geq 1/2^{O_c(\sum_{i\in [t]}\rho_i)}\prod_{i\in [t]} (\sum_{j\in
h^{-1}(i)\cap S}p_j)\notag\\
&= 1/2^{O_c(\log n)}\cdot \alpha_h\geq 1/n^{O_c(1)}\label{eq_prod_bound}
\end{align} 

However, the $x^{(i)}$ we actually choose
are not independent but picked according to a random walk $w\in
\mc{W}(G_1,\ldots, G_t)$. But by Lemma \ref{lemma_afwz}, we see that for this
event to occur with positive probability, it suffices to have $\lambda_i \leq
p_{i-1}'p_i'/10$ for each $i\in [t]$. To satisfy this, it suffices to have
$1/2^{a_i} \leq p_i' \leq 1/2^{a_i-1}$ for each $i$. This is exactly the
definition of the event $\mc{E}_4$.

Event $\mc{E}_4$: For each $i\in [t]$, we have $1/2^{a_i}\leq p_i' \leq
1/2^{a_i-1}$. For this to occur with positive probability, we only need to check
that $\sum_{i\in [t]} \lceil\log(1/p_i')\rceil \leq c_3\log n$ for large enough constant
$c_3$. But from (\ref{eq_prod_bound}), we have
\begin{align*}
\sum_i \lceil\log(1/p_i')\rceil &\leq \left(\sum_i \log(1/p_i')\right) + t\\
&\leq O_c(\log n) + O(c\log n) \leq c_3\log n
\end{align*}
 for large enough constant $c_3$ depending on $c$. This shows that $\mc{E}_4$ occurs with positive probability and concludes the analysis.

\paragraph{Proof of Theorem~\ref{thm:threshold-fool}.} 
The theorem follows easily from Theorems \ref{thm_high_wt} and \ref{thm_low_wt}.
Fix constant $c\geq 1$ s.t. $m,1/\varepsilon\leq n^c$. For $C>0$ a constant
depending on $c$, we obtain hitting sets for thresholds of weight at least
$C\log n$ from Theorem \ref{thm_high_wt} and for thresholds of weight at most
$C\log n$ from Theorem \ref{thm_low_wt}. Their union is an $\varepsilon$-HS for
all of $\cthr(m,n)$.

\section{Stronger Hitting sets for Combinatorial
Rectangles}\label{sec:comb-rectangles} As mentioned in the introduction, \cite{LLSZ} give $\eps$-hitting set
constructions for combinatorial rectangles, even for $\eps =
1/\poly(n)$. However in our applications, we require something slightly
stronger -- in particular, we need a set $\calS$ s.t. $\Pr_{x \sim
  \calS} (x$ in the rectangle$) \ge \eps$ (roughly speaking). We
however need to fool only special kinds of rectangles, given by the
two conditions in the following theorem.

\begin{theorem} [Theorem~\ref{thm:comb-rect} restated]
        For all constants $c>0$, $m=n^c$, and $\rho \le c\log n$, 
        for any $\calR \in \crect(m, n)$ which satisfies the properties:
        \begin{enumerate}
            \item $\calR$ is defined by $A_i$, and the rejecting probabilities
            $q_i$ satisfy $\sum_i q_i \le \rho$ and
            \item $p:=\Pr_{x\sim [m]^n} [ \calR(x) = 1] \ge 1/n^c$,
        \end{enumerate}
       there is an
       explicit set $\calS_{\text{rect}}^{n,c,\rho}$ of size $n^{O_c(1)}$ that satisfies $ \Pr_{x\sim \calS_{\text{rect}}^{n,c,\rho}} [ \calR(x) = 1] \ge p/2^{O_c(\rho)}$.
    \end{theorem}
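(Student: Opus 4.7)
I would construct $\calS_{\text{rect}}^{n,c,\rho}$ as the Cartesian product of two independent sub-samples, one targeting the ``heavy'' indices $L=\{i:q_i>1/2\}$ (with $|L|\le 2\rho$) and one targeting the ``light'' indices $I_s=[n]\setminus L$ (with $\sum_{i\in I_s}q_i\le \rho$). Since $p=\prod_{i\in L}p_i\cdot\prod_{i\in I_s}(1-q_i)\le\prod_{i\in L}p_i$, it suffices to achieve heavy-part acceptance $\ge\prod_{i\in L}p_i/2^{O_c(\rho)}$ and light-part acceptance $\ge 1/2^{O(\rho)}$; the product is then at least $p/2^{O_c(\rho)}$.

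For the light part, I would apply the fractional perfect hash family $\mc{H}_{frac}^{n,t}$ of Lemma~\ref{lem:fract-hash} with $t=\Theta(1+\sum_{i\in I_s}q_i)=O(\rho)$ to the vector $(q_i)_{i\in I_s}$, partitioning $I_s$ into buckets each carrying $q$-mass $O(1)$ and hence having uniform acceptance $\Omega(1)$. Within each bucket, use an $O(1)$-wise independent distribution over $[m]^{|B_i|}$ of size $m^{O(1)}=n^{O_c(1)}$, which by Theorem~\ref{thm:EGLNV} matches this acceptance up to an additive small constant. Combining the $t$ bucket samples via a constant-degree expander walk (the Corollary after Lemma~\ref{lemma_afwz}) yields a total sample of size $n^{O_c(1)}\cdot 2^{O(\rho)}=n^{O_c(1)}$ and all-bucket success probability $\ge (3/4)^t\cdot\Omega(1)^t=1/2^{O(\rho)}$.

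For the heavy part, I would sample $(x_i)_{i\in L}\in[m]^L$ by an expander walk on the base vertex set $[m]$ of length $|L|$ with \emph{adaptive} degrees $d_i=2^{O(a_i+a_{i+1})}$, tuned so that $\lambda_i\le p_{i-1}p_i/8$ holds when $a_i=\lfloor\log(1/p_i)\rfloor$, following the strategy of Section~\ref{sec:lowwt_general}. Take a union over all integer tuples $(a_i)_{i\in L}$ with $a_i\ge 0$ and $\sum_i a_i\le c\log n$; stars-and-bars bounds the number of such guesses by $n^{O(1)}$, and per guess the walk has size $m\cdot\prod_i d_i\le n^c\cdot(1/p)^{O(1)}\le n^{O_c(1)}$, crucially using $\sum_{i\in L}\log(1/p_i)\le -\log p\le c\log n$ (which follows from the hypothesis $p\ge 1/n^c$). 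For the correct guess, Lemma~\ref{lemma_afwz} gives success probability $\ge (3/4)^{|L|}\prod_{i\in L}p_i=\prod_{i\in L}p_i/2^{O(\rho)}$. In the small-$\rho$ regime where $|L|=O_c(1)$ and the $n^{O(1)}$ union overhead would dominate the target $2^{O_c(\rho)}$, I would instead directly enumerate $[m]^L$ (size $n^{O_c(1)}$), achieving acceptance exactly $\prod_{i\in L}p_i$. The main obstacle is precisely this heavy part: one must keep both the per-guess walk size and the guess count polynomial in $n$ across all regimes of $\rho$, which depends critically on the bound $\sum_{i\in L}\log(1/p_i)\le c\log n$ and on choosing the $O_c$ constant in the target large enough to absorb the union overhead.
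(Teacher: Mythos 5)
Your light-part machinery (fractional hashing on the $q_i$, $O(1)$-wise independence analysed via Theorem~\ref{thm:EGLNV}, expander walks across buckets) matches the paper's, but there are two genuine gaps. The first and most fundamental is that your construction is not oblivious to $\calR$: the set $\calS_{\text{rect}}^{n,c,\rho}$ must be built without knowing the $A_i$, yet your Cartesian product routes coordinate $j$ to the ``heavy'' or the ``light'' sub-sample according to whether $q_j>1/2$, and your heavy walk outputs $|L|$ symbols of $[m]$ that must be placed on the (unknown) coordinates of $L$ in a specific order. Enumerating candidate sets $L$ costs $n^{\Theta(\rho)}$, which is superpolynomial once $\rho=\omega(1)$. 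The paper never separates heavy from light indices: after a single fractional hash of all of $[n]$, Claim~\ref{claim:rect1} shows that one $aC$-wise independent distribution per bucket handles the at most $2C$ heavy indices \emph{exactly} (exact $k$-wise independence on few coordinates) and the light indices approximately (via Theorem~\ref{thm:EGLNV}) simultaneously, with no need to know which is which. This is the idea your plan is missing.

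Second, even granting the split, your guess count for the adaptive expander walk is too large in an intermediate regime. Rounding $\log(1/p_i)$ to integers gives $\binom{c\log n+|L|}{|L|}$ tuples, roughly a $2^{\Theta(\rho\log((\log n)/\rho))}$ loss in acceptance probability, which exceeds the allowed $2^{O_c(\rho)}$ whenever $\omega_c(1)\le\rho\le o(\log n)$; and your fallback of enumerating $[m]^L$ is only polynomial when $|L|=O_c(1)$, so neither branch covers, say, $\rho=\sqrt{\log n}$ with $|L|=\Theta(\rho)$. The paper's fix is to round $\log(1/P_i)$ up to multiples of $\lfloor c\log n/r\rfloor$ rather than to integers, so that $\sum_i\alpha_i=O(r)$, the number of guesses is $2^{O(r)}=2^{O(\rho)}$, and each walk still has size $n^{O_c(1)}$. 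You would need both repairs for the proposal to go through.
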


To outline the construction, we keep in mind a rectangle $\calR$
(though we will not use it, of course) defined by sets $A_i$, and
write $p_i = |A_i|/m$, $q_i = 1-p_i$. W.l.o.g., we assume that $\rho\geq 10$. The outline of the construction is as follows:
    \begin{enumerate}
        \item We guess an integer $r \le \rho/10$ (supposed to be an estimate for
        $\sum_i q_i/10$).
        \item Then we use a fractional hash family $\calH^{n, r}_{frac}$ to map the
        indices into $r$ buckets. This ensures that each bucket has roughly
        a constant weight.
        \item In each bucket, we show that taking $O(1)$-wise independent
        spaces (Fact~\ref{fact_k_wise_existence}) ensures a success
        probability (i.e. the probability of being inside $\mc{R}$) depending on the weight of the bucket.
        \item We then combine the distributions for different buckets using
        expander walks (this step has to be done with more care now, since
        the probabilities are different across buckets).
\end{enumerate}
%
%
%
%

Steps (1) and (2) are simple: we try all choices of $r$,
and the `right' one for the hashing in step (2) to work is $r = \sum_i q_i/10$; the probability that we make this correct guess is at least $1/\rho \gg 1/2^{\rho}$.
In this case, by the fractional hashing lemma, we obtain a hash family
$\calH_{frac}^{n,r}$, which has the property that for an $h$ drawn from it,
we have 
\[ \Pr \left[\sum_{j \in h^{-1}(i)} q_j \in [1/100, 100] \text{ for all }i \right] \ge \frac{1}{2^{O_c(r)}}\ge \frac{1}{2^{O_c(\rho)}}. \]

Step (3) is crucial, and we prove the following:
\begin{claim}\label{claim:rect1}
There is an absolute constant $a\in\naturals$ s.t. the following holds.
Let $A_1, \dots, A_k$ be the accepting sets of a combinatorial
rectangle $\calR$ in $\crect(m,k)$, and let $q_1, \dots, q_k$ be {\em
  rejecting} probabilities as defined earlier, with $\sum_i q_i \leq C$,
for some constant $C\geq 1$. Suppose $\prod_i (1-q_i) = \pi$, for some
$\pi>0$. Let $\calS$ be the support of an $aC$-wise independent
distribution on $[m]^n$ (in the sense of Fact~\ref{fact_k_wise_existence}). Then
\[ \Pr_{x \in \calS} [\calR(x)=1] \ge \frac{\pi}{2}. \] 
\end{claim}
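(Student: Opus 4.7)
The plan is to separate the indices into those with small rejecting probabilities and those with large rejecting probabilities, and then apply a truncated Bonferroni inclusion–exclusion only to the small part. Concretely, let $L = \{i : q_i > 1/2\}$ and $S = [k]\setminus L$. Since $\sum_i q_i \le C$, we have $|L| \le 2C$, and $\sum_{i\in S} q_i \le C$ with each $q_i \le 1/2$ on $S$. Write $1_{\calR}(x) = \phi(x)\psi(x)$ where $\phi(x) = \prod_{i\in L} 1_{A_i}(x_i)$ and $\psi(x) = \prod_{i \in S} (1 - 1_{\overline{A_i}}(x_i))$. The idea is to leave $\phi$ intact (we can afford it, as $|L|$ is small) and expand $\psi$ into $\sum_{T\sse S} (-1)^{|T|} \prod_{i\in T} 1_{\overline{A_i}}(x_i)$, then truncate at a carefully chosen odd depth $d$ to get $\psi_d \le \psi$ pointwise.

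The central computation is then: since $\phi\psi_d \le \phi\psi = 1_{\calR}$ pointwise, we have $\Pr_\mu[\calR(x)=1] \ge \E_\mu[\phi \psi_d]$. The function $\phi \psi_d$ depends on at most $|L| + d$ coordinates, so if $|L|+d \le aC$, then $aC$-wise independence gives $\E_\mu[\phi\psi_d] = \E_{\text{indep}}[\phi\psi_d] = \prod_{i\in L}(1-q_i) \cdot \tilde{\psi}_d$, where $\tilde{\psi}_d = \sum_{T\sse S,|T|\le d}(-1)^{|T|}\prod_{i\in T}q_i$ is the Bonferroni truncation of $\prod_{i\in S}(1-q_i)$. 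By the standard Bonferroni inequalities applied to the (now fully independent) Bernoulli indicators $1_{\overline{A_i}}(x_i)$ for $i \in S$, the truncation error is bounded by $e_{d+1}(q_S) \le (\sum_{i\in S}q_i)^{d+1}/(d+1)! \le C^{d+1}/(d+1)!$.

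Putting the pieces together: $\Pr_\mu[\calR=1] \ge \pi - \prod_{i\in L}(1-q_i)\cdot e_{d+1}(q_S)$, so it suffices to show $e_{d+1}(q_S) \le \tfrac{1}{2}\prod_{i\in S}(1-q_i)$. Since $q_i \le 1/2$ on $S$, we get $\prod_{i\in S}(1-q_i) \ge e^{-2\sum_{i\in S} q_i} \ge e^{-2C}$, so we need $C^{d+1}/(d+1)! \le e^{-2C}/2$. By Stirling, taking $d+1 \approx 10 C$ (with parity chosen odd) makes $((d+1)/(eC))^{d+1} \gg e^{2C}$, which suffices. With this choice, $|L| + d \le 2C + 10 C = 12 C$, so $a = 12$ (an absolute constant) works.

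The main obstacle to keep honest is the interaction between the two sides: we cannot expand the $L$-side via inclusion–exclusion, because the $q_i$ there are close to $1$ and the Bonferroni tail $e_{d+1}(q_L)$ need not be small relative to $\prod_{i\in L}(1-q_i)$ (which itself can be tiny). This is exactly why we only expand on $S$ and spend roughly $|L|$ degrees of independence to handle $L$ exactly via the $aC$-wise guarantee; the budget $aC$ must simultaneously cover $|L| \le 2C$ and the Bonferroni depth $d = \Theta(C)$ needed to overcome the $e^{-2C}$ lower bound on $\prod_{i\in S}(1-q_i)$, and the verification that both fit within a single absolute constant $a$ is the delicate accounting step.
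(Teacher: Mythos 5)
Your proof is correct, and it takes a genuinely more self-contained route than the paper's, even though the underlying decomposition is the same. Both arguments split the coordinates into $L=\{i: q_i>1/2\}$ (of size at most $2C$) and its complement $S$; the paper then conditions on the $L$-coordinates landing in the rectangle (using $|L|\le aC$ to argue the conditional distribution is still $(a-2)C$-wise independent) and invokes Theorem~\ref{thm:EGLNV} as a black box to get a $2^{-\Omega((a-2)C)}$ additive approximation on $S$, which it compares against the lower bound $\prod_{i\in S}(1-q_i)\ge e^{-2C}$. You instead handle both parts in a single expectation computation by multiplying the exact indicator on $L$ with an odd-depth Bonferroni truncation on $S$, which in effect re-derives the relevant instance of Theorem~\ref{thm:EGLNV} with explicit constants; the payoff is a concrete value $a=12$ and no appeal to conditioning or to the external theorem, at the cost of carrying out the Stirling estimate $C^{d+1}/(d+1)!\le e^{-2C}/2$ by hand. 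One small wording issue: $\phi\psi_d$ does not literally "depend on at most $|L|+d$ coordinates" (the truncated sum $\psi_d$ involves all coordinates of $S$); the correct statement is that each monomial $\phi\cdot\prod_{i\in T}1_{\overline{A_i}}(x_i)$ with $|T|\le d$ depends on at most $|L|+d\le aC$ coordinates, so its expectation under the $aC$-wise independent distribution matches the fully independent one, and the claimed identity $\E_\mu[\phi\psi_d]=\prod_{i\in L}(1-q_i)\cdot\tilde\psi_d$ then follows by linearity. With that rephrasing the argument is complete.
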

\begin{proof}
We observe that if $\sum_i q_i \le C$, then at most $2C$ of the $q_i$
are $\ge 1/2$. Let $B$ (for `big') denote the set of such indices. Now
consider $\calS$, an $aC$-wise independent distribution
over $[m]^n$. Let us restrict to the vectors in the distribution for
which the coordinates corresponding to $B$ are in the rectangle
$R$. Because the family is $aC$-wise independent, the number of such
vectors is precisely a factor $\prod_{i \in B} (1-q_i)$ of the
support of $\calS$.

Now, even after fixing the values at the locations indexed by $B$, the chosen
vectors still form a $(a-2)C$-wise independent
distribution. Thus by Theorem \ref{thm:EGLNV}, we have
that the distribution $\del$-approximates, i.e., maintains the
probability of any event (in particular the event that we are in the
rectangle $\calR$) to an additive error of $\del = 2^{-\Omega((a-2)C)} <
(1/2) e^{-2\sum_i q_i} < (1/2) \prod_{i \not\in B} (1-q_i)$ for large enough $a$ (In the
last step, we used the fact that if $x < 1/2$, then $(1-x) > e^{-2x}$).
Thus if we restrict to coordinates outside $B$, we have that
the probability that these
indices are `accepting' for $\calR$ is at least $(1/2) \prod_{i \not \in B}
p_i$ (because we have a very good additive approximation).

Combining the two, we get that the overall accepting probability is
$\frac{\pi}{2}$, finishing the proof of the claim.
\end{proof}

Let us now see how the claim fits into the argument. Let $B_1, \dots, B_r$
be the sets of indices of the buckets obtained in Step (2). Claim~\ref{claim:rect1}
now implies that if we pick an $aC$-wise independent family on all the
$n$ positions (call this $\calS$), the probability that we obtain a rectangle on $B_i$ is at
least $(1/2) \prod_{j \in B_i} (1-q_j)$. For convenience, let us write
$P_i = (1/2)\prod_{j \in B_i} (1-q_j)$.  We wish to use an expander walk 
argument as before -- however this time the probabilities $P_i$ of success are
different across the buckets.

The idea is to estimate $P_i$ for each $i$, up to a {\em sufficiently small} error.
Let us define $L = \lceil c\log n\rceil$ (where $p$ is as in the statement of Theorem~\ref{thm:comb-rect}).
Note that $L \ge \log(1/p)$, since $p \ge 1/n^c$.  Now, we estimate $\log(1/P_i)$ by the
smallest integer multiple of $L':=\lfloor L/r\rfloor\geq 10$ which is larger than it: call it $\alpha_i\cdot L'$. Since
$\sum_i \log(1/P_i)$ is at most $L$, we have $\sum_i \alpha_i L' \le 2L$,
or $\sum_i \alpha_i \le 3r$. Since the sum is over $r$ indices, there are at most $2^{O(r)}$
choices for the $\alpha_i$ we need to consider.  Each choice of the $\alpha_i$'s gives an
estimate for $P_i$ (which is also a {\em lower bound} on $P_i$). More formally, set
$\rho_i = e^{-\alpha_i L'}$, so we have $P_i \ge \rho_i$ for all $i$.

Finally, let us construct graphs $G_i$ (for $1\le i\le r$) with the vertex set being
$\calS$ (the $aC$-wise independent family), and $G_i$ having a degree depending on
$\rho_i$ (we do this for each choice of the $\rho_i$'s). By the expander walk
lemma~\ref{lemma_afwz}, we obtain an overall probability
of success of at least $\prod_i P_i / 2^{O(r)}$ for the ``right'' choice of the $\rho_i$'s.
Since our choice is right with probability at least $2^{-O(r)}$, we obtain a
success probability in Steps (3) and (4) of at least $\prod_i P_i/2^{O(r)} \ge p/2^{O(r)}\ge p/2^{O(\rho)}$. In combination with the success probability of $1/2^{O_c(\rho)}$ above for Steps (1) and (2), this gives us the claimed overall success probability.

Finally, we note that the total seed length we have used in the process is $O_c(\log n + \sum_i \log (1/\rho_i))$,
which can be upper bounded by $O_c(\log n+ L) = O_c(\log n)$.

\section{Constructing a Fractional Perfect Hash family}\label{sec:fract-hash} The first step in all of our constructions has been hashing into a smaller number of buckets. To this
effect, we need an explicit construction of hash families which have several ``good'' properties.  In particular, we will prove the following lemma in this section.
\begin{lemma}[Fractional Perfect Hash Lemma: Lemma~\ref{lem:fract-hash} restated]
For any $n,t\in\naturals$ such that $t\leq n$, there is an explicit family of hash functions $\mc{H}_{frac}^{n,t}\subseteq [t]^{[n]}$ of size $2^{O(t)}n^{O(1)}$ such that for any $z\in [0,1]^n$ such that $\sum_{j\in [n]} z_j \geq 10t$, we have
\[
\prob{h\in \mc{H}_{frac}^{n,t}}{\forall i\in [t], 0.01\frac{\sum_{j\in [n]} z_j}{t}\leq \sum_{j\in 
h^{-1}(i)}z_j \leq 10\frac{\sum_{j\in [n]} z_j}{t}}\geq \frac{1}{2^{O(t)}}
\]
\end{lemma}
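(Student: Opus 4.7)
The plan is to mimic the proof of the perfect hashing lemma (Lemma~\ref{lemma_perfect_hash}) and adapt its analysis to the weighted setting. The key observation is that a \emph{truly random} hash $h\colon[n]\to[t]$ already achieves the balancing property with probability $2^{-O(t)}$. Indeed, for each bucket $i$ let $Y_i := \sum_{j\in h^{-1}(i)} z_j$; since each $z_j\leq 1$ and $\sum_j z_j \geq 10t$, we have $\mathbb{E}[Y_i] = M \geq 10$ and $\mathrm{Var}(Y_i) \leq M$. By a Bernstein/Chernoff bound for bounded sums, $\Pr[Y_i \in [0.01M,\,10M]]$ is at least some absolute constant $\gamma > 0$. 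A Poissonization/coupling argument---analogous to the $\Pr[h \text{ is 1-1 on }S]=t!/t^t$ calculation in the perfect hashing setting, where the $t$ bucket events are replaced by independent Poisson weights---then yields $\Pr_h[\forall i:\ Y_i \in [0.01M,10M]] \geq 2^{-O(t)}$.

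Next, I would derandomize using the same two-stage template as in Lemma~\ref{lemma_perfect_hash}: an outer pairwise-independent hash $h_1\colon[n]\to[T]$ with $T=\poly(t)$, drawn from an explicit family of size $n^{O(1)}$, followed by an inner hash $h_2\colon[T]\to[t]$ drawn from the explicit $2^{O(t)}$-sized ``perfect-hash'' family from Lemma~\ref{lemma_perfect_hash} (on a universe of size $T$). The composed family $\{h_2\circ h_1\}$ has size $2^{O(t)}\cdot n^{O(1)}$, as required. The outer hash acts as a weight-preserving preconditioner that packages the $n$ fractional weights into $T$ intermediate ``super-items''; the inner hash then redistributes these super-items across the $t$ final buckets.

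The main obstacle is analyzing the derandomized construction. The plan is to argue in two parts. First, using the second-moment bound afforded by pairwise independence of $h_1$ together with $z_j \leq 1$, I would show that with constant probability the intermediate weight vector $(W_k)_{k\in[T]}$, where $W_k := \sum_{j:\,h_1(j)=k} z_j$, is ``regular''---each $W_k = O(M/T \cdot t) = O(1)$ and the collection $(W_k)$ is itself reasonably balanced across $[T]$. Second, conditional on a regular outer hash, I would re-run the Step~1 analysis on the reduced weighted instance $(W_k)_{k\in[T]}$ (now with bounded entries rather than entries in $[0,1]$, but with the same total weight $tM$), invoking the combinatorial guarantee of the inner family $h_2$ in place of truly random hashing to get that the final balancing property holds with probability $2^{-O(t)}$. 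Multiplying the constant probability of regularity with the $2^{-O(t)}$ probability of the inner balancing then gives the claimed bound.
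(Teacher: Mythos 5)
Your high-level template (outer pairwise-independent hash, then an inner hash from a small explicit family, multiply the two success probabilities) is in the right spirit, but the inner stage as you describe it does not work, and this is where the real content of the lemma lies. The guarantee of the perfect hash family $\mc{H}_{\text{perf}}^{T,t}$ is only that a \emph{fixed $t$-subset} of the universe is mapped injectively with probability $2^{-O(t)}$. That guarantee can give you the \emph{lower} bound on bucket weights (pick $t$ super-items of weight $\geq \Omega(1)$ each and inject them into distinct buckets), but it says nothing about where the remaining $T-t$ super-items, which carry essentially all of the total weight $\approx 10t$, end up: they could all land in one bucket, destroying the upper bound $10M$. So ``invoking the combinatorial guarantee of the inner family in place of truly random hashing'' is circular --- the inner problem is exactly the fractional balancing problem again, on a universe of size $T=\poly(t)$, and no explicit $2^{O(t)}$-sized family with that property on $[T]$ is supplied (brute force over all maps $[T]\to[t]$ costs $t^{\poly(t)}$, and $\Theta(t)$-wise independence costs $2^{\Theta(t\log t)}$, both too large). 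Separately, your ``regularity'' claim for the outer hash --- that all $W_k=O(1)$ with constant probability --- does not follow from pairwise independence: a second-moment bound gives $\sum_k W_k^2=O(t)$ with constant probability, hence $\max_k W_k=O(\sqrt t)$ and at most $O(t)$ heavy super-items, but not a uniform $O(1)$ bound, and a Chebyshev-plus-union-bound attempt over $T$ buckets fails. A single super-item of weight $\omega(1)$ already threatens the upper bound $10M=O(1)$ in whatever final bucket it lands.

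The paper's construction is built precisely to get around these two obstacles, and it is genuinely different from your proposal. It hashes pairwise-independently into $10t$ (not $\poly(t)$) top-level buckets and shows, via Paley--Zygmund plus Markov, that at least $2t$ of them are ``medium'' (weight in $[0.1,10]$). It then \emph{guesses} ($2^{O(t)}$ choices) a set $I'$ of $t$ medium buckets, which are passed through untouched to serve as the $t$ final buckets and supply the lower bound. The weight outside $I'$ is controlled by a \emph{second level} of pairwise-independent hashes (one per remaining top-level bucket, with guessed ranges $y_i$ and seeds correlated through an expander walk to keep the family size at $2^{O(t)}n^{O(1)}$) that shatter each remaining bucket into pieces of weight $O(1)$; these pieces are then folded $30$-to-$1$ onto the $t$ final buckets, giving the upper bound. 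If you want to salvage your two-stage plan, you would need to replace the inner perfect hash family with a construction that actually certifies an upper bound on every bucket's weight --- which is essentially the guess-and-shatter-and-fold machinery above.
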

\begin{proof}
For $S\subseteq [n]$, we define $z(S)$ to be $\sum_{j\in S}z_j$. By assumption, we have 
$z([n])\geq 10t$. Without loss of generality, we assume that $z([n])=10t$ (otherwise, we work with 
$\tilde{z} = (10t/z([n]))z$ which satisfies this property; since we prove the lemma for 
$\tilde{z}$, it is true for $z$ as well). We thus need to construct $\mc{H}_{frac}^{n,t}$ such that
\[
\prob{h\in \mc{H}_{frac}^{n,t}}{\forall i\in [t], z(h^{-1}(i)) \in [0.1,100]}\geq \frac{1}{2^{O(t)}}
\]
for some constant $c_{frac}>0$.

We describe the formal construction by describing how to sample a random element $h$ of $\mc{H}^{n,t}_{frac}$. To sample a random $h\in \mc{H}^{n,t}_{frac}$, we do the following:

\textbf{Step 1} (Top-level hashing): We choose a pairwise independent hash function $h_1:[n]\rightarrow [10t]$ by choosing a random seed to generator $\mc{G}^{t,n}_{2-wise}$. By Fact \ref{fact_k_wise_existence}, this requires $O(\log n + \log t) = O(\log n)$ bits.

\textbf{Step 2} (Guessing bucket sizes): We choose at random a subset $I'\subseteq [10t]$ of size exactly $t$ and $y_1,\ldots,y_{10t}\in\naturals$ so that $\sum_i y_i \leq 30t$. It can be checked that the number of possibilities for $I'$ and $y_1,\ldots,y_{10t}$ is only $2^{O(t)}$.

\textbf{Step 3} (Second-level hashing): By Fact \ref{fact_k_wise_existence}, for each $i\in [10t]$, we have an explicit pairwise independent family of hash functions mapping $[n]$ to $[y_i]$ given by $\mc{G}^{y_i,n}_{2-wise}$. We assume w.l.o.g. that each such generator has some fixed seedlength $s = O(\log n)$ (if not, increase the seedlength of each to the maximum seedlength among them). Let $V = \{0,1\}^s$. Using Fact \ref{fact_expl_exp}, fix a sequence $(G_1,\ldots,G_{10t})$ of $10t$ many $(2^s, D, \lambda)$-expanders on set $V$ with $D=O(1)$ and $\lambda \leq 1/100$. Choosing $w\in \mc{W}(G_1,\ldots,G_{10t})$ uniformly at random, set $h_{2,i}:[n]\rightarrow [y_i]$ to be $\mc{G}^{y_i,n}_{2-wise}(v_i(w))$. Define, $h_2:[n]\rightarrow [30t]$ as follows: 
\[
h_2(j) = \left(\sum_{i < h_1(j), i\not\in I'} y_i\right) + h_{2,h_1(j)}(j)
\]
Given the random choices made in the previous steps, the function $h_2$ is completely determined by $|\mc{W}(G_1,\ldots,G_{10t})|$, which is $2^{O(t)}$.

\textbf{Step 4} (Folding): This step is completely deterministic given the random choices made in 
the previous steps. We fix an arbitrary map $f:(I'\times \{0\}) \cup 
([30t]\times\{1\})\rightarrow 
[t]$ with the following properties: (a) $f$ is $1$-$1$ on $I'\times \{0\}$, (b) $f$ is $30$-to-$1$ 
on $[30t]\times\{1\}$. We now define $h:[n]\rightarrow [t]$. Define $h(j)$ as
\[
h(j) = \left\{
\begin{array}{lr}
f(h_1(j),0) & \text{if $h_1(j)\in I'$,}\\
f(h_{2,h_1(j)}(j),1) & \text{otherwise.}
\end{array}
\right.
\] 
It is easy to check that $|\mc{H}_{frac}^{n,t}|$, which is the number of possibilities for the random choices made in the above steps, is bounded by $2^{O(t)}n^{O(1)}$, exactly as required. 

We now show that a random $h\in \mc{H}_{frac}^{n,t}$ has the properties stated in the lemma. Assume $h$ is sampled as above. We analyze the construction step-by-step. First, we
recall the following easy consequence of the Paley-Zygmund inequality:
\begin{fact}
\label{fact_PZ}
For any non-negative random variable $Z$ we have
\[ \prob{}{Z\geq 0.1~\E[Z]}\geq 0.9 ~\frac{(\E[Z])^2}{\E[Z^2]}.\]
\end{fact}
Consider $h_1$ sampled in the first step. Define, for each $i\in [30t]$, the random variables $X_i 
= z(h_1^{-1}(i))$ and $Y_i = \sum_{j_1\neq j_2: h_1(j_1)=h_1(j_2)=i}z_{j_1}z_{j_2}$, and let $X = 
\sum_{i\in [10t]} X_i^2$ and $Y = \sum_{i\in [10t]} Y_i$. An easy calculation shows that $X = z_i^2 + Y\leq 10t + Y$. Hence, 
$\avg{h_1}{X} \leq 10t + \avg{h_1}{Y}$ and moreover
\begin{align*}
\avg{h_1}{Y} = \sum_{j_1\neq j_2} z_{j_1}z_{j_2}\prob{h_1}{h_1(j_1) = h_1(j_2)} \leq z([n])^2/10t = 10t
\end{align*}
Let $\mc{E}_1$ denote the event that $Y\leq 20t$. By Markov's inequality, this happens with probability at least $1/2$. We condition on any choice of $h_1$ so that $\mc{E}_1$ occurs. Note that in this case, we have $X \leq 10t+Y\leq 30t$.

Let $Z = X_i$ for a randomly chosen $i\in [10t]$. Clearly, we have $\avg{i}{Z}= (1/10t)\sum_i X_i = 1$ and also $\avg{i}{Z^2} = (1/10t)\sum_iX_i^2 = (1/10t)X\leq 3$. Thus, Fact \ref{fact_PZ} implies that for random $i\in [n]$, we have $\prob{i}{Z \geq 0.1}\geq 0.3$. Markov's Inequality tells us that $\prob{i}{Z> 10}\leq 0.1$. Putting things together, we see that if we set $I = \setcond{i\in [n]}{X_i \in [0.1,10]}$, then $|I|\geq 0.2\times 10t = 2t$. We call the $i\in I$ the \emph{medium-sized buckets}. 

We now analyze the second step. We say that event $\mc{E}_2$ holds if (a) $I'$ contains \emph{only} medium-sized buckets, and (b) for each $i\in [t]$, $y_i = \lceil Y_i \rceil$. Since the number of random choices in Step $2$ is only $2^{O(t)}$ and there are more than $t$ many medium-sized buckets, it is clear that $\prob{}{\mc{E}_2}\geq 1/2^{O(t)}$. We now condition on random choices in Step $2$ so that both $\mc{E}_1$ and $\mc{E}_2$ occur.

For the third step, given $i\not\in I'$, we say that hash function $h_{2,i}$ is \emph{collision-free} if 
for each $k\in [y_i]$, we have $z(S_{i,k})\leq 2$ where $S_{i,k} = h_{2,i}^{-1}(k)\cap 
h_1^{-1}(i)$. The following simple claim shows that this condition is implied by the condition 
that for each $k$, $Y_{i,k} := \sum_{j_1\neq j_2\in S_{i,k}}z_{j_1}z_{j_2} \leq 2$.

\begin{claim}
\label{claim_union}
For any $\alpha_1,\ldots,\alpha_m\in [0,1]$, if $\sum_j \alpha_j > 2$, then $\sum_{j_1\neq j_2}\alpha_{j_1}\alpha_{j_2}> 2$.
\end{claim}

For the sake of analysis, assume first that the hash functions $h_{2,i}$ ($i\in [10t]$) are chosen to be pairwise independent and \emph{independent of each other}. Now fix any $i\in [10t]$ and $k\in [y_i]$. Then, since $h_{2,i}$ is chosen to be pairwise-independent, we have 
\begin{align*}
\avg{}{Y_{i,k}} = \sum_{j_1\neq j_2: h_{2,i}(j_1) = h_{2,i}(j_2) = i} z_{j_1}z_{j_2}\prob{h_{2,i}}{h_{2,}i(j_1) = h_{2,i}(j_2)=k} = Y_i/y_i^2 \leq 1/y_i
\end{align*}
In particular, by Markov's inequality, $\prob{}{Y_{i,k}\geq 2}\leq 1/2y_i$.
Thus, by a union bound over $k$, we see that the probability that a uniformly random pairwise independent hash function $h_{2,i}$ is collision-free is at least $1/2$. 

Now, let us consider the hash functions $h_{2,i}$ as defined in the above construction. Let $\mc{E}_3$ denote the event that for each $i\not\in I'$, $h_{2,i}$ is collision-free. Hence, by Lemma \ref{lemma_afwz}, we see that
\[
\prob{}{\mc{E}_3}=\prob{w\in \mc{W}(G_1,\ldots,G_{10t})}{\forall i\in [10t]\setminus I':\ \text{$h_{2,i}$ collision-free}}\geq 1/2^{O(t)} 
\]

Thus, we have established that $\prob{}{\mc{E}_1\wedge \mc{E}_2\wedge \mc{E}_3}\geq 1/2^{O(t)}$. We now see that when these events occur, then the sampled $h$ satisfies the properties we need. Fix such an $h$ and consider $i\in [t]$. 

Since $f$ is a bijection on $I'\times \{0\}$, we see that there must be an $i'\in I'$ s.t. $f(i') = i$. Since $i'\in I'$ and the event $\mc{E}_2$ occurs, it follows that $i'$ is a medium-sized bucket. Thus,  $z(h^{-1}(i))\geq z(h_1^{-1}(i'))\geq 0.1$. Secondly, since $\mc{E}_3$ occurs, we have
\[
z(h^{-1}(i))=z(h_1^{-1}(i'))+\sum_{\ell\in f^{-1}(i)}z(h_2^{-1}(\ell)\setminus h_1^{-1}(I'))\leq 
10 + 30 \max_{i\in [10t], k\in[y_i]} z(S_{i,k}) \leq 100
\]
where the final inequality follows because $\mc{E}_3$ holds. This shows that for each $i$, we have 
$z(h^{-1}(i))\in [0.1,100]$ and hence $h$ satisfies the required properties. This concludes the 
proof of the lemma.
\end{proof}

\section{\textbf{Open Problems.}}\label{sec:open-problems} 
We have used a two-level hashing procedure to construct hitting sets for
combinatorial thresholds of low weight. It would be nice to obtain a simpler
construction avoiding the use of an `inner' hitting set construction.

It would also be nice to extend our methods to weighted variants of
combinatorial shapes: functions which accept an input $x$ iff $\sum_i \alpha_i
\one_{A_{i}}(x_i) = S$ where $\alpha_i \in \real_{\ge 0}$. The difficulty here
is that having hitting sets for this sum being $\ge S$ and $\le S$ do not imply
a hitting set for `$=S$'. The simplest open case here is $m=2$ and all $A_i$
being $\{1\}$.\footnote{Note that by the pigeon hole principle, such $S$ exist
for every choice of the $\alpha_i$.} However, it would also be interesting to
prove formally that such weighted versions can capture much stronger
computational classes. 



\bibliographystyle{plain} 
\bibliography{symmetric}

\begin{thebibliography}{10}

\bibitem{AleliunasKaLiLoRa79}
Romas Aleliunas, Richard~M. Karp, Richard~J. Lipton, L{\'a}szl{\'o} Lov{\'a}sz,
  and Charles Rackoff.
\newblock Random walks, universal traversal sequences, and the complexity of
  maze problems.
\newblock In {\em 20th Annual Symposium on Foundations of Computer Science},
  pages 218--223, San Juan, Puerto Rico, 29--31 October 1979. IEEE.

\bibitem{AFWZ}
Noga Alon, Uriel Feige, Avi Wigderson, and David Zuckerman.
\newblock Derandomized graph products.
\newblock {\em Computational Complexity}, 5:60--75, 1995.
\newblock 10.1007/BF01277956.

\bibitem{AYZ}
Noga Alon, Raphael Yuster, and Uri Zwick.
\newblock Color-coding.
\newblock {\em J. ACM}, 42(4):844--856, 1995.

\bibitem{ArmoniSaWiZh96}
Roy Armoni, Michael Saks, Avi Wigderson, and Shiyu Zhou.
\newblock Discrepancy sets and pseudorandom generators for combinatorial
  rectangles.
\newblock In {\em 37th Annual Symposium on Foundations of Computer Science
  (Burlington, VT, 1996)}, pages 412--421. IEEE Comput. Soc. Press, Los
  Alamitos, CA, 1996.

\bibitem{BlumKalaiWasserman}
Avrim Blum, Adam Kalai, and Hal Wasserman.
\newblock Noise-tolerant learning, the parity problem, and the statistical
  query model.
\newblock {\em J. ACM}, 50(4):506--519, 2003.

\bibitem{EGLNV}
Guy Even, Oded Goldreich, Michael Luby, Noam Nisan, and Boban
  Veli{\u{c}}kovi{\'c}.
\newblock Efficient approximation of product distributions.
\newblock {\em Random Structures Algorithms}, 13(1):1--16, 1998.

\bibitem{probability-text}
William Feller.
\newblock {\em An Introduction to Probability Theory and its Applications, Vol
  2}.
\newblock Wiley, 1971.

\bibitem{FKS}
Michael~L. Fredman, J{\'a}nos Koml{\'o}s, and Endre Szemer{\'e}di.
\newblock Storing a sparse table with 0(1) worst case access time.
\newblock {\em J. ACM}, 31(3):538--544, 1984.

\bibitem{GMRZ}
Parikshit Gopalan, Raghu Meka, Omer Reingold, and David Zuckerman.
\newblock Pseudorandom generators for combinatorial shapes.
\newblock In {\em STOC}, pages 253--262, 2011.

\bibitem{HLW}
Shlomo Hoory, Nathan Linial, and Avi Wigderson.
\newblock Expander graphs and their applications.
\newblock {\em Bulletin of the AMS}, 43(4):439--561, 2006.

\bibitem{IW97}
Russell Impagliazzo and Avi Wigderson.
\newblock {$\mathit{P} = \mathit{BPP}$} if {$E$} requires exponential circuits:
  Derandomizing the {XOR} lemma.
\newblock In {\em Proceedings of the Twenty-Ninth Annual ACM Symposium on
  Theory of Computing}, pages 220--229, El Paso, Texas, 4--6 May 1997.

\bibitem{LLSZ}
Nathan Linial, Michael Luby, Michael Saks, and David Zuckerman.
\newblock Efficient construction of a small hitting set for combinatorial
  rectangles in high dimension.
\newblock {\em Combinatorica}, 17:215--234, 1997.
\newblock 10.1007/BF01200907.

\bibitem{LRTV}
Shachar Lovett, Omer Reingold, Luca Trevisan, and Salil Vadhan.
\newblock Pseudorandom bit generators fooling modular sums.
\newblock In {\em Proceedings of the 13th International Workshop on
  Randomization and Computation (RANDOM)}, Lecture Notes in Computer Science,
  pages 615--630. Springer-Verlag, 2009.

\bibitem{Lu02}
Chi-Jen Lu.
\newblock Improved pseudorandom generators for combinatorial rectangles.
\newblock {\em Combinatorica}, 22(3):417--434, 2002.

\bibitem{MekaZu09}
Raghu Meka and David Zuckerman.
\newblock Small-bias spaces for group products.
\newblock In {\em APPROX-RANDOM}, pages 658--672, 2009.

\bibitem{MoserTardos}
Robin~A. Moser and G{\'a}bor Tardos.
\newblock A constructive proof of the general lov{\'a}sz local lemma.
\newblock {\em J. ACM}, 57(2), 2010.

\bibitem{NN93}
Joseph Naor and Moni Naor.
\newblock Small-bias probability spaces: Efficient constructions and
  applications.
\newblock {\em SIAM Journal on Computing}, 22(4):838--856, August 1993.

\bibitem{Nisan92}
Noam Nisan.
\newblock Pseudorandom generators for space-bounded computation.
\newblock {\em Combinatorica}, 12(4):449--461, 1992.

\bibitem{NW}
Noam Nisan and Avi Wigderson.
\newblock Hardness vs. randomness.
\newblock {\em J. Comput. Syst. Sci.}, 49(2):149--167, 1994.

\bibitem{NZ}
Noam Nisan and David Zuckerman.
\newblock Randomness is linear in space.
\newblock {\em Journal of Computer and System Sciences}, 52(1):43--52, February
  1996.

\bibitem{RabaniShpilka}
Yuval Rabani and Amir Shpilka.
\newblock Explicit construction of a small epsilon-net for linear threshold
  functions.
\newblock {\em SIAM J. Comput.}, 39(8):3501--3520, 2010.

\bibitem{SchmidtSiegel}
Jeanette~P. Schmidt and Alan Siegel.
\newblock The analysis of closed hashing under limited randomness (extended
  abstract).
\newblock In {\em STOC}, pages 224--234, 1990.

\bibitem{SU06}
Ronen Shaltiel and Christopher Umans.
\newblock Pseudorandomness for approximate counting and sampling.
\newblock {\em Computational Complexity}, 15(4):298--341, 2006.

\end{thebibliography}

\appendix
\section{Proof of the Expander Walk Lemma}

In this section we prove Lemma~\ref{lemma_afwz}. For convenience we restate it below.

\begin{lemma}[Lemma~\ref{lemma_afwz} restated]
Let $G_1,\ldots,G_\ell$ be a sequence of graphs defined on the same vertex set $V$ of size $N$. Assume that $G_i$ is an $(N,D_i,\lambda_i)$-expander. Let $V_1,\ldots,V_\ell\subseteq V$ s.t. $|V_i|\geq p_i N > 0$ for each $i\in [\ell]$. Then, as long as for each $i \in [\ell]$, $\lambda_i \leq (p_{i} p_{i-1}) / 8$,
\begin{equation}\label{eqn:afwz}
\prob{w\in \mc{W}(G_1,\ldots,G_\ell)}{\forall i\in [\ell], v_i(w)\in V_i} \geq (0.75)^\ell \prod_{i\in [\ell]}{p_i}.
\end{equation}
\end{lemma}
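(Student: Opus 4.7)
The plan is to translate the walk probability in (\ref{eqn:afwz}) into an operator-theoretic expression on $\mathbb{R}^N$, and then close a two-parameter induction that simultaneously tracks the surviving $L_1$ mass of the walk distribution and its $L_2$ deviation from uniform.

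To set up the framework, let $P_i \in \mathbb{R}^{N \times N}$ be the normalized (symmetric, doubly-stochastic) adjacency matrix of $G_i$, and let $\Pi_i$ be the diagonal projection onto $V_i$. Setting $y_0 = \mathbf{1}/N$ (uniform distribution of the starting vertex $u$) and $y_i = \Pi_i P_i y_{i-1}$, a direct unfolding shows that the coordinate $(y_i)_v$ equals $\Pr[v_i(w) = v \text{ and } v_j(w) \in V_j \text{ for all } j \leq i]$, so the probability in (\ref{eqn:afwz}) equals $\alpha_\ell := \mathbf{1}^\top y_\ell$.

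Next, decompose each $y_i = \alpha_i \mathbf{1}/N + w_i$ with $\mathbf{1}^\top w_i = 0$, and write $\sigma_i := \|w_i\|_2$. Because $P_i$ fixes $\mathbf{1}/N$ and contracts any mean-zero vector by a factor of at most $\lambda_i$, applying $\Pi_i P_i$ to the decomposition of $y_{i-1}$ and invoking Cauchy--Schwarz on the cross-term $\mathbf{1}_{V_i}^\top P_i w_{i-1}$ yields the coupled recurrences
\begin{align*}
\alpha_i &\geq \alpha_{i-1} q_i - \sqrt{q_i N}\,\lambda_i\,\sigma_{i-1},\\
\sigma_i &\leq \alpha_{i-1}\sqrt{q_i/N} + \lambda_i\,\sigma_{i-1},
\end{align*}
where $q_i := |V_i|/N \geq p_i$ and the bound on $\sigma_i$ uses $\sigma_i \leq \|y_i\|_2$ together with the triangle inequality.

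The main work is then to close the induction with the joint hypothesis
\[
\alpha_i \geq (3/4)^i \prod_{j=1}^i p_j \qquad \text{and} \qquad \sigma_i \leq 2\alpha_i/\sqrt{q_i N},
\]
starting from $\alpha_0 = 1$, $\sigma_0 = 0$ (with the convention $p_0 := 1$). Substituting the step-$(i-1)$ hypothesis into the error term and using $\lambda_i \leq p_i p_{i-1}/8$, a short manipulation shows $\sqrt{q_i N}\,\lambda_i\, \sigma_{i-1} \leq \tfrac14 (p_i/\sqrt{q_i})(p_{i-1}/\sqrt{q_{i-1}})\alpha_{i-1}q_i \leq \tfrac14\alpha_{i-1}q_i$ (using $p_j \leq q_j \leq 1$), which gives $\alpha_i \geq (3/4)\alpha_{i-1}q_i \geq (3/4)\alpha_{i-1}p_i$; the same estimate pins $\sigma_i \leq (5/4)\alpha_{i-1}\sqrt{q_i/N} \leq 2\alpha_i/\sqrt{q_i N}$. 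The delicate point I expect to be the main obstacle is the choice of normalization in the $L_2$ potential: since $q_i$ can be much larger than $p_i$, one must normalize $\sigma_i$ by $\sqrt{q_i N}$ rather than by the more tempting $\sqrt{p_i N}$ in order for the induction to close. This asymmetry between $p_j$ (which appears in the spectral condition) and $q_j$ (the actual density) is what forces the constant $(3/4)^\ell$ in the final bound.
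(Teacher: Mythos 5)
Your proof is correct and is essentially the paper's argument: both express the walk probability as $\mathbf{1}^\top \Pi_\ell P_\ell \cdots \Pi_1 P_1 (\mathbf{1}/N)$, split each intermediate vector into its component along $\mathbf{1}$ and its mean-zero part, and close a two-part induction tracking the surviving $L_1$ mass and an $L_2$ bound of the form $2\alpha_i/\sqrt{|V_i|}$ (the paper simply assumes $|V_i|=p_iN$ without loss of generality, so its $q_i$ equals $p_i$ throughout, and it bounds $\|y_i\|_2$ rather than $\|w_i\|_2$, which are interchangeable here). Your recurrences and the verification that $\lambda_i \le p_ip_{i-1}/8$ closes the induction all check out; the only quibble is that your closing remark is overstated, since normalizing the $L_2$ potential by $\sqrt{p_iN}$ also closes the induction (one only needs $p_i \le \sqrt{q_i}$).
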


Without loss of generality, we can assume that each subset $V_i$ ($i \in [\ell]$) has size \emph{exactly} $p_i N$. 

Let us consider an $\ell$ step random walk starting at a uniformly random starting vertex in $[N]$, in which step $i$ is taken in the graph $G_i$. The probability distribution after $\ell$ steps is now given by $A_1 A_2 \dots A_\ell \ones_N$, where $\ones_N$ denotes the vector $(1/N, \dots, 1/N)$, and $A_i$ is the normalized adjacency matrix of the graph $G_i$.

Now, we are interested in the probability that a walk satisfies the property that its $i$th vertex is in set $V_i$ for each $i$. For $\ell=1$, for example, this is precisely the $L_1$ weight of the set $V_1$, in the vector $A_1 \ones_N$. More generally, suppose we define the operator $I_S$ to be one which takes a vector and returns the ``restriction'' to $S$ (and puts zero everywhere else), we can write the probability as $\norm{I_{V_1} A_1 \ones_N}_1$. In general, it is easy to see that we can write the probability that the $i$th vertex in the walk is in $V_i$ for all $1\le i\le t$ is precisely $\norm{I_{V_t} A_t I_{V_{t-1}} A_{t-1} \dots I_{V_1} A_1}_1$. We will call the vector of interest $u_{(t)}$, for convenience, and bound $\norm{u_{(t)}}_1$ inductively.

Intuitively, the idea will be to show that $u_{(t)}$ should be a vector with a `reasonable mass', and is distributed `roughly uniformly' on the set $V_t$. Formally, we will show the following inductive statement. Define $u_{(0)} = \ones_N$.

\begin{lemma}\label{lem:walk-induct}
For all $1 \le t \le \ell$, we have the following two conditions
\begin{align}
\norm{u_{(t)}}_1 \ge \frac{3p_t}{4} \norm{u_{(t-1)}}_1 \\
\norm{u_{(t)}}_2 \le \frac{2}{\sqrt{p_t N}} \norm{u_{(t)}}_1
\end{align}
\end{lemma}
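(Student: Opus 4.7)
The plan is to establish both inequalities simultaneously by induction on $t$, using the standard spectral decomposition along the all-ones direction. For the base case $t=1$, note that $A_1 \ones_N = \ones_N$, so $u_{(1)} = I_{V_1} \ones_N$ is just $1/N$ on $V_1$ and $0$ elsewhere; then $\norm{u_{(1)}}_1 = p_1$ and $\norm{u_{(1)}}_2 = \sqrt{p_1/N}$, which verifies both conditions directly (without even using the expansion of $G_1$). For the inductive step, I would decompose $u_{(t-1)} = u^{\parallel} + u^{\perp}$, where $u^{\parallel} = (\norm{u_{(t-1)}}_1/N)\ones$ (using that $u_{(t-1)}$ is nonnegative) and $u^{\perp}$ is orthogonal to $\ones$. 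Since $\ones$ is a left and right eigenvector of $A_t$ with eigenvalue $1$, we get $A_t u^{\parallel} = u^{\parallel}$ and $A_t u^{\perp}$ remains orthogonal to $\ones$ with $\norm{A_t u^{\perp}}_2 \le \lambda_t \norm{u_{(t-1)}}_2$, so $u_{(t)} = I_{V_t} u^{\parallel} + I_{V_t} A_t u^{\perp}$.

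For the first condition, the vector $I_{V_t} u^{\parallel}$ contributes exactly $p_t \norm{u_{(t-1)}}_1$ in $L_1$, and Cauchy--Schwarz bounds the error contribution:
\[
\norm{I_{V_t} A_t u^{\perp}}_1 \le \sqrt{|V_t|} \cdot \norm{A_t u^{\perp}}_2 \le \sqrt{p_t N} \cdot \lambda_t \cdot \frac{2}{\sqrt{p_{t-1} N}} \norm{u_{(t-1)}}_1,
\]
where I used the inductive hypothesis on $\norm{u_{(t-1)}}_2$. Plugging in $\lambda_t \le p_t p_{t-1}/8$, this simplifies to at most $(p_t/4)\norm{u_{(t-1)}}_1$, so by reverse triangle inequality $\norm{u_{(t)}}_1 \ge (3 p_t/4) \norm{u_{(t-1)}}_1$, as required.

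For the second condition, the key move is to apply the $L_2$ \emph{triangle inequality after restriction} rather than Pythagoras on the full vector $A_t u_{(t-1)}$: restriction gives $\norm{I_{V_t} u^{\parallel}}_2 = \sqrt{p_t/N}\,\norm{u_{(t-1)}}_1$, saving a crucial factor $\sqrt{p_t}$ that the Pythagorean bound would miss. Combined with $\norm{I_{V_t} A_t u^{\perp}}_2 \le \lambda_t \norm{u_{(t-1)}}_2 \le (p_t \sqrt{p_{t-1}}/4)\norm{u_{(t-1)}}_1/\sqrt{N}$, the triangle inequality yields
\[
\norm{u_{(t)}}_2 \le \frac{\sqrt{p_t}}{\sqrt{N}}\norm{u_{(t-1)}}_1\Big(1 + \tfrac{\sqrt{p_t}}{4}\Big) \le \tfrac{5}{4}\sqrt{p_t/N}\,\norm{u_{(t-1)}}_1.
\]
Finally, inverting the first condition (already established for step $t$) gives $\norm{u_{(t-1)}}_1 \le (4/(3 p_t))\norm{u_{(t)}}_1$, which yields $\norm{u_{(t)}}_2 \le (5/3)/\sqrt{p_t N}\,\norm{u_{(t)}}_1 \le (2/\sqrt{p_t N})\norm{u_{(t)}}_1$. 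The main obstacle is the tight bookkeeping on constants: the naive Pythagorean approach loses a $\sqrt{1/p_t}$ factor and just falls short of the stated bound, so one must apply the $L_2$ triangle inequality directly. The two conditions are inductively coupled---condition (1) at step $t$ uses condition (2) at step $t-1$, and condition (2) at step $t$ reuses condition (1) at step $t$ to convert $\norm{u_{(t-1)}}_1$ into $\norm{u_{(t)}}_1$---so they must be carried in tandem.
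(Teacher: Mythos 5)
Your proof is correct and follows essentially the same route as the paper's: the same decomposition $u_{(t-1)} = \parr{u} + \prp{u}$, the same coupled induction (condition (1) at step $t$ via condition (2) at step $t-1$, then condition (2) at step $t$ via the just-established condition (1)), and the same constants in the end. The only cosmetic difference is that you use $\sqrt{|V_t|}$ rather than $\sqrt{N}$ in the Cauchy--Schwarz step for the $L_1$ bound, which is a slightly tighter but immaterial refinement.
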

Note that the second equation informally says that the mass of $u_{(t)}$ is distributed roughly equally on a set of size $p_t N$. The proof is by induction, but we will need a bit of simple notation before we start. Let us define $\parr{u}$ and $\prp{u}$ to be the components of a vector $u$ which are parallel and perpendicular (respectively) to the vector $\ones_N$. Thus we have $u = \parr{u} + \prp{u}$ for all $u$. The following lemma is easy to see.

\begin{claim}\label{lem:walk-trivial}
For any $N$-dimensional vector $x$ with all positive entries, we have $\norm{\parr{x}}_1 = \norm{x}_1$. Furthermore, $\parr{x}$ is an $N$-dimensional vector with each entry $\norm{x}_1/N$.
\end{claim}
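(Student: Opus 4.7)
The plan is to compute the projection $\parr{x}$ directly from its definition as the orthogonal projection of $x$ onto the one-dimensional subspace spanned by $\ones_N$, and then to read off the two claimed properties.

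First, I would note that $\ones_N = (1/N, \ldots, 1/N)$ has squared Euclidean norm $\ip{\ones_N, \ones_N} = N \cdot (1/N)^2 = 1/N$. The orthogonal projection formula then gives
\[
\parr{x} \;=\; \frac{\ip{x, \ones_N}}{\ip{\ones_N, \ones_N}} \, \ones_N \;=\; \frac{\sum_{i=1}^N x_i/N}{1/N} \, \ones_N \;=\; \Bigl(\sum_{i=1}^N x_i\Bigr) \ones_N.
\]
Since every entry of $x$ is positive, $\sum_i x_i = \norm{x}_1$, so $\parr{x} = \norm{x}_1 \cdot \ones_N$. In particular each coordinate of $\parr{x}$ equals $\norm{x}_1 / N$, which is the second claim.

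For the first claim, each entry $\norm{x}_1/N$ is non-negative (in fact positive), so
\[
\norm{\parr{x}}_1 \;=\; \sum_{i=1}^N \frac{\norm{x}_1}{N} \;=\; \norm{x}_1.
\]
This gives both parts of the claim. There is no real obstacle here: the whole argument is just unpacking the definition of $\parr{x}$ together with the observation that positivity of the entries of $x$ identifies $\sum_i x_i$ with $\norm{x}_1$, which is exactly what lets the $\ell_1$ norm pass through the projection onto $\ones_N$.
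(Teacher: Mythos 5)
Your proof is correct and follows essentially the same route as the paper, which simply states that the "furthermore" part is immediate from the definition of $\parr{x}$; you have just written out explicitly the projection computation $\parr{x} = \frac{\ip{x,\ones_N}}{\ip{\ones_N,\ones_N}}\ones_N = \norm{x}_1 \ones_N$ (using positivity to identify $\sum_i x_i$ with $\norm{x}_1$) that the paper leaves implicit.
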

\begin{proof}
The ``furthermore'' part is by the definition of $\parr{x}$, and the first part follows directly from it.
\end{proof}

We can now prove Lemma~\ref{lem:walk-induct}. We will use the fact that $A_i \ones_N = \ones_N$ for each $i$, and that $\norm{A_i u}_2 \le \lambda \norm{u}_2$ for $u$ orthogonal to $\ones_N$.

\begin{proof}[Proof of Lemma~\ref{lem:walk-induct}.] For $t=1$, we have $u_{(1)} = I_{V_1} A_1 \ones_N = I_{V_1} \ones_N$, and thus we have $\norm{u_{(1)}}_1 = p_1$, and we have $\norm{u_{(1)}}_2 = \frac{p_1}{\sqrt{p_1 N}}$, and thus the claims are true for $t=1$. Now suppose $t \ge 2$, and that they are true for $t-1$.

For the first part, we observe that
\begin{equation}\label{eqn:walk-induct1}
\norm{u_{(t)}}_1 = \norm{I_{V_t} A_t u_{(t-1)}}_1 \ge \norm{I_{V_t} A_t \parr{u_{(t-1)}}}_1 - \norm{I_{V_t} A_t \prp{u_{(t-1)}}}_1
\end{equation}
The first term is equal to $\norm{I_{V_t} \parr{u_{(t-1)}}}_1 = p_t \norm{u_{(t-1)}}_1$, because $I_{V_t}$ preserves $p_t N$ indices, and each has a contribution of $\norm{u_{(t-1)}}_1/N$, by Claim~\ref{lem:walk-trivial}.

The second term can be upper bounded as
\[ \norm{I_{V_t} A_t \prp{u_{(t-1)}}}_1 \le \sqrt{N} \norm{I_{V_t} A_t \prp{u_{(t-1)}}}_2 \le \sqrt{N} \cdot \lambda_t \norm{u_{(t-1)}}_2 \le \frac{2\lambda_t \sqrt{N}}{\sqrt{p_{t-1} N}} \norm{u_{(t-1)}}_1, \]
where we used the inductive hypothesis in the last step. From the condition $\lambda_t \le p_t p_{t-1}/8$, we have that the term above is bounded above by $p_t \norm{u_{(t-1)}}_1/4$. Combining this with Eq.\eqref{eqn:walk-induct1}, the first inequality follows.

The second inequality is proved similarly. Note that for this part we can even assume the first inequality for $t$, i.e., $\norm{u_{(t)}}_1 \ge (3/4) p_t \norm{u_{(t-1)}}_1$. We will call this (*).
\begin{equation}\label{eqn:walk-induct2}
\norm{u_{(t)}}_2 \le \norm{I_{V_t} A_t \parr{u_{(t-1)}}}_2 + \norm{I_{V_t} A_t \prp{u_{(t-1)}}}_2
\end{equation}
The first term is the $\ell_2$ norm of a vector with support $V_t$, and each entry $\norm{u_{(t-1)}}_1/N$, from Claim~\ref{lem:walk-trivial} we have that the first term is equal to $\frac{\norm{u_{(t-1)}}_1}{N} \cdot \sqrt{p_t N} \le (4/3) \cdot \frac{\norm{u_{(t)}}_1}{\sqrt{p_t N}}$, with the inequality following from (*).

The second term can be bounded by
\[ \lambda_t \norm{u_{(t-1)}}_2 \le \frac{2\lambda_t}{\sqrt{p_{t-1} N}} \cdot \norm{u_{(t-1)}}_1 \le \frac{1}{4\sqrt{p_t N}} \norm{u_{(t)}}_1. \]
Here we first used the inductive hypothesis, and then used (*), along with our choice of $\lambda_t$. Plugging these into Eq.~\eqref{eqn:walk-induct2}, we obtain the second inequality.

This completes the inductive proof of the two inequalities.
\end{proof}

\end{document}